\newcommand{\T}{\mathcal{T}}
\newcommand{\Lseq}{\mathcal{\rm L}}
\newcommand{\Gaps}{\mathcal{G}}
\newcommand{\Err}{\mathrm{LogErr}}
\newcommand{\dt}{\mathrm{dt}}
\newcommand{\kyiv} {\texttt{ibm\_kyiv}}
\newcommand{\fez} {\texttt{ibm\_fez}}
\newcommand{\marrakesh} {\texttt{ibm\_marrakesh}}
\newcommand{\aachen} {\texttt{ibm\_aachen}}
\newcommand{\pittsburgh} {\texttt{ibm\_pittsburgh}}
\newcommand{\dxdz}[2]{$(#1,#2)$}
\newtheorem{mylemma}{Lemma}
\newcommand{\eps} {\varepsilon}
\newcommand{\epstot} {\eps_{\mathrm{tot}}}
\newcommand{\epserr} {\eps_{\mathrm{err}}}
\newcommand{\delerr} {\Delta_{\mathrm{err}}}
\newcommand{\dphi}{\delta\varphi}
\newcommand{\dth}{\delta\theta}
\newcommand{\Herr}{H_{\mathrm{err}}}
\newcommand{\bes} {\begin{subequations}}
\newcommand{\ees} {\end{subequations}}
\newcommand{\beq}{\begin{equation}}
\newcommand{\eeq}{\end{equation}}
\def\b{\beta}
\def\>{\rangle}
\def\<{\langle}
\def\Tr{\mathrm{Tr}}
\def\Pr{\mathrm{Pr}}
\newcommand{\ketbra}[1]{|{#1}\>\!\<#1|}
\newcommand{\ketb}[2]{|{#1}\>\!\<#2|}
\def\XY4{\text{XY}4}
\def\>{\rangle}
\def\<{\langle}
\def\Tr{\mathrm{Tr}}
\def\Pr{\mathrm{Pr}}
\def\bd{\boldsymbol{d}}
\def\Tr{\mathrm{Tr}}
\newcommand{\Fe}{F_{\mathrm{e}}}
\newcommand{\Felow}{F_{\mathrm{e},<}}
\newcommand{\Fehigh}{F_{\mathrm{e},>}}
\newcommand{\startappendixtoc}{%
 \let\revtex@orig@addcontentsline\addcontentsline
 \renewcommand{\addcontentsline}[3]{%
 \def\APX@toc{toc}%
 \def\APX@tmp{##1}%
 \ifx\APX@tmp\APX@toc
  \revtex@orig@addcontentsline{apx}{##2}{##3}%
 \else
  \revtex@orig@addcontentsline{##1}{##2}{##3}%
 \fi
 }%
}
\newcommand{\stopappendixtoc}{%
 \let\addcontentsline\revtex@orig@addcontentsline
}
\newcommand{\printappendixtoc}{%
 \section*{Appendix Contents}%
 \@starttoc{apx}%
}
\begin{document}
\begin{bibunit}[apsrev4-2]

\title{Surface code scaling on heavy‑hex superconducting quantum processors}

\author{Arian Vezvaee}
\thanks{These authors have contributed equally to this work.}
\affiliation{Department of Electrical \& Computer Engineering, University of Southern California, Los Angeles, California 90089, USA}
\affiliation{Center for Quantum Information Science \& Technology, University of
Southern California, Los Angeles, CA 90089, USA}
\affiliation{Quantum Elements, Inc., Thousand Oaks, California, 91361, USA}

\author{Cesar Benito}
\thanks{These authors have contributed equally to this work.}
\affiliation{Instituto de Fisica Teorica UAM-CSIC, Universidad Autonoma de Madrid, Cantoblanco, 28049, Madrid, Spain}

\author{Mario Morford-Oberst}
\affiliation{Department of Electrical \& Computer Engineering, University of Southern California, Los Angeles, California 90089, USA}
\affiliation{Center for Quantum Information Science \& Technology, University of
Southern California, Los Angeles, CA 90089, USA}

\author{Alejandro Bermudez}
\affiliation{Instituto de Fisica Teorica UAM-CSIC, Universidad Autonoma de Madrid, Cantoblanco, 28049, Madrid, Spain}

\author{Daniel A. Lidar}
\affiliation{Department of Electrical \& Computer Engineering, University of Southern California, Los Angeles, California 90089, USA}
\affiliation{Center for Quantum Information Science \& Technology, University of
Southern California, Los Angeles, CA 90089, USA}
\affiliation{Quantum Elements, Inc., Thousand Oaks, California, 91361, USA}
\affiliation{Department of Physics \& Astronomy, University of Southern California,
Los Angeles, California 90089, USA}
\affiliation{Department of Chemistry, University of Southern California,
Los Angeles, California 90089, USA}

\begin{abstract}
Demonstrating subthreshold scaling of a surface-code quantum memory on hardware whose native connectivity does not match the code remains a central challenge. We address this on IBM heavy-hex superconducting processors by co-designing the code embedding and control: a depth-minimizing SWAP-based ``fold-unfold'' embedding that uses bridge ancillas, together with robust, gap-aware dynamical decoupling (DD). On Heron-generation devices we perform anisotropic scaling from a uniform distance $3$ code to anisotropic distance $(d_x,d_z)=(3,5)$ and $(5,3)$ codes. We find that increasing $d_z$ ($d_x$) improves the protection of $Z$-basis ($X$-basis) logical states across multiple quantum error correction cycles. Even if global subthreshold code scaling for arbitrary logical initial states is not yet achieved, we argue that it is within reach with minor hardware improvements. We show that DD plays a major role: it suppresses coherent $ZZ$ crosstalk and non-Markovian dephasing that accumulate during idle gaps on heavy-hex layouts, and it eliminates spurious subthreshold claims that arise when scaled codes without DD are compared against smaller codes with DD. To quantify performance, we derive an entanglement fidelity metric that is computed directly from $X$- and $Z$-basis logical-error data and provides per-cycle, SPAM-aware bounds. The entanglement fidelity metric reveals that widely used single-parameter fits used to compute suppression factors can mischaracterize or obscure code performance when their assumptions are violated; we identify the strong assumptions of stationarity, unitality, and negligible logical SPAM required for those fits to be valid and show that they do not hold for our data. Our results establish a concrete path to robust tests of subthreshold surface-code scaling under biased, non-Markovian noise by integrating QEC with optimized DD on non-native architectures.
\end{abstract}

\maketitle

\section*{Introduction}

The discovery of quantum error correction (QEC)~\cite{Shor1995PRA,Steane:96a,PhysRevA.54.1098} enables fault-tolerant quantum computing~\cite{DiVincenzo:96,Aharonov:96,10.5555/1972505}, 
which ``fights entanglement
with entanglement''~\cite{Preskill:99} by distributing quantum information among multiple physical qubits, and can 
safeguard long computations against the accumulation of errors. Numerous experiments have implemented 
various QEC codes~\cite{Campbell:2017aa}. 
Such $[[n,k,d]]$ codes employ $n$ physical qubits to encode $k$ logical qubits and can correct any errors on up to $(d-1)/2$ of the physical qubits, where the code distance $d$~\cite{Knill:1997kx} is odd. Topological codes~\cite{Kitaev_2003}, which combine local error-syndrome checks with global information encoding, form a particularly promising class. 
Perhaps the best known example from this class is the surface code~\cite{bravyi1998quantumcodeslatticeboundary,10.1063/1.1499754,Fowler:2012ys}, 
which has recently been the subject of intense experimental attention~\cite{Krinner2022Nature,GoogleAI2023QEC,Bluvstein2023Nature,GoogleAI2024Nature,HetenyiPRXQ2024,Bluvstein2025arxiv}.
The surface code uses $n=d^2$ data qubits and $d^2$ ancilla qubits to encode $k=1$ logical qubits, and leads to a relatively high error threshold~\cite{PhysRevA.89.022321} 
and simple fault-tolerant implementation~\cite{PhysRevA.90.062320} in comparison to other codes. A more general version of the surface code we use in this work has two different distance parameters, $d_x$ and $d_z$ for bit and phase-flip errors, respectively, and $2d_xd_z$ physical qubits.  

When the error rate is below threshold, growing the code distance leads, in theory, to exponential error suppression~\cite{10.1063/1.1499754}.
The consequences of subthreshold distance-scaling were demonstrated only very recently by Google using a superconducting quantum processing unit (QPU): the error rate of a logical qubit was shown to break even as $d$ increased from $3$ to $5$~\cite{GoogleAI2023QEC} and, subsequently, to exhibit a persistent twofold improvement from $d=3$ to $5$ and then from $5$ to $7$~\cite{GoogleAI2024Nature}, in line with the expected exponential improvement. Other
subthreshold demonstrations of surface and color codes followed, using superconducting qubits~\cite{GoogleAI2024arxivDynamicSurfaceCode,GoogleAI2024arxivColorCode} or Rydberg atoms~\cite{Bluvstein2025arxiv}. 
A critical aspect of achieving 
these milestones with Google's QPUs is the chip's qubit connectivity graph: a 2D square lattice of transmon qubits designed to precisely mirror the connectivity of the surface code. This allowed for the parallelization of nearly all operations in the QEC cycles, thus minimizing idle gaps (i.e., periods during which a qubit undergoes no gates) and the associated accumulation of errors.

\begin{figure*}[ht]
{\includegraphics[width=.9\textwidth]{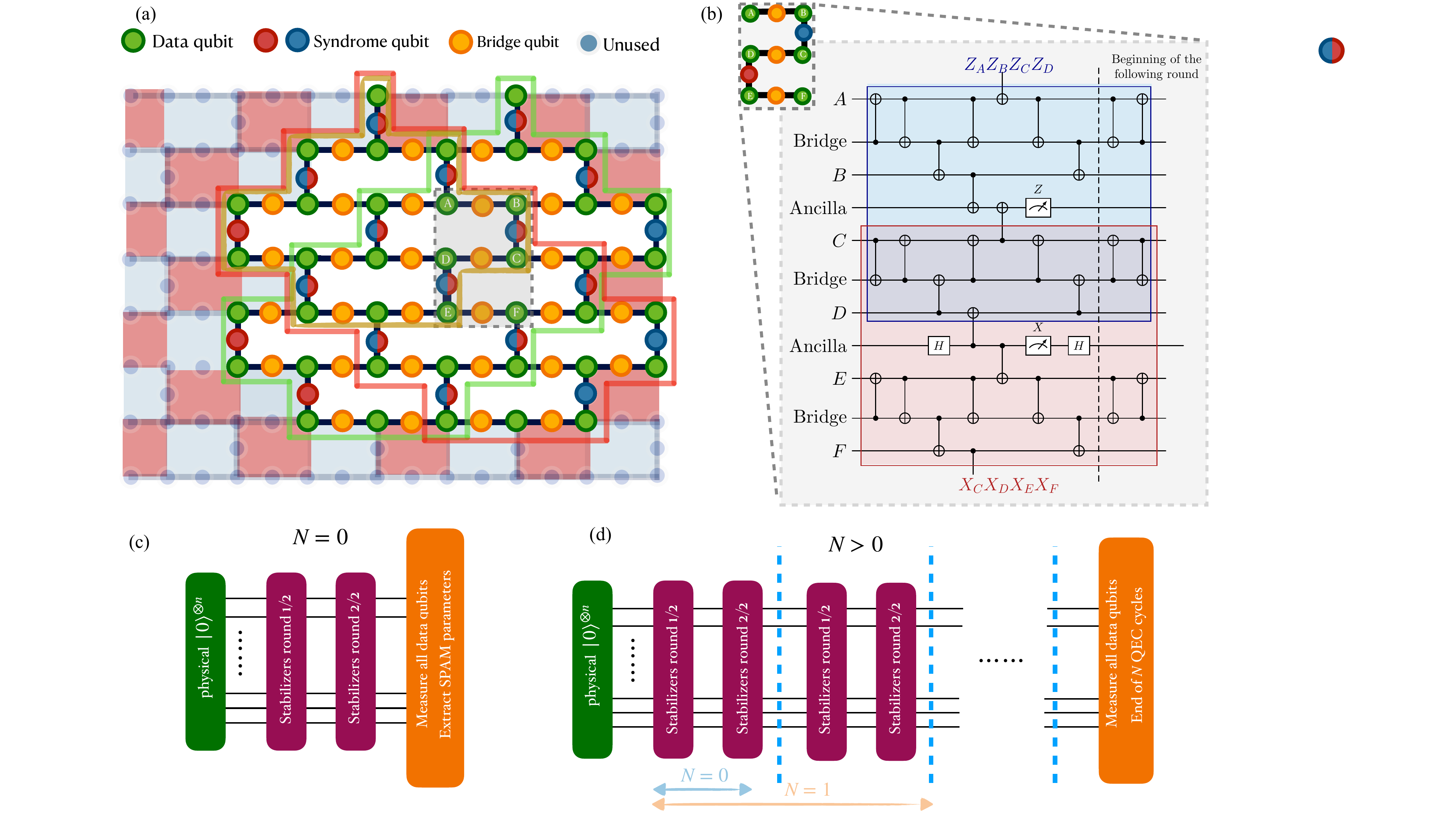}} 
\caption{(a) Schematic of the surface codes implemented in this work on IBM's Heron-generation QPUs. The red boundary defines the \dxdz{3}{5} surface code. The yellow boundary shows one of the possible sublattice \dxdz{3}{3} codes that fits within the larger code. The other two \dxdz{3}{3} sublattices are not shown. The green boundary defines the \dxdz{5}{3} surface code. Similarly, there are three \dxdz{3}{3} sublattices that fit within the \dxdz{5}{3} code (not shown). (b) Stabilizer measurement construction for a heavy-hex surface code. The circuit combines the middle-out and traditional ancilla-based syndrome extraction schemes~\cite{McEwen2023quantum}, adapted to the heavy-hex connectivity~\cite{Benito2025quantum}. Next-nearest-neighbor CNOTs are implemented via bridge qubits using SWAP gates, and the resultant circuit is then simplified. Bridge qubits are not measured. Ancilla qubits are never reset; instead, we apply software processing to measurement outcomes depending on previous measurements. The bridge ancilla is not measured; all measurements are on the check ancilla. (c) $N=0$: logical state preparation via two stabilizer rounds, starting from the physical $\ket{0}^n$ state, followed by a measurement of data qubits. The goal of this cycle is SPAM calibration. (d) $N\ge 1$: QEC cycles, each comprising two stabilizer rounds, followed by a measurement of all the data qubits at the end of the $N$'th cycle. Only half the stabilizers are measured in parallel; a full cycle is two rounds.} 
\label{fig-one}
\end{figure*}

Implementing the surface code on other QPUs with different fixed connectivities is a problem of both fundamental interest and practical importance. In light of 
applications, 
QPU design 
may be driven by considerations other than optimal surface code performance, e.g., to avoid frequency crowding affecting crosstalk and gate performance~\cite{PhysRevLett.107.080502} and to reduce the density of circuit elements, easing thermal management in lithographic fabrication. For example, IBM's superconducting QPUs are based on the heavy-hex lattice [illustrated in \cref{fig-one}(a)], in which transmon qubits are arranged on the sites and links of a honeycomb lattice.
An experimental demonstration of subthreshold scaling on heavy-hex 
devices presents significant challenges. 
In particular, the reduced connectivity can lead to substantial delays to accommodate state transfer between non-neighboring qubits.
The exposure to
noise during these additional idle gaps 
hinders demonstrations of subthreshold scaling~\cite{HetenyiPRXQ2024}, suggesting that it is crucial to 
optimize the embedding strategy to minimize the required circuit depth. The lightweight SWAP-based embedding devised in Ref.~\cite{Benito2025quantum} was shown theoretically, on the basis of a multi-parameter, Markovian (stochastic Pauli channels) circuit-level noise model, to outperform other QEC codes better adapted to the heavy-hex architecture, as well as alternative flag-qubit-based embedding protocols. Thus, we implement this embedding here.

\subsection*{Main results}

Here we demonstrate subthreshold scaling in the heavy-hex lattice as we independently increase the code's distance parameters $d_x$ or $d_z$ for bit- and phase-flip error types. We do not find compelling evidence of basis-independent subthreshold scaling, as the logical  states in a basis orthogonal to the direction of  growth  only suffer from the increased circuit complexity without benefiting from any increased protection. A crucial ingredient in our demonstration is dynamical decoupling (DD)~\cite{Viola:98,Viola1999PRL,Zanardi1999fk,Vitali:99}, a technique that
is particularly well suited for the suppression of coherent and non-Markovian noise, e.g., due to residual coherent coupling between qubits, or due to a temporally correlated environment with a structured spectral noise density~\cite{LidarBrun2013QEC}. 
We use robust DD variants~\cite{Quiroz2013PRA,Genov2017PRL} which we tailor 
for our surface code 
quantum memory, optimizing their integration with the SWAP-based embedding on IBM's $156$-qubit `Heron' QPUs. Although these devices are $6$ qubits short of allowing a fully isotropic scaling from $d=3$ to $d=5$, they do allow for scaling from a $37$-qubit embedding of the $d=3$ surface code to $65$-qubit embeddings of both $(d_x,d_z)=(3,5)$ and $(d_x,d_z)=(5,3)$ codes. Below threshold, these anisotropic code scalings should endow the encoded logical information with improved protection  to either bit- or phase-flip errors, a necessary condition for entering the subthreshold isotropic scaling regime. Our confirmation of this expectation constitutes our main result. 

We report our quantum memory results for up to $10$ full cycles of QEC on~\aachen, reaching depths in excess of $140$, and a total of $2200$ entangling gates. A discussion of similar experiments using other Heron QPUs is given in \cite{supp} (\cref{sec:other_qpus_supp}).
We extract the logical error probabilities 
after encoding logical states in the computational, i.e., $Z$-basis ($\overline{\ket{0}}$, $\overline{\ket{1}}$)
or Hadamard, i.e., $X$-basis ($\overline{\ket{+}}$, $\overline{\ket{-}}$), applying repetitive QEC cycles, and performing a final projective measurement of the data qubits. 
 
Note that, in contrast to the $X$- and $Z$-basis, 
the surface-code preparation of $Y$-basis logical states 
is significantly more complex, as it cannot be achieved directly within the code’s stabilizer structure and instead requires costly state injection~\cite{Fowler:2012ys} 
or  twist defect braiding~\cite{PhysRevX.7.021029,Gidney2024inplaceaccessto}.
The inaccessibility of all six cardinal logical states prevents a complete model-independent estimate of the average channel fidelity~\cite{Nielsen:2002aa}, and other more phenomenological metrics have been proposed. A widely used metric is the suppression factor $\Lambda$, calculated as the ratio of logical errors per QEC cycle of two codes of successive odd distances, each averaged over the available $X$ and $Z$ basis states, typically just $\overline{\ket{0}}$ and $\overline{\ket{+}}$~\cite{GoogleAI2023QEC,GoogleAI2024Nature,GoogleAI2024arxivColorCode,Bluvstein2025arxiv}. 

We critically assess whether this basis-averaged suppression factor, which is furthermore based on a phenomenological one-parameter fit that assumes stationary (cycle-independent) errors, functions as a reliable and robust performance metric. 
In the case of IBM QPUs, we show that this is not the case and provide a first-principles derivation of an alternative metric based on the entanglement fidelity (EF)~\cite{Nielsen:2002aa} that, under fairly general logical error model assumptions (clarified below), allows for a faithful and accurate characterization of our QEC experiments. We use this metric to identify a three-parameter fitting model that separates SPAM from the QEC per-cycle logical error, and use it to demonstrate a reduction in logical error probabilities as we  integrate our SWAP-based embedding with  robust DD while scaling to the two larger surface codes. 

In all QPUs used in this work, we identify non-Markovian dephasing and coherent $ZZ$-crosstalk as the main sources of hardware errors, leading to a dominance of $Z$-type logical errors.
We demonstrate that the incorporation of robust DD in QEC circuits is essential for subthreshold performance. DD suppresses coherent errors, which are underestimated by randomized benchmarking~\cite{tripathi2024DB}.  
Additionally, DD suppresses incoherent non-Markovian errors occurring during the extra idle gaps of the SWAP-embedding circuits, 
which are particularly biased towards single- and two-qubit dephasing. 
DD suppresses these errors to a level where the surface code can correct the remaining Markovian errors, and to the point where the benefits of anisotropic surface code scaling become more important. 

We also find that claiming genuine subthreshold scaling requires careful consideration of DD effects. Without DD, logical error rates can decrease with increasing code distance, apparently demonstrating subthreshold behavior. However, when DD is applied, this trend may reverse: although DD reduces errors across all code distances, it can benefit smaller codes disproportionately, eliminating the apparent scaling advantage of larger codes. This reveals the possibility of \emph{spurious} claims of subthreshold scaling: behavior that appears subthreshold without DD but ceases to be so when DD is properly included.
Thus, our results offer insight into the interplay between QEC and error suppression, and the importance of fully optimizing the latter before a claim of subthreshold scaling is made. 
The non-native heavy-hex architecture is key to unveiling these results, since native surface code implementations parallelize operations and leave no idling gaps that would lead to this subtle interplay of QEC and DD.

The findings presented here constitute a key step in the demonstration of subthreshold surface code scaling in non-native architectures, paving the way for future experiments with near-term QPUs and larger codes.

\section*{Methods}

\subsection{Surface code on heavy-hex QPUs} 
The distance parameter $d$ of an $[[n,k,d]]$ stabilizer code can be characterized as the smallest possible weight of physical Pauli operators appearing in the logical operators ${X}_L,{Z}_L$~\cite{Gottesman:97a}. It is useful to distinguish between error types in this context and, as mentioned above, introduce different distance parameters, e.g., $d_x$ and $d_z$ for Pauli $X$ and $Z$ errors, respectively. As in the surface code, this distinction can be particularly relevant for CSS codes~\cite{Steane:96a,PhysRevA.54.1098}, as these errors are in practice corrected independently. From a practical standpoint, current Heron QPUs 
can support a SWAP-based embedding of exactly three odd-distance such surface code sizes: $(d_x,d_z)=(3,3)$, $(3,5)$, and $(5,3)$ (see \cref{fig-one}), allowing exploration of scaling even though the $(5,5)$ code is out of reach. Henceforth, we use $\bd\equiv (d_x,d_z)$ to denote these codes. 

The operation of the surface code as a quantum memory involves encoding, syndrome extraction, error decoding, and recovery operations. We next describe the syndrome extraction step, i.e., the measurement of plaquette and vertex stabilizers of the surface code. 
Ref.~\cite{Benito2025quantum} devised a SWAP-based protocol of this step on the heavy-hex lattice, which was found to be the best performing strategy compared to several other embeddings and promising native codes. Instead of using a naive CNOT-gate compilation of SWAP operations~\cite{10.5555/1972505}, which must be applied before and after the syndrome-extraction CNOTs to transfer the state of the non-neighboring data qubits that form a specific stabilizer, the SWAP-based protocol uses a folding-unfolding scheme minimizing circuit depth.

The circuits initially follow the middle-out ancilla-free measurement scheme described in Ref.~\cite{McEwen2023quantum}, and start by applying a first layer of CNOT gates between data qubits to fold weight-4 stabilizers into weight-2 operators. The middle-out protocol would then apply another CNOT layer to further fold these 
to one-body operators, which can then be measured. Instead, in light of the heavy-hex layout, it is better to use an ancilla qubit to measure the weight-2 parity check directly. Afterwards, the weight-2 checks are unfolded back into the original weight-4 stabilizers by a sequence of CNOTs. Denoting $A$-$F$ as the bulk data qubits (see \cref{fig-one}), the stabilizers $Z_AZ_BZ_CZ_D$ and $X_BX_CX_EX_F$ are mapped to $Z_BZ_C$ and $X_EX_F$ during the first step, respectively. Then $Z_BZ_C$ and $X_EX_F$ are measured using an ancilla qubit. Finally, the original stabilizers are restored (unfolded) by using consecutive CNOTs. This step can be further optimized by executing the first two steps of the following QEC round with the last ones from the previous round (see \cref{fig-one}), minimizing the overall circuit depth~\cite{Benito2025quantum}. 
We further optimize the circuits  by removing reset gates, instead tracking the previous measurement outcomes in software~\cite{Geher2025npj}. Removing the resets significantly shortens the duration of the syndrome extraction round, thus also reducing idling errors. As most CNOT gates that implement the SWAP operation cancel out, the result is a compact depth-$7$ circuit, shown in \cref{fig-one}(b). 
Note that in this scheme, only half of the stabilizers can be measured in parallel, requiring two rounds to perform a full syndrome extraction cycle. 

We note that Ref.~\cite{HetenyiPRXQ2024} ran syndrome extraction circuits by effectively connecting next-nearest-neighbor qubits, which required translating each CNOT gate into a sequence of four consecutive CNOT gates. This contributes to overhead and partly explains why their surface code experiments did not achieve subthreshold behavior. The lack of optimized DD is likely another major reason. A noteworthy aspect of Ref.~\cite{HetenyiPRXQ2024} is the interesting idea of recycling the spare qubits for an additional code, allowing for novel fault-tolerant entangling gates.

\subsection{Experimental surface code memory with dynamical decoupling}

Investigating subthreshold surface code scaling in the heavy-hex lattice amounts to comparing the largest embeddable code with smaller codes restricted to sublattices of the same superconducting chip. In \cref{fig-one}(a) we show the layouts used 
in this work, which include the \dxdz{5}{3} and \dxdz{3}{5} surface codes, as well as three patches that can host the \dxdz{3}{3} code in a sublattice. 
We note that the \dxdz{3}{5} and \dxdz{5}{3} codes grow in different spatial directions, 
thus involving different sets of transmon qubits and couplers, with varying $T_1$/$T_2$ times and native gate errors.
This prevents us from making a fair comparison between them. Instead, we benchmark each anisotropic code against both the average of the three \dxdz{3}{3} code patches it includes and the top-performing patch.
We also note that it is possible to swap the $X$ and $Z$ stabilizers to implement an alternative, conjugate set of codes. As is clear from \cref{fig-one}, this effectively corresponds to reversing the chirality of each code and using the opposite set of qubits, i.e., using the red layout for the \dxdz{5}{3} code and the green layout for \dxdz{3}{5}. We also performed experiments with these conjugate codes. The results were very similar to those we report below, and are not shown.

We perform a quantum memory experiment by initializing each code in four logical states, $\overline{\ket{\alpha}}$, $\alpha\in\{0,1,+,-\}$, which requires preparing the data qubits in either $\otimes_n\ket{0}$ for $\alpha\in\{0,1\}$ or $\otimes_n\ket{+}$ for $\alpha\in\{+,-\}$. Additionally, we apply the logical operator $X_L$ or $Z_L$ to prepare $\ket{1}$ and $\ket{-}$, respectively.
We proceed by applying $N$ QEC cycles, each consisting of two consecutive rounds of SWAP-based folding/unfolding circuits for all $X$- and $Z$-type stabilizers, yielding a collection of bitstrings that specify which stabilizers have flipped at a particular QEC cycle. This is illustrated in \cref{fig-one}(c) and (d). For the first cycle, which we denote by $N=0$, the $X$-type (for $\overline{\ket{0}}$ and $\overline{\ket{1}}$) or $Z$-type (for $\overline{\ket{+}}$ and $\overline{\ket{-}}$) stabilizers project the prepared qubits into the code space, and their random results are used to define the initial Pauli frame, as well as to extract and calibrate the SPAM errors [\cref{fig-one}(c)]. The subsequent cycles [\cref{fig-one}(d)] project the state into a possibly different stabilizer subspace specified by a bitstring that determines the corresponding $\pm 1$ values of the plaquette and vertex operators, providing an evolving error syndrome that
must then be fed to a decoder, as explained in more detail below. This decoder
estimates the most likely errors
and suggests a recovery correction that can be applied at the software level by a final Pauli-frame update that effectively defines the corrected logical operators. 
After $N$ such QEC cycles, we projectively measure the data qubits in the corresponding basis, counting logical failures in light of the final Pauli frame for the $\overline{\ket{0}}$ and $\overline{\ket{1}}$ ($\overline{\ket{+}}$ and $\overline{\ket{-}}$) states. This gives access to the resulting 
$X$-type ($Z$-type) logical errors, estimating the error probabilities $\overline{p}_{N,0}$ and $\overline{p}_{N,1}$ ($\overline{p}_{N,+}$ and $\overline{p}_{N,-}$) through the relative frequencies. We perform up to $N_{\max}=9$ cycles, totaling approximately $80~\mu$s, which corresponds to $20$ embedded folding-unfolding rounds of surface code stabilizer readout. 

We make heavy use of DD to suppress both coherent and incoherent errors~\cite{Viola:98,Viola1999PRL,Zanardi1999fk,Vitali:99}. The main goal is to reduce the burden on the surface code, leaving it to predominantly handle just the incoherent Markovian errors that DD cannot suppress.
To do so, we optimize performance over a set of DD sequences vetted in the context of IBM QPUs~\cite{Ezzell2021PRApp}, and primarily chosen from the robust pulse sequence family known as ``universally robust'', or UR$_m$~\cite{Genov2017PRL}. These sequences are designed to be resilient to pulse imperfections, including over/under-rotation and axis angle errors. The integer $m$ corresponds to the number of pulses, and as long as only the modeled errors occur, DD can reduce the infidelity to  $\epsilon^m$ by increasing $m$, where $\epsilon$ is a measure of pulse error. In reality, however, longer sequences of imperfect pulses may accumulate unmodeled errors, which can actually overwhelm the suppression of those that are modeled, leading to overall performance degradation. 
Additionally, in contrast to typical QEC experiments where DD is applied only to idling data qubits during readout and resets, our circuits also include idling bridge and ancillary qubits. Accordingly, we apply DD to all qubits, regardless of their role. However, uniform application of DD across all gaps also degrades performance.
Hence, it is important to search for the optimal number of pulses $m$ per sequence, which will typically depend on the specific idle gap in the circuit, as these have different durations and cannot all accommodate a unique DD sequence. We augment the UR$_m$ family with two other robust sequences for the shorter idle gaps, $\mathrm{XY4}$~\cite{Maudsley1986ty,Viola1999PRL} and $\mathrm{RGA8}_a$~\cite{Quiroz2013PRA}, and find that the set $\{\mathrm{UR}_m,\mathrm{XY4},\mathrm{RGA8}_a\}$ with $m\in\{6,8,\dots,18\}$ results in the top-performing DD strategy for our surface code quantum memory. We have devised an iterative algorithm for identifying such a strategy that optimizes the filling of idle gaps from this set in order of decreasing time duration (see \cref{sec:dd_opt_supp} in ~\cite{supp}). As shown below, this approach results in a substantial improvement in surface code performance relative to not using DD, or to using unoptimized DD variants, such as uniformly filling idle gaps with non-robust CPMG sequences~\cite{CPMG1958}.
The inclusion of DD results in the addition of up to $\approx 484$ single-qubit DD pulses per cycle used to fill idle gaps (see~\cite{supp}, \cref{sec:dd_opt_supp} for details).

\subsection{Calibrated non-uniform noise model and decoding }

Our experiments are decoded using Stim \cite{gidney2021stim} and the minimum-weight perfect-matching (MWPM) decoder~\cite{Higgott2025quantum}, which define the logical observable for each measurement shot resulting from the corresponding Pauli frame updates. A logical failure is recorded when this observable disagrees with the measured logical observable, as this signals that the quantum memory has been altered. The logical error probability $\overline{p}_{N,\alpha}$ is then estimated as the fraction of failures over all shots.

In order to predict the logical observable, the decoder must be provided with a circuit-level noise model. In our case, this is a multichannel Markovian noise model that includes depolarizing channels for the gates;
biased dephasing channels that depend on the $T_1$ and $T_2$ values adapted to the different idling times; and bit-flip channels for the measurements (see \cref{sec:noise_model_decod_supp} in~\cite{supp} for details). 
The rates of these channels are obtained from the calibration data of each qubit and qubit pair and display a characteristic non-uniformity (rates depend on per-qubit $T_1/T_2$, gate fidelities, etc.), as each qubit and coupler acquire different specifications during  fabrication.

Given the use of optimized, robust DD sequences that suppress pulse angle and axis errors to a certain order, our decoder assumes ideal single-qubit gates. When our decoding includes DD, we report the performance based on the experimentally-optimized DD for each data point.


\subsection{Suppression factor metric for assessing subthreshold scaling}

To quantify evidence of subthreshold behavior, we begin with the conventional suppression factor between consecutive distances,
$\Lambda_{\varepsilon}^{(d)} \equiv \frac{\varepsilon_{d}}{\varepsilon_{d+2}}$,
where $\varepsilon_{d}$ denotes the average logical error probability per QEC cycle at code distance $d$ \cite{GoogleAI2021Nature,GoogleAI2023QEC,GoogleAI2024Nature,GoogleAI2024arxivColorCode,Bluvstein2025arxiv}. Under standard threshold-model assumptions, one expects an approximately exponential decay of the logical error with distance
$\varepsilon_{d} \approx C e^{-\kappa d}$, implying a supression factor $\Lambda_{\varepsilon}^{(d)} \approx e^{2\kappa}$ that is approximately constant in $d$ and strictly greater than $1$ in the subthreshold regime. Equivalently, a $d$-independent suppression factor $\Lambda_{\varepsilon}>1$ over a range of distances is consistent with exponential suppression of the per-cycle logical error probability with distance, with rate parameter $\kappa=\frac{1}{2}\ln\Lambda_{\varepsilon}$. If such behavior persists to larger $d$, it indicates scalable suppression that, in principle, could enable practically relevant algorithms on  large QPUs (megaquop \cite{Preskill-Megaquop} or even teraquop \cite{gidney2025factor2048bitrsa} regimes).
In current quantum-memory experiments, reported suppression factors  lie in the range $\Lambda_{\varepsilon}^{(d)}\sim [1.56,2.15]$ up to $d=7$ \cite{GoogleAI2024Nature,Bluvstein2025arxiv}. 

To estimate $\varepsilon_d$, the typical approach is to fit the  error probability $\overline{p}^{\bd}_{N,\alpha}$ as a function of the number of QEC cycles $N$ to a phenomenological single-parameter  model 
\beq
\label{eq:single_param_error_model}
\overline{p}^{d}_{N,\alpha} = \frac{1}{2}\big(1 -\bigl(1 -2\varepsilon^{d}_{\alpha}\bigr)^{\!N}\big),
\eeq
and then proceed by averaging the results over possible sublattices and logical basis states with label $\alpha$ to provide a single per-cycle error rate $\varepsilon_{d}$ for at least two consecutive code distances~\cite{GoogleAI2024Nature}. This single-parameter error model can be analytically derived from a purely combinatorial calculation, assuming that the quantum memory has a constant logical error probability per cycle $\varepsilon^{d}_{\alpha}$, and that only odd numbers of logical errors can contribute to a logical failure~\cite{O’Brien2017}. An alternative derivation uses an ad hoc difference equation for the logical error probability update rule after each cycle~\cite{GoogleAI2023QEC}. As we discuss in detail below, \cref{eq:single_param_error_model} is valid only under stationarity (no cycle dependence), unitality (non-increasing purity), and no SPAM errors in the logical error channel. Hence, applying \cref{eq:single_param_error_model} without ensuring that the model assumptions are satisfied may yield unfaithful results.

\subsection{Entanglement fidelity metric for assessing subthreshold scaling}
\label{sec:EF}

As mentioned in the Introduction, a  first-principles quantum memory metric can be derived from the entanglement fidelity (EF) $\Fe = \bra{\phi}
\bigl(\mathcal{I}\otimes\Psi \bigr) \bigl(\ketbra{\phi}\bigr)\ket{{\phi}}$, where $\Psi$ is a quantum channel, i.e., a completely-positive trace-preserving (CPTP) map, 
$\ket{\phi}= \frac{1}{D^{1/2}}\sum_{j=0}^{D-1}\ket{{j}} \otimes\ket{{j}}$
is a maximally entangled state in a $D^2$-dimensional Hilbert space, and $\mathcal{I}$ is the identity channel. $\Fe$ is directly related to the average channel fidelity $F_{\rm ave}=\int d\psi \bra{\psi} \Phi (\ketbra{\psi}) \ket{\psi}=(\Fe+1/D)/(1+1/D)$, and measures how well the channel preserves an arbitrary state. For a qubit, $\Fe = \frac{1}{4}\sum_i \lvert \Tr(E_i)\rvert^2$, where $\{E_i\}$ are $\Psi$'s Kraus operators. For a logical qubit with a Markovian error model described by a CP-divisible map~\cite{PhysRevLett.105.050403}, such that one simply composes the quantum channels for each cycle to obtain the full noisy evolution, calculating the EF requires estimating the logical error probabilities per cycle of the six cardinal states $\{\overline{\ket{0}},\overline{\ket{1}},\overline{\ket{+}},\overline{\ket{-}},\overline{\ket{+{\rm i}}},\overline{\ket{-{\rm i}}}\}$~\cite{Nielsen:2002aa} (see \cref{sec:ent_fid_supp} in~\cite{supp}). 

However, to date, surface code experiments only give access to logical error probabilities in the $X$ and $Z$ basis $\{\overline{p}_{N,0}^{\bd},\overline{p}_{N,1}^{\bd},\overline{p}_{N,+}^{\bd},\overline{p}_{N,-}^{\bd}\}$, corresponding to logical $X$ and $Z$ errors, respectively, since $Y$ basis preparation and readout require significantly more complicated methods~\cite{Gidney2024inplaceaccessto}. Nevertheless, under reasonable assumptions about the logical error channel, we can derive compact expressions for the EF that go beyond a simple averaging of the basis-dependent logical errors and approximate the contributions from the $Y$-basis states.
Our model describes the logical quantum channel as consisting of independent logical Pauli $X$ and $Z$ errors, logical amplitude damping, and logical SPAM errors, and assumes a CP‑divisible (cycle-composable) logical channel across cycles. Under this model, we derive expressions for the EF $\Fe^{\bd}(N)$, as well as rigorous upper and lower bounds so that $F_{{\rm e},<}^{\bd}(N)\leq \Fe^{\bd}(N)\leq F^{\bd}_{{\rm e},>}(N)$, which solely use the decoder-provided (measured) probabilities (see \cref{sec:ent_fid_supp} in~\cite{supp}):

\begin{align}
\label{eq:ent-fidelity}
\Fe^{\bd}(N)
=F_{{\rm e},<}^{\bd }(N) +\Bigl(\frac{1}{\Gamma^{(N)}}-1\Bigr)r(N),
\end{align}
where $r(N)\equiv \frac{\Sigma^{\bd}_x(N) \Sigma^{\bd}_z(N)}{4\Sigma_x^{\bd}(0)\Sigma_z^{\bd}(0)}$ and 
\bes
\begin{align}
\label{eq:Fe_bounds-lower}
F_{{\rm e},<}^{\bd }(N)&=\frac14\left(1+\frac{\Sigma^{\bd}_x(N)}{\Sigma_x^{\bd}(0)}\right)\left(1+\frac{\Sigma_z^{\bd}(N)}{\Sigma_z^{\bd}(0)}\right),\\
\label{eq:Fe_bounds-upper}
F_{\rm e,>}^{\bd }(N) &= F_{\rm e,<}^{\bd}(N) + \left(\frac{\Sigma_z^{\bd}(0)}{\Sigma_z^{\bd}(N)} - 1\right)r(N),
\end{align}
\ees
Here, $\Sigma_x^{\bd}(N)=1-\overline{p}^{\bd}_{N,+}-\overline{p}^{\bd}_{N,-}$ and $\Sigma_z^{\bd}(N)= 1-\overline{p}^{\bd}_{N,0}-\overline{p}^{\bd}_{N,1}$. Their values at $N=0$ account for SPAM renormalization, which we calibrate at the very first cycle. The factor $\Gamma^{(N)}$ accounts for logical amplitude damping and is discussed in detail in \cref{sec:ent_fid_supp} in~\cite{supp}; it cannot be estimated based purely on $X$ and $Z$-basis data, which is why we require the upper and lower bounds.

The EF has a simple interpretation: it lower-bounds the probability that after $N$ steps the cumulative Pauli error is the identity, i.e., $1-\Fe^{\bd}(N)$ upper bounds the probability of a logical error. Further, we note that the lower bound improves upon a baseline four‑state truncation of the EF, $F^{\bd}_{\rm e, trunc}(N)=(1+\Sigma^{\bd}_x(N)/\Sigma^{\bd}_x(0)+\Sigma^{\bd}_z(N)/\Sigma^{\bd}_z(0))/4$~\cite{Nielsen:2002aa}, by incorporating $Y$-basis results modeled as the uncorrelated concatenation of $X$ and $Z$ errors, yielding the above expression of $F^{\rm low}_{\rm e}(N)$. The upper bound $F^{\rm high}_{\rm e}(N)$ assumes the most  non‑unital channel consistent with the $X/Z$ tallies. 

The EF naturally lends itself as a metric for assessing subthreshold scaling: comparing a family codes with different $\bd$ values can be done by computing their respective $\Fe^{\bd}(N)$'s. Let $\tilde{F}_{\mathrm{e}} \equiv \max_{\{\text{DD,noDD}\}}\Fe$ so that each code is separately optimized with respect to the inclusion of DD. If $\tilde{\Fe}^{(d_x,d_z)}(N) > \tilde{\Fe}^{(d'_x,d'_z)}(N)$ for all $N$, where $d_x>d'_x$ or $d_z>d'_z$, or if the ratio of infidelities
\begin{align}
\label{eq:EF-metric}
    \Lambda_F^{(\bd,\bd')}(N) \equiv \frac{1-\tilde{F}_{\mathrm{e}}^{\bd'}(N)}{1-\tilde{F}_{\mathrm{e}}^{\bd}(N)} 
\end{align}
satisfies $\Lambda_F^{(\bd,\bd')}(N)>1$ for all $N$ under the same distance ordering, then this constitutes evidence of subthreshold scaling. Maximization over the inclusion or exclusion of DD ensures that each code is compared in its optimally configured form, an important point that we highlighted in the Introduction and discuss in more detail below. In applying \cref{eq:EF-metric} in the context of individual sublattices, we define the top-performing sublattice code as the code that has the highest fidelity averaged over all cycles. This is consistent with keeping the logical qubit fidelity as high as possible for as long as possible.

We advocate for the use of $\Lambda_F^{(\bd,\bd')}(N)$ in lieu of $\Lambda_{\varepsilon}^{(d)}$, as it is a fitting-model-free metric that, moreover, does not require the assumption of stationarity and can be computed directly from the logical error probabilities output by the decoder. In fact, it is by carefully considering the EF that we can identify the implicit assumptions underlying the suppression factor, and the required generalizations of fitting error models and logical basis averaging.

\subsection{Reduction to fitting models}

As shown in \cref{sec:ent_fid_supp}, when we assume (1) stationarity (the noise channel is independent of the cycle number $N$), (2) unitality (the noise consists purely of Pauli errors), and (3) no logical SPAM errors, the single-parameter error model \cref{eq:single_param_error_model} is recovered from first principles. The EF in this case coincides exactly with the lower bound for $\Sigma_x^{\bd}(0)=\Sigma_z^{\bd}(0)=1$ and, after expanding the EF to leading order in the logical error probabilities, gives $F^{\bd}_{\rm e}(N)=\frac14\big([1+\Sigma^{\bd}_x(N)][1+\Sigma^{\bd}_z(N)]\big)\approx 1-p^{\boldsymbol{d}}_N$, with $\overline{p}^{\boldsymbol{d}}_N=\frac{1}{4}\sum_{\alpha}\overline{p}^{\boldsymbol{d}}_{N,\alpha}$ as the logical error probability after $N$ cycles averaged over the four basis states. Note that each of these error probabilities is described by the single-parameter error model \cref{eq:single_param_error_model}, and that only for $N=1$ one recovers an expression with the per-cycle error rates averaged over basis states $\Fe^{\bd}(1)=1-\epsilon_{\boldsymbol{d}}$, where $\epsilon_{\boldsymbol{d}}=\frac14\sum_\alpha\epsilon^{\bd}_{\alpha}$. For generic $N$, it is not the averaged $\epsilon_{\boldsymbol{d}}$ which determines the EF and the average channel infidelity, but a linear combination of the different exponential decays per basis state, in addition to non-linear contributions due to the $Y$-basis states. 
When the specific assumptions above do not hold, the complete expression \cref{eq:ent-fidelity} gives a more informative metric.

The non-stationary, non-unital, SPAM-inclusive model used to derive \cref{eq:ent-fidelity} is actually crucial in our case, as our QEC experiments  show evidence of cycle-dependence as well as logical amplitude damping and SPAM errors. 
The same first-principles model can also be used to derive a generalization of \cref{eq:single_param_error_model} with three free but stationary (cycle-independent) parameters $\{a_{\alpha},b_{\alpha},\varepsilon^{\bd}_{\alpha}\}$ that can be used to fit the  logical error probability
\beq
\label{eq:3param-model}
\overline{p}^{\bd}_{N,\alpha} = a_{\alpha} + b_{\alpha}\bigl(1 -\varepsilon^{\bd}_{\alpha}/a_\alpha\bigr)^{N},
\eeq
where $\varepsilon^{\bd}_\alpha$ is the per-cycle error rate (a probability) and $a_{\alpha},b_{\alpha}$ account for non-unital and SPAM errors. Interestingly, the same result can also be derived from a physically motivated difference equation for the logical error probabilities, relating the free parameters to specific logical error sources (see \cref{sec:ent_fid_supp} in~\cite{supp}). The aforementioned reduction to the one-parameter model \cref{eq:single_param_error_model} by imposing additional assumptions is particularly clear when
\cref{eq:3param-model} is used as the starting point: neglecting SPAM errors
sets $b_{\alpha}=-a_{\alpha}$, and restricting to unital Pauli errors further fixes $a_{\alpha}=1/2$. For the IBM QPUs, we report different values for $a_{\alpha}$ and $b_{\alpha}$ below as required for a best-fit to \cref{eq:3param-model}, attesting that the logical error model is not a simple Pauli channel.

\section*{Results}

We now present our results for the SWAP-based embedding of the surface code using \aachen, the top-performing QPU in our study, and its integration with robust DD. Recall that the raw decoder outputs are the logical error probabilities after $N$ QEC cycles, i.e., $\overline{p}_{N,\alpha}^{\bd}$ from experiments initialized in the $\overline{\ket{\alpha}}$ state with $\alpha\in\{0,1,+,-\}$, which we analyze in light of the metrics discussed above. We set $N=0$ to denote the SPAM-calibration cycle. Since the $(3,5)$ code increases protection against phase‑flip ($Z$) errors and the $(5,3)$ code increases protection against bit‑flip ($X$) errors, we analyze (i) basis‑resolved $\overline{p}_{N,\alpha}^{\bd}$, (ii) the basis‑specific suppression factor $\Lambda_{\varepsilon,\alpha}^{\bd}$ extracted from fits, and (iii) a basis‑independent entanglement‑fidelity (EF) metric $\Fe^{\bd}(N)$ and its ratio $\Lambda_F$ defined in \cref{eq:EF-metric}. 

Among prior quantum memory experiments, Ref.~\cite{Sundaresan2023NatComm} prepared all four states; Refs.~\cite{GoogleAI2023QEC,GoogleAI2024Nature,GoogleAI2024arxivDynamicSurfaceCode} used only $\alpha\in\{0,+\}$; Ref.~\cite{Bluvstein2025arxiv} also prepared $\alpha\in\{1,-\}$ for neural-network decoder training but did not report memory results for these states.

\setcounter{subsection}{0}

\subsection{Logical error probabilities}

\begin{figure*}[ht]
\includegraphics[width=.9\textwidth]{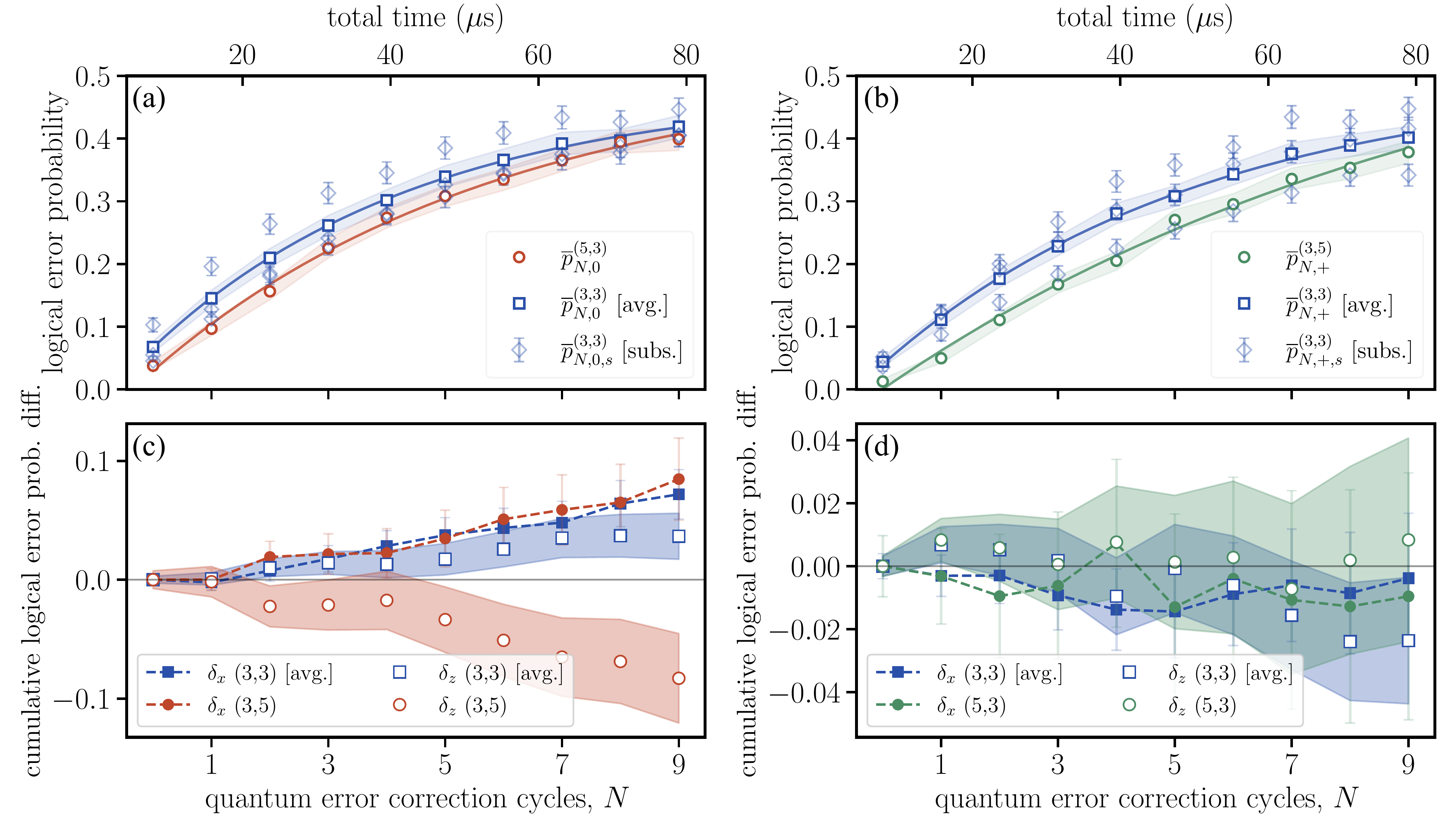}
\caption{Top: Logical error probability $\overline{p}^{\bd}_{N,\alpha}$ versus QEC cycles $N$, for (a) $\overline{\ket{+}}$ and (b) $\overline{\ket{0}}$, including optimized DD. Open blue markers (“subs.”) show the three $(3,3)$ sublattices; filled blue markers (“avg.”) show their arithmetic mean. Note that the $(3,5)$ and $(5,3)$ codes contain different physical $(3,3)$ sublattices, hence the sublattice points differ between (a) and (b). Shaded regions and error bars denote $2\sigma$ confidence intervals. With respect to these intervals, both $(3,5)$ and $(5,3)$ outperform the average $(3,3)$ code, and for the first few cycles also outperform the best individual $(3,3)$ sublattice. Overall, $(3,5)$ shows the strongest improvement. Solid curves are fits to \cref{eq:3param-model}.
Bottom: Cumulative differences $\delta^{\bd}_x=\sum_{N}(\overline{p}^{\bd}_{N,-}-\overline{p}^{\bd}_{N,+})$ vs. $\delta^{\bd}_z=\sum_{N}(\overline{p}^{\bd}_{N,1}-\overline{p}^{\bd}_{N,0})$ for (c) $(3,5)$ and (d) $(5,3)$, along with the average over their respective $(3,3)$ sublattices. Deviations from zero witness non‑unital logical noise, growing with $N$ and more pronounced for the $Z$ eigenstates.}
\label{fig:err-prob}
\end{figure*}

\cref{fig:err-prob} shows our experimental logical error probabilities $\overline{p}^{\bd}_{N,\alpha}$ versus the number of QEC cycles $N$ for the representative states $\alpha\in\{0,+\}$. These probabilities reflect control, measurement, and environment-induced errors, as well as decoder accuracy. For each data point we report the minimum $\overline{p}^{\bd}_{N,\alpha}$ found over a search of decoder noise-model parameters and DD configurations. Because the $(3,3)$ code can be embedded in three distinct sublattices, we report $\overline{p}^{(3,3)}_{N,\alpha,s}$ for each sublattice $s$, along with their average $\overline{p}^{(3,3)}_{N,\alpha}=\frac{1}{3}\sum_{s=1}^3 \overline{p}^{(3,3)}_{N,\alpha,s}$.

\cref{fig:err-prob}(a) shows that $\overline{p}^{(3,5)}_{N,+} < \overline{p}^{(3,3)}_{N,+}$ for all $N$, consistent with subthreshold scaling along the $Z$-error direction: the $(3,5)$ cycles provide increased protection against phase-flip errors relative to the averaged $(3,3)$ case. This is nontrivial because both circuit width and depth increase from $(3,3)$ to $(3,5)$, raising the opportunity for physical faults even as the larger code can correct more phase flips.

\cref{fig:err-prob}(b) shows the analogous behavior for $X$-type errors: $\overline{p}^{(5,3)}_{N,0} < \overline{p}^{(3,3)}_{N,0}$. Thus, up to $9$ QEC cycles, the anisotropically scaled $(5,3)$ and $(3,5)$ codes achieve lower logical error probabilities than the averaged $(3,3)$ code for the error type aligned with their increased distance. Because sublattice averaging can mask best-case performance, we also compare against the best $(3,3)$ sublattice below.

We obtain qualitatively similar results when we include $\overline{p}_{N,-}^{\bd}$ and $\overline{p}_{N,1}^{\bd}$. However, there is valuable additional information in these states: any persistent asymmetry within an $X$ or $Z$ eigenstate pair signals non‑unital logical noise. \cref{fig:err-prob}(c) and (d) shows the cumulative difference of the logical error probabilities for each pair of eigenstates. Although the differences are small, the cumulative effect is not unbiased but grows as the number of cycles increases. This is evidence that the effective logical noise is non-unital as it affects $+/-$ and $0/1$ eigenstates differently, and motivates accounting for non-unital effects in the logical noise model. Moreover, it shows that in our case, the three-parameter \cref{eq:3param-model} must be used instead of the purely unital, SPAM-free \cref{eq:single_param_error_model}.

\subsection{Suppression factor QEC metric}

The solid lines in \cref{fig:err-prob} are the results of using the three-parameter model in \cref{eq:3param-model} to fit our data, showing close agreement. To quantitatively assess whether three or fewer parameters are needed, we also consider two- and one-parameter models, thus accounting for situations in which SPAM or non-unital logical errors do not play a role. We use the Akaike Information Criterion (AIC)~\cite{Akaike:1974aa,Wagenmakers:2004aa}---a standard statistical tool that balances model complexity (the number of fitting parameters) with fit quality.  
We find that the optimal parameter number for our data is indeed three, confirming that both SPAM and logical non-Pauli errors play a role (see \cref{sec:fitting_supp} in~\cite{supp}). 

The fits of the logical error probabilities yield the error rates $\varepsilon_{\alpha}^{\bd}$ of \cref{eq:3param-model} for each basis state and code, as well as the average 
$\varepsilon_{\alpha}^{(3,3)} = \sum_{s=1}^3\varepsilon_{\alpha,s}^{(3,3)}/3$ when we consider the sublattices. From these values (reported in~\cite{supp}, \cref{appsec:AIC-assessment}), we compute the suppression factors $\Lambda^{\bd}_{\varepsilon,\alpha,s}\equiv\varepsilon_{\alpha,s}^{(3,3)}/\varepsilon_{\alpha}^{\bd}$ for each sublattice. The suppression factors obtained in this manner are reported in \cref{tab-res-aachen}, and we find that they lie above unity: $\Lambda_{\varepsilon,\alpha}^{(3,5)},\Lambda_{\varepsilon,\alpha}^{(5,3)}\sim[1.23,1.46]$ for both error scaling directions. This is on par with the suppression-factor ranges reported previously for isotropic scaling~\cite{GoogleAI2024Nature,Bluvstein2025arxiv}. 

Although averaging over sublattices is motivated by the fact that all the corresponding qubits and couplers will also be used in the larger codes, when solving a computational problem of a given size, it is more natural to aim for the best possible performance, i.e., to optimize over the available sublattices. 
We find that the improvement holds even from this perspective. Namely,  with respect to the best sublattice we obtain $\min_s\Lambda^{\bd}_{\varepsilon,\alpha,s}\in[1.01,1.10]$. This shows, for the first time, that current IBM QPUs support subthreshold scaling, albeit only with respect to \emph{anisotropic} growth of the code distance; we clarify below why a global subthreshold advantage is not yet achieved. With these caveats noted, to the best of our knowledge---apart from the results of Ref.~\cite{GoogleAI2023QEC} for roughly the same amount of QPU time ($\sim80\mu$s)---this is currently the only other demonstration of subthreshold scaling in superconducting-qubit QPUs.

We also report in \cref{tab-res-aachen} the suppression factor obtained after averaging over all four input states, namely $\Lambda^{\bd}_{\varepsilon} = \frac{\varepsilon^{(3,3)}}{\varepsilon^{\bd}}$ where $\varepsilon^{(3,3)} = \frac14\sum_{\alpha\in\{+,-,0,1\}}\varepsilon_\alpha^{(3,3)}$ and $\varepsilon^{\bd}=\frac14\sum_{\alpha\in\{+,-,0,1\}}\varepsilon_\alpha^{\bd}$, with $\bd\in\{(3,5),(5,3)\}$. This suppression factor lies well below $1$ for both scaled codes, suggesting  that the anisotropic scaling does not suffice to provide increased protection for all basis states. In other words, the improvement due to QEC in the logical basis parallel to the growth direction does not overcome the degradation of the orthogonal basis caused by the increased circuit complexity without any further protection. However, this basis-state averaging is ad hoc (see~\cite{supp}, \cref{appsec:recover}) and the entire metric is susceptible to model-selection sensitivity. To illustrate the latter, we recomputed $\Lambda_{\varepsilon}^{\bd}$ under the one- and two-parameter fits. The resulting spread, $\bigl(\max\Lambda_{\varepsilon}^{\bd}-\min\Lambda_{\varepsilon}^{\bd}\bigr)/\mathrm{mean}\,\Lambda_{\varepsilon}^{\bd}$, reaches $14.93\%$ for $(3,5)$ with a $\overline{\ket{+}}$ memory and $17.22\%$ for $(5,3)$ with a $\overline{\ket{0}}$ memory. This underscores the need for AIC-guided model choice rather than arbitrary parameter-number selection.

\begin{table}[t]
\centering
\squeezetable
\renewcommand{\arraystretch}{1.2}
\setlength{\tabcolsep}{6pt}
\begin{tabular}{l|cc|cc}
\hline\hline
Suppression& \multicolumn{2}{c}{$\bd=(5,3)$} & \multicolumn{2}{|c}{$\bd=(3,5)$} \\
 factor metric& $\alpha=0$ & $\alpha=1$ & $\alpha=+$ & $\alpha=-$ \\
\hline
\multicolumn{1}{c|}{$\Lambda^{\bd}_{\varepsilon,\alpha}$}       & 1.46 & 1.46 & 1.23 & 1.24 \\
$\min_s\Lambda^{\bd}_{\varepsilon,\alpha,s}$ & 1.10 & 1.07 & 1.04 & 1.01 \\
\hline
\multicolumn{1}{c|}{$\Lambda^{\bd}_{\varepsilon}$}  & \multicolumn{2}{c}{0.67} & \multicolumn{2}{|c}{0.77} \\
\hline\hline
\end{tabular}
\caption{Basis‑specific suppression factors $\Lambda^{\bd}_{\varepsilon,\alpha}$ (scaled code vs. sublattice‑average $(3,3)$) and $\min_s\Lambda^{\bd}_{\varepsilon,\alpha,s}$ (scaled code vs. best $(3,3)$ sublattice). Values $>1$ indicate basis‑specific subthreshold improvement. The bottom row reports $\Lambda^{\bd}_{\varepsilon} = \frac{\varepsilon^{(3,3)}}{\varepsilon^{\bd}}$ where $(3,3)$ is the average over the four initial states of the three sublattice codes, and $\varepsilon^{\bd}=\frac14\sum_{\alpha\in\{+,-,0,1\}}\varepsilon_\alpha^{\bd}$. All uncertainties (omitted) are $<2\times 10^{-3}$.}
\label{tab-res-aachen}
\end{table}

\begin{figure*}[ht]
\includegraphics[width=.9\textwidth]{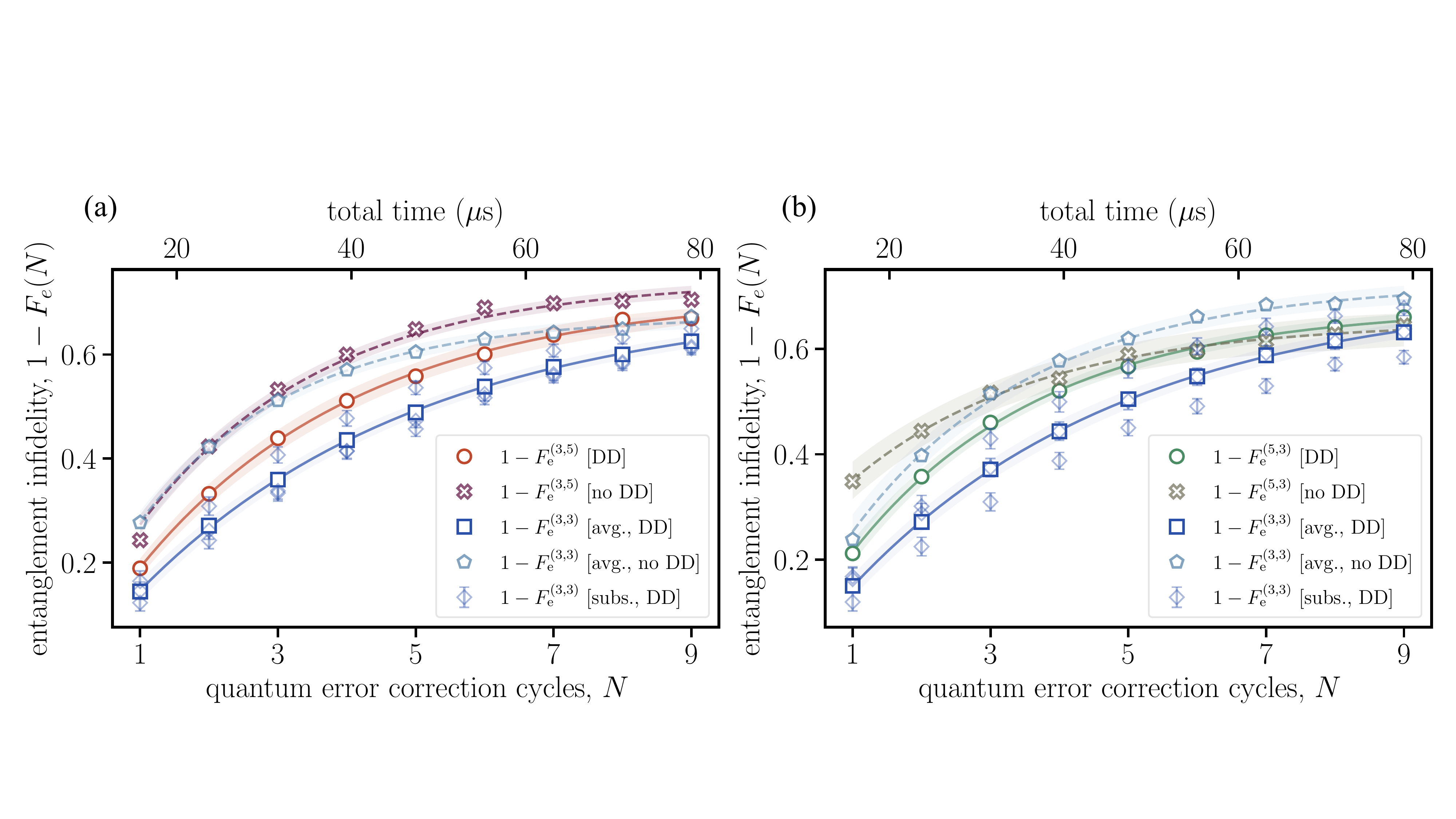}
\caption{Entanglement infidelity $1-\Fe(N)$ for (a) \dxdz{3}{5} and (b) \dxdz{5}{3}. As in \cref{fig:err-prob}, blue open markers are the two sets of three \dxdz{3}{3} sublattices relevant to each scaled code. Filled blue: arithmetic averages over those sublattices. We set $1-\Fe(0)=0$ by removing logical SPAM; $\Fe(N)$ is computed from $X/Z$ data using \cref{eq:ent-fidelity}.} 
\label{fig:ent-fidelity}
\end{figure*}


\begin{figure*}[ht]

\includegraphics[width=.9\textwidth]{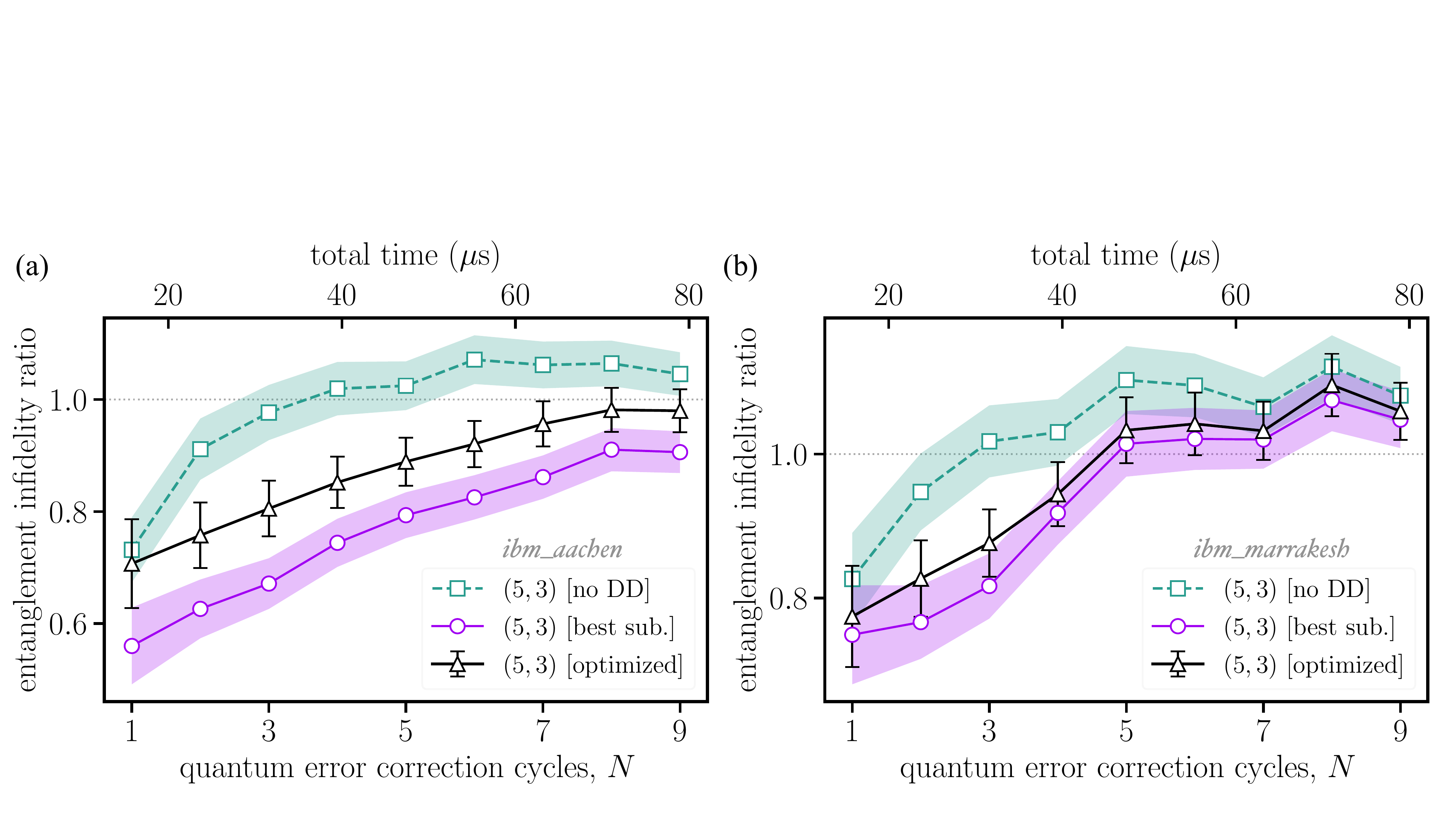}
\caption{Optimized EF metric $\Lambda_F$ [\cref{eq:EF-metric}] relative to the $(3,3)$ sublattice average (black with error bars) and relative to the best $(3,3)$ sublattice (purple, shaded). Also shown is the unoptimized EF metric $\frac{1-F_{\mathrm{e}}^{\bd'}(N)}{1-F_{\mathrm{e}}^{\bd}(N)}$ without DD (green, shaded). All results are for 
the $(5,3)$ code and its sublattices implemented on (a) \aachen\ and (b) \marrakesh. Values $>1$ would indicate basis‑independent subthreshold scaling. Consistent with \cref{fig:ent-fidelity}, this is not observed for \aachen\ once DD is considered. For \marrakesh\ we observe the onset of subthreshold scaling for $N\ge 5$ in terms of both the sublattice-average and the best sublattice, but with low statistical confidence. The no DD $(5,3)$ case exceeds $1$ for both \aachen\ and \marrakesh\ but is spurious, since the correct reference compares DD‑optimized $\max_{\{\text{DD,noDD}\}}\Fe$ for each code, as per \cref{eq:EF-metric}. To select the best sublattice, we compute $\frac{1}{9}\sum_{N=1}^{9}F_{\mathrm{e},s}^{(3,3)}(N)$ and maximize over $s\in\{1,2,3\}$.}
\label{fig:ent-infid-ratio}
\end{figure*}

\subsection{Entanglement fidelity QEC metric}

So far, we have mostly  considered the effect of code scaling on the $\{\overline{\ket{0}},\overline{\ket{1}},\overline{\ket{+}},\overline{\ket{-}}\}$ states separately. For the scaling of the suppression factor, this is motivated by the fact that different fitting parameters are required for each of the states [recall \cref{eq:3param-model}]. However, a genuine quantum memory should exhibit subthreshold scaling independent of the input state, and to assess this, we must use an unambiguous metric that inherently combines the results for the different basis states. 
A common approach based on the logical error rate $\varepsilon^{\boldsymbol{d}}_{\alpha}$ or the suppression factor $\Lambda_{\varepsilon,\alpha}^{\boldsymbol{d}}$, is to average over the basis states~\cite{GoogleAI2024Nature}. As explained in \cite{supp} (see \cref{sec:rate-from-single-point}), this approach can only be rigorously connected to the channel infidelity of the quantum memory after a single $N=1$ QEC cycle, and in the limit of small logical Pauli errors.  For larger cycle numbers, a more complicated expression is needed that would in addition require the measurement of all six cardinal states.

\begin{figure*}[t]

\includegraphics[width=.9\textwidth]{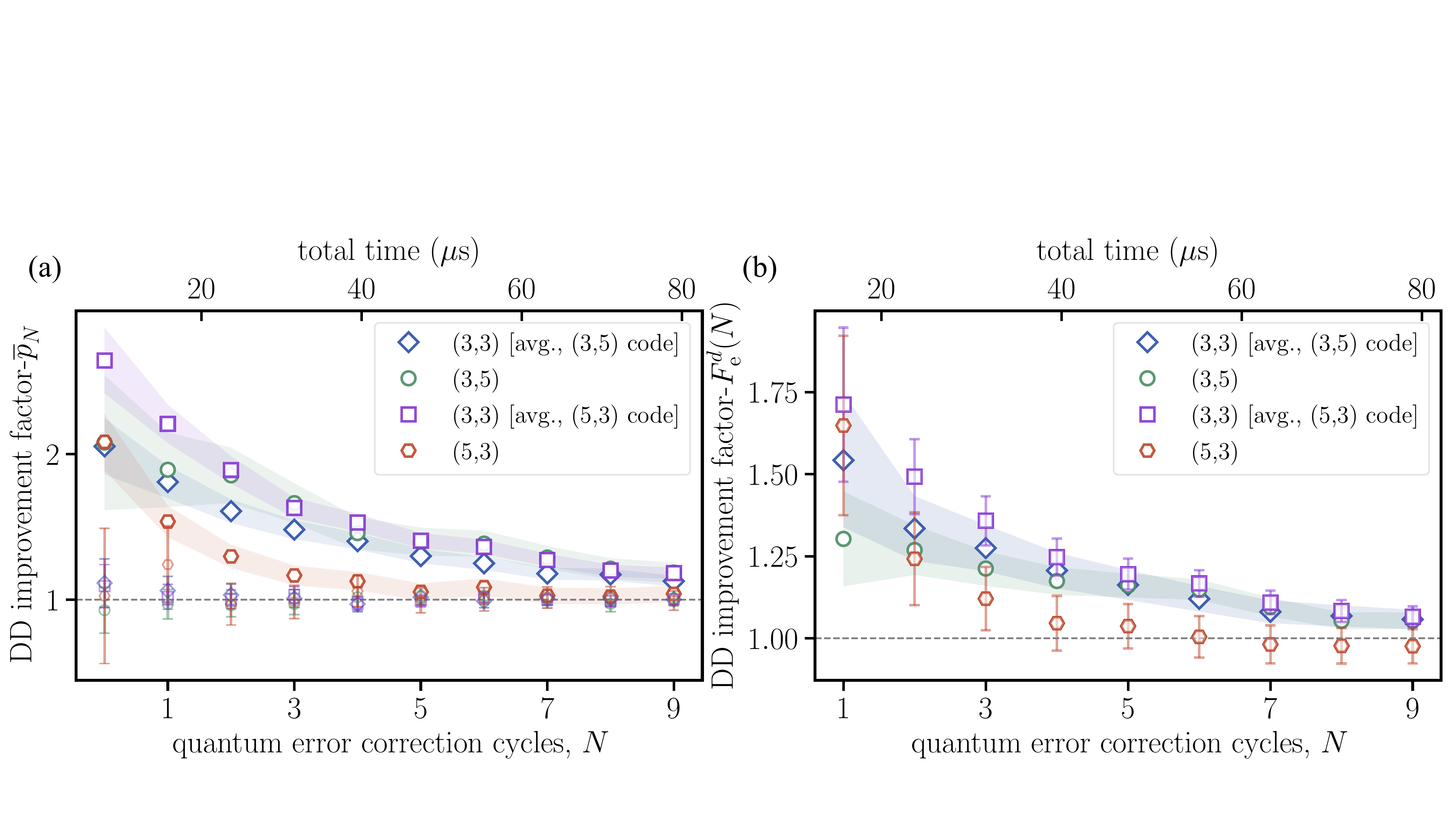}
\caption{The DD improvement factor in terms of (a) ratios of logical error probabilities and (b) entanglement infidelities. In more detail, in (a) we plot $\overline{p}^{\bd,\mathrm{noDD}}_{N,\alpha}/\overline{p}^{\bd}_{N,\alpha}$, for the scaled codes and their respective $(3,3)$ code sublattice averages. A value above $1$ indicates that DD helps. The shaded confidence intervals are for the $\overline{\ket{+}}$ state and the error bars for the $\overline{\ket{0}}$ state. DD results in a significant improvement in the $\overline{\ket{+}}$ case, and has essentially no effect in the $\overline{\ket{0}}$ case. Results for $\overline{\ket{-}}$ (significant improvement) and $\overline{\ket{1}}$ (no improvement) are nearly identical and are not shown. 
In (b) we plot $\frac{1-\Fe^{\bd,\text{noDD}}(N)}{1-\Fe^{\bd}(N)}$, which is the SPAM-free version of (a), and also simultaneously accounts for all four basis states. The $(3,5)$ code (green) is improved more by DD than the $(5,3)$ code (brown), while the opposite is true for their respective $(3,3)$ codes.
}
\label{fig-DD-ratio}
\end{figure*}

The natural alternative to the basis-dependent approach is the EF, which, as we have shown in \cref{sec:EF}, combines all the data provided by the decoder into a single, easily computable and rigorous performance metric. Using the EF [\cref{eq:ent-fidelity}] removes the ambiguity associated with model fitting and, moreover, accounts for SPAM and non-unital errors (as motivated by our three-parameter model), as well as cycle-dependence.
We plot the resulting entanglement infidelities 
in \cref{fig:ent-fidelity}, and their ratios, i.e., the entanglement fidelity metric of \cref{eq:EF-metric}, in \cref{fig:ent-infid-ratio}.

Before interpreting these results, we note that the basis states $\{\overline{\ket{0}},\overline{\ket{1}}\}$ are less protected for the $(3,5)$ code than $\{\overline{\ket{+}},\overline{\ket{-}}\}$ (since its smaller $d_x$ implies less protection against bit-flip than phase-flip errors), and the same argument holds for the $(5,3)$ code with interchanged roles for $X$ and $Z$. Thus, we expect these less protected states to contribute higher logical error probabilities to the EF compared to the $(3,3)$ code, as we increase the circuit complexity by growing the codes anisotropically.
It is unclear \textit{a priori} whether the advantage gained by growing the code distance for the more protected basis states will overcome the  degradation of the EF due to the less protected states. As shown in \cref{fig:ent-fidelity,fig:ent-infid-ratio}, this is not the case: the infidelity of the scaled codes is generally higher than that of the smaller codes, such that there is no overall subthreshold scaling. This holds with DD and also without DD except for the $(5,3)$ code, which we discuss below. We stress that these results do not depend on a fitting model and, in contrast to an average per-cycle error $\epsilon_{\bd}$ and the corresponding suppression factor, are informative at the cycle level well beyond $N=1$.

However, the conditions for subthreshold scaling according to the full EF metric are actually not far from current performance. In fact, using our circuit-level noise model and numerical simulations for the $(5,5)$ code, we find (see \cref{sec:d_5_simulations} of~\cite{supp}) that a $30\%$ reduction of the current experimental noise rates in a slightly larger QPU that has the six missing qubits needed to host this larger code would already achieve genuine subthreshold scaling, i.e., $\Fe^{(5,5)}(N) > \Fe^{(3,3)}(N)$. In \cref{appsec:Willow} of~\cite{supp} we discuss the application of our methodology to the Willow-processor results of Ref.~\cite{GoogleAI2024Nature}. 
We reanalyze the public Willow data with the EF metric, and confirm subthreshold scaling $d:3\to 5\to 7$, with 
a deeper per‑cycle advantage for 
$d:5\to 7$ than 
$d:3\to 5$. 
In contrast, basis‑averaged, single‑parameter suppression factors vary at the 10–15\% level with the chosen fit 
and presuppose stationarity that is not supported by the data.
We therefore advocate EF as a primary, fit‑free metric for an unambiguous assessment of subthreshold scaling across different QPUs with different error mechanisms.

\subsection{Role of DD and spurious subthreshold scaling}
\label{sec:spurious}

As mentioned above, our results require careful integration of optimized DD into embedded surface code circuits. \Cref{fig:ent-fidelity} also contains a comparison between circuits that do or do not incorporate DD, and it is clear at a glance that, in general, DD significantly reduces the entanglement infidelity. Thus, DD plays a critical role in improving the performance of the surface code SWAP-based embedding  in our experiments.

\Cref{fig:ent-infid-ratio} shows that for sufficiently large $N$, the fidelity metric exceeds $1$ in the case of the $(5,3)$ code without DD (green, shaded). This might seem to suggest that subthreshold scaling has been achieved. However, it is essential to first examine the actual entanglement infidelity values and to ensure that they are minimized. In doing so using \cref{fig:ent-fidelity}, we observe that $1-\Fe^{(3,3)}$ is significantly lower with DD than without it. This means that the correct reference is the case with DD, and a conclusion of subthreshold scaling based on a comparison with the noDD reference is spurious. In other words, as already explained in the context of \cref{eq:EF-metric}, each code must be optimized separately with respect to the inclusion of DD. When this is done (black, error bars), we observe that for \aachen, the subthreshold scaling result disappears. For \marrakesh\ we do observe subthreshold scaling for $N\ge 6$, but not until the last cycle ($N=9$) does this result hold with $95\%$ confidence. This represents the onset of subthreshold scaling, i.e., the scaled $(5,3)$ code gradually becomes more effective than the $(3,3)$ code as more errors accumulate. Results for the $(3,5)$ code are well above threshold and are shown in~\cite{supp} (\cref{appsec:B}), along with a complete set of results for \marrakesh.

To further quantify the role played by DD, \cref{fig-DD-ratio} shows the improvement factor we obtain by comparing QEC with and without DD, namely (a) 
$\Lambda_{\mathrm{DD}}^{(\bd)}(N) \equiv \overline{p}^{\bd,\mathrm{noDD}}_{N,\alpha}/\overline{p}^{\bd}_{N,\alpha}$ where the noDD superscript indicates that the corresponding circuits were run without any DD pulses, as well as (b) the SPAM-free entanglement infidelity ratios 
$\Lambda_{F,\mathrm{DD}}^{(\bd)}(N) \equiv \frac{1-\Fe^{\bd,\mathrm{noDD}}(N)}{1-\Fe^{\bd}(N)}$.
As seen in \cref{fig-DD-ratio}(a), while the effect on the $\overline{\ket{0}}$ state is small (symbols with error bars), the logical error probability of the $\overline{\ket{+}}$ state (symbols with shading) is significantly higher without DD for all codes.
This is evidence for a strong bias towards logical $Z$-type errors, including coherent errors due to residual $ZZ$ crosstalk from tunable couplers, and non-Markovian correlated dephasing. DD strongly suppresses these effects, with a particularly pronounced effect for the $(3,3)$ code (purple) relative to its $(5,3)$ extension (brown). The same observation holds after SPAM removal and when accounting for all four basis states, as seen in terms of the infidelity ratios in \cref{fig-DD-ratio}(b). \Cref{fig-DD-ratio} also shows that the $(3,5)$ code benefits more from DD than the $(5,3)$ code. This is consistent with DD's stronger suppression of $Z$-type errors. As for the overall decline in the DD improvement factor with $N$, we attribute this to the aforementioned accumulation of coherent errors (see \cite{supp}, \cref{appsec:coherent-errors} for more details).

\section*{Summary and Conclusions}

We implemented a surface-code quantum memory on IBM heavy-hex QPUs using a lightweight SWAP-based embedding integrated with robust dynamical decoupling (DD). With this hardware-aware co-design we observe directional suppression of logical errors under anisotropic distance scaling: increasing $d_x$ ($d_z$) improves protection against $X$-type ($Z$-type) logical errors, despite increases in circuit width and depth.

To benchmark code scaling without model-fit ambiguities, we introduced an entanglement-fidelity (EF) metric that (i) combines all four accessible basis states; (ii) accounts for SPAM and non-unital effects; and (iii) remains valid under cycle-dependent noise. Using EF, we did not find global, state-independent subthreshold scaling for the full logical channel on current IBM QPUs, with the exception of the onset of such scaling at late QEC cycles for \marrakesh. Our results clarify the limits of inferences drawn from widely used single-parameter, basis-averaged suppression factors, whose assumptions (cycle-independence, unitality, no SPAM) are not fully met in our experiments. 

DD is essential in this setting. It suppresses dominant $Z$-biased mechanisms—including coherent $ZZ$ crosstalk and correlated dephasing, thus pushing the residual noise closer to what the surface code can effectively correct. We showed that threshold assessment must optimize each code instance, before comparison, over using or not using DD: using an unoptimized reference can yield spurious indications of subthreshold scaling. In practice, our recommended metric is the entanglement infidelity ratio computed from the best configuration per code.

Circuit-level simulations calibrated to experiment indicate that a modest, $\sim 30\%$ reduction in current noise rates, together with access to a slightly larger heavy-hex device that can host a $(5,5)$ code, would achieve genuine subthreshold scaling under the EF metric. 

Our results establish a practical path for the demonstration of surface code scaling: (i) minimize depth via connectivity-aware embedding; (ii) integrate robust DD to suppress non-Markovian and coherent error components; and (iii) evaluate scaling using EF-based, SPAM-aware metrics. This combination delivers concrete, device-level targets for realizing below-threshold operation on non-native surface-code architectures, and provides a model-free benchmark for future progress toward fault-tolerant quantum computing.

\acknowledgements
We are grateful to Esperanza L\'{o}pez for useful discussions on optimizing the integration of the DD and QEC circuits and for comments on the manuscript. We acknowledge support from the Office of the Director of National Intelligence (ODNI), Intelligence Advanced Research Projects Activity (IARPA), under the Entangled Logical Qubits program through Cooperative Agreement Number W911NF-23-2-0216. A.B. and C.B. acknowledge support from PID2021-127726NB-I00 and PID2024-161474NB-I00 (MCIU/AEI/FEDER, UE), from the Grant IFT Centro de Excelencia Severo Ochoa CEX2020-001007-S, funded by MCIN/AEI/10.13039/501100011033, from the CSIC Research Platform on Quantum Technologies PTI-001, and from the European Union’s Horizon Europe research and innovation programme under grant agreement No 101114305 ("MILLENION-SGA1” EU Project). A.B and C.B. acknowledge the use of IBM Quantum Credits for the initial optimization of QEC circuits. A.V., M.M.-O., and D.A.L. acknowledge support by the U.S. Army Research Laboratory and the U.S. Army Research Office under contract/grant number W911NF2310255, and by the Defense Advanced Research Projects Agency under Agreement HR00112230006. The views, opinions and/or findings expressed are those of the author(s) and should not be interpreted as representing the official views or policies of the Department of Defense or the U.S. Government. This research was conducted using IBM Quantum Systems provided through the University of Southern California's IBM Quantum Innovation Center. The views expressed are those of the authors and do not reflect the official policy or position of IBM or the IBM Quantum team.

\putbib[biblo]
\end{bibunit}


\clearpage
\appendix

\onecolumngrid
\begin{center}
\large \bf Supplemental Material: Surface code scaling on heavy‑hex superconducting quantum processors 

\end{center}
\twocolumngrid

\startappendixtoc

\printappendixtoc

\begin{bibunit}[apsrev4-2]

\twocolumngrid

\begin{figure*}[ht]
\hspace{0cm}{\includegraphics[width=1\textwidth]{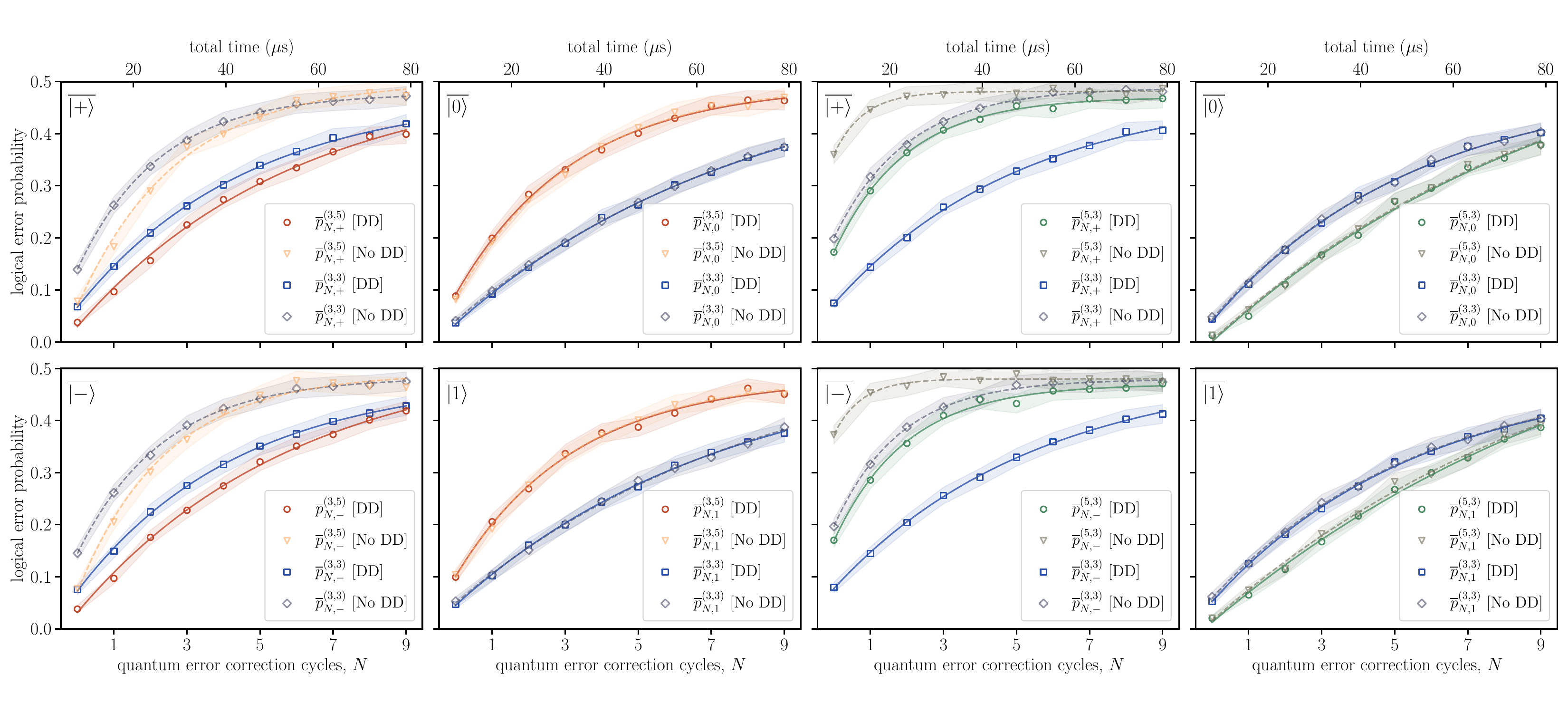}} 
\caption{Full set of empirical data from \aachen. Logical error probabilities are shown for both no DD and best-performing DD, for logical states $\{\overline{\ket{0}},\overline{\ket{1}},\overline{\ket{+}},\overline{\ket{-}}\}$.}
\label{fig-aachen-all}
\end{figure*}


\begin{figure*}[ht]
\hspace{0cm}{\includegraphics[width=1\textwidth]{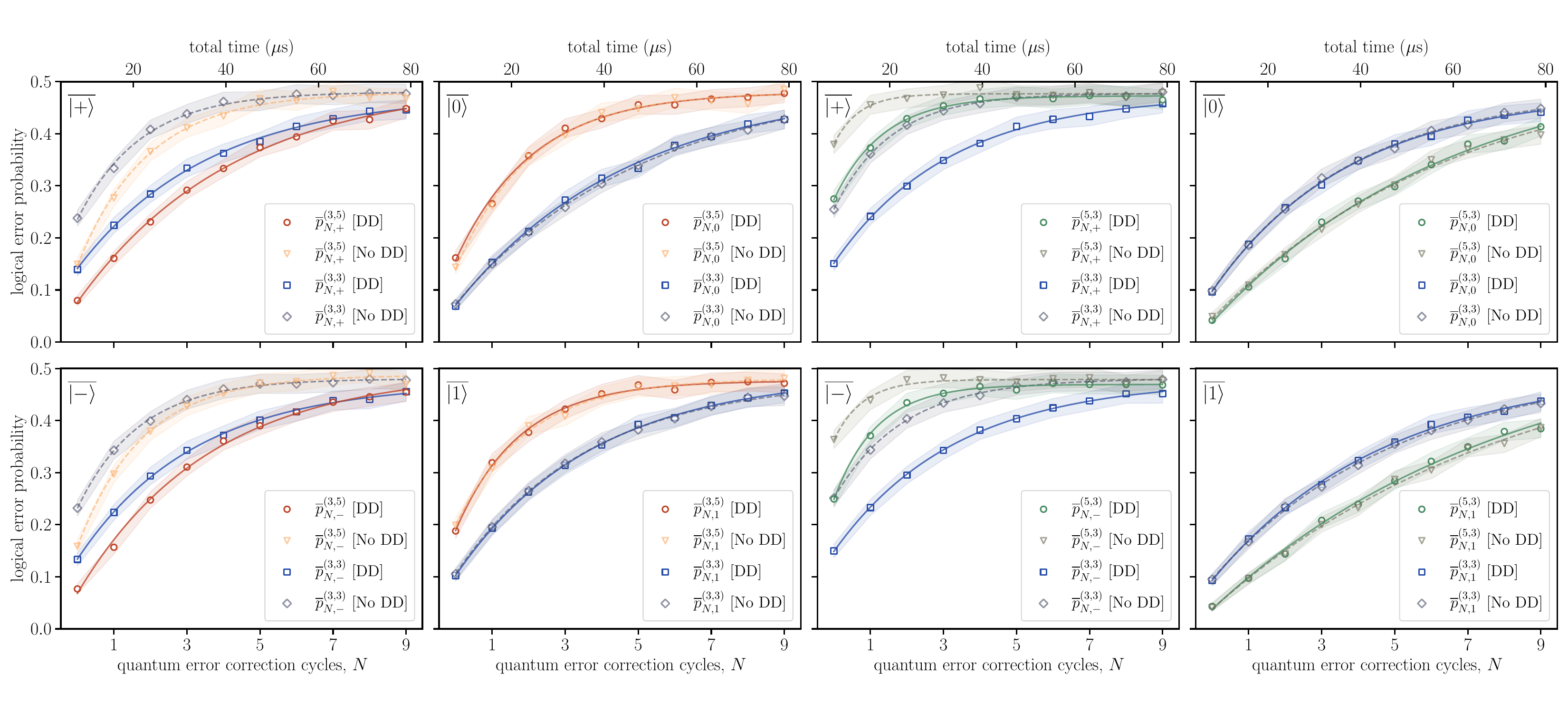}} 
\caption{Full set of empirical data from \marrakesh. Logical error probabilities are shown for both no DD and best-performing DD, for logical states $\{\overline{\ket{0}},\overline{\ket{1}},\overline{\ket{+}},\overline{\ket{-}}\}$.}
\label{fig-marrakesh-all}
\end{figure*}


\begin{figure*}[ht]
\hspace{0cm}{\includegraphics[width=1\textwidth]{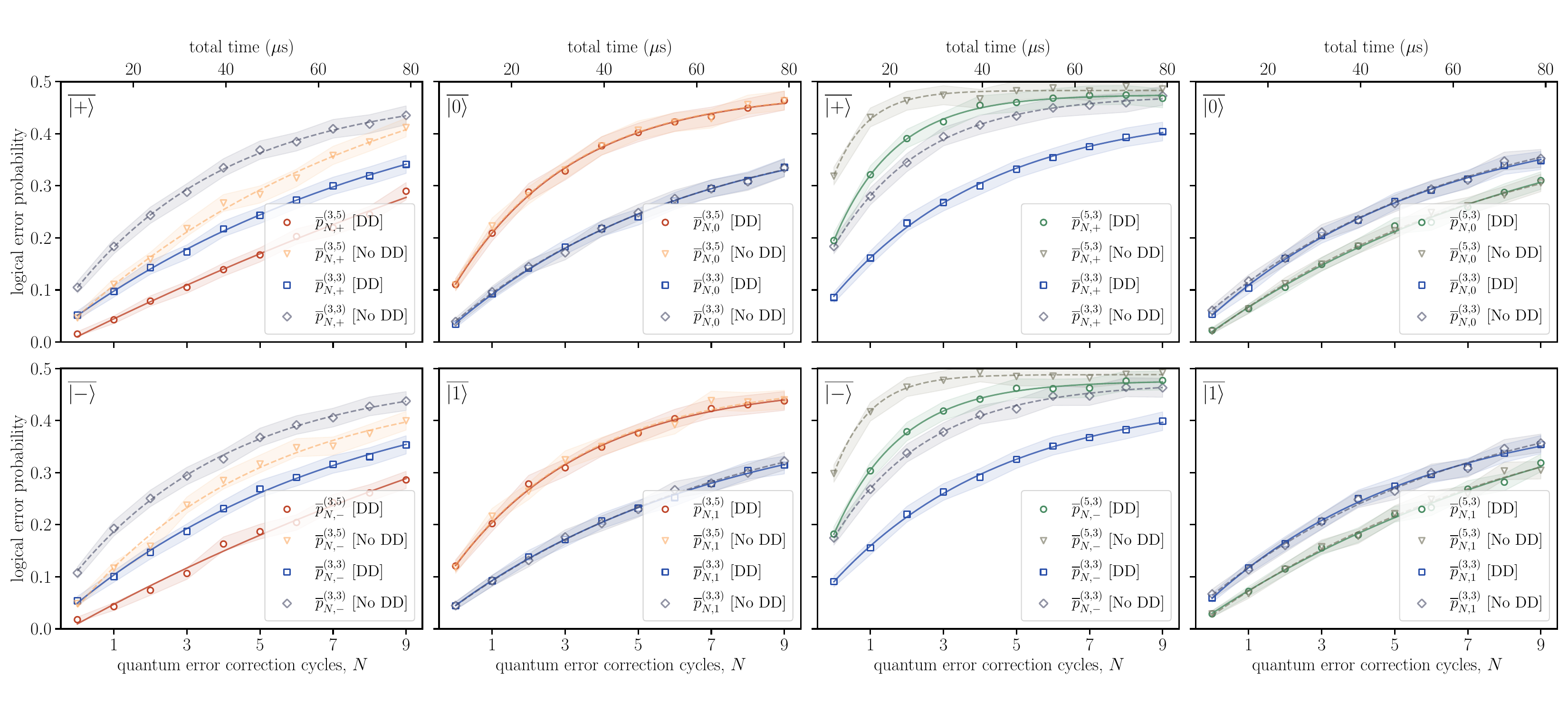}} 
\caption{Full set of empirical data from \pittsburgh. Logical error probabilities are shown for both no DD and best-performing DD, for logical states $\{\overline{\ket{0}},\overline{\ket{1}},\overline{\ket{+}},\overline{\ket{-}}\}$}. 
\label{fig-pittsburgh-all}
\end{figure*}


\begin{figure*}[ht]
\hspace{0cm}{\includegraphics[width=.85\textwidth]{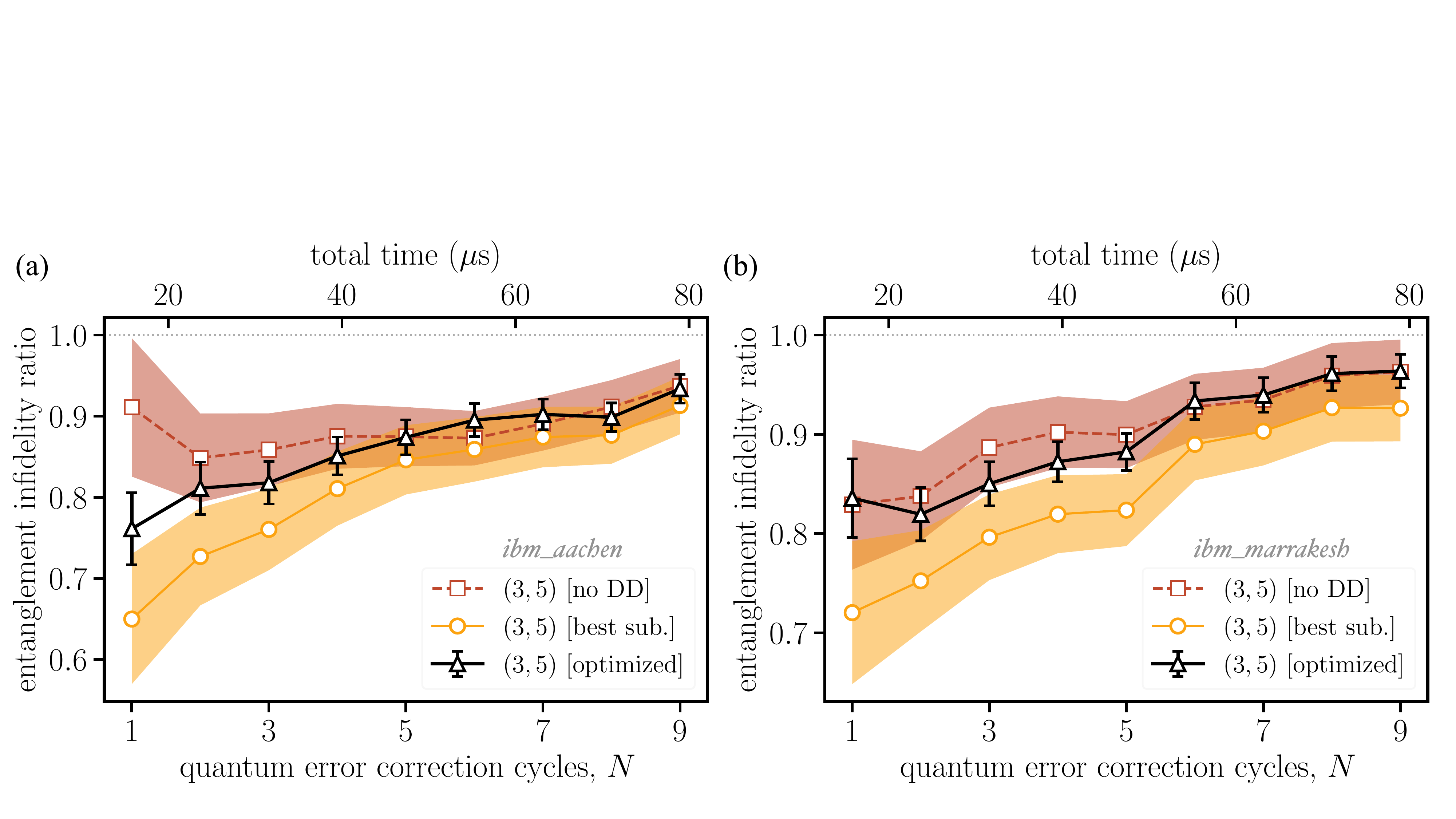}} 
\caption{Exactly as in \cref{fig:ent-infid-ratio} of the main text, but for the $(3,5)$ code. 
Optimized EF metric $\Lambda_F$ [\cref{eq:EF-metric}] relative to the $(3,3)$ sublattice average (black with error bars) and relative to the best $(3,3)$ sublattice (purple, shaded). Also shown is the unoptimized EF metric $\frac{1-F_{\mathrm{e}}^{\bd'}(N)}{1-F_{\mathrm{e}}^{\bd}(N)}$ without DD (green, shaded). All results are for 
the $(3,5)$ code and its sublattices implemented on (a) \aachen\ and (b) \marrakesh. Values $>1$ would indicate basis‑independent subthreshold scaling but are not observed here. To select the best sublattice, we compute $\frac{1}{9}\sum_{N=1}^{9}F_{\mathrm{e},s}^{(3,3)}(N)$ and maximize over $s\in\{1,2,3\}$.}
\label{fig-marrakesh-summary}
\end{figure*}

\begin{figure*}[ht]
\hspace{0cm}{\includegraphics[width=.85\textwidth]{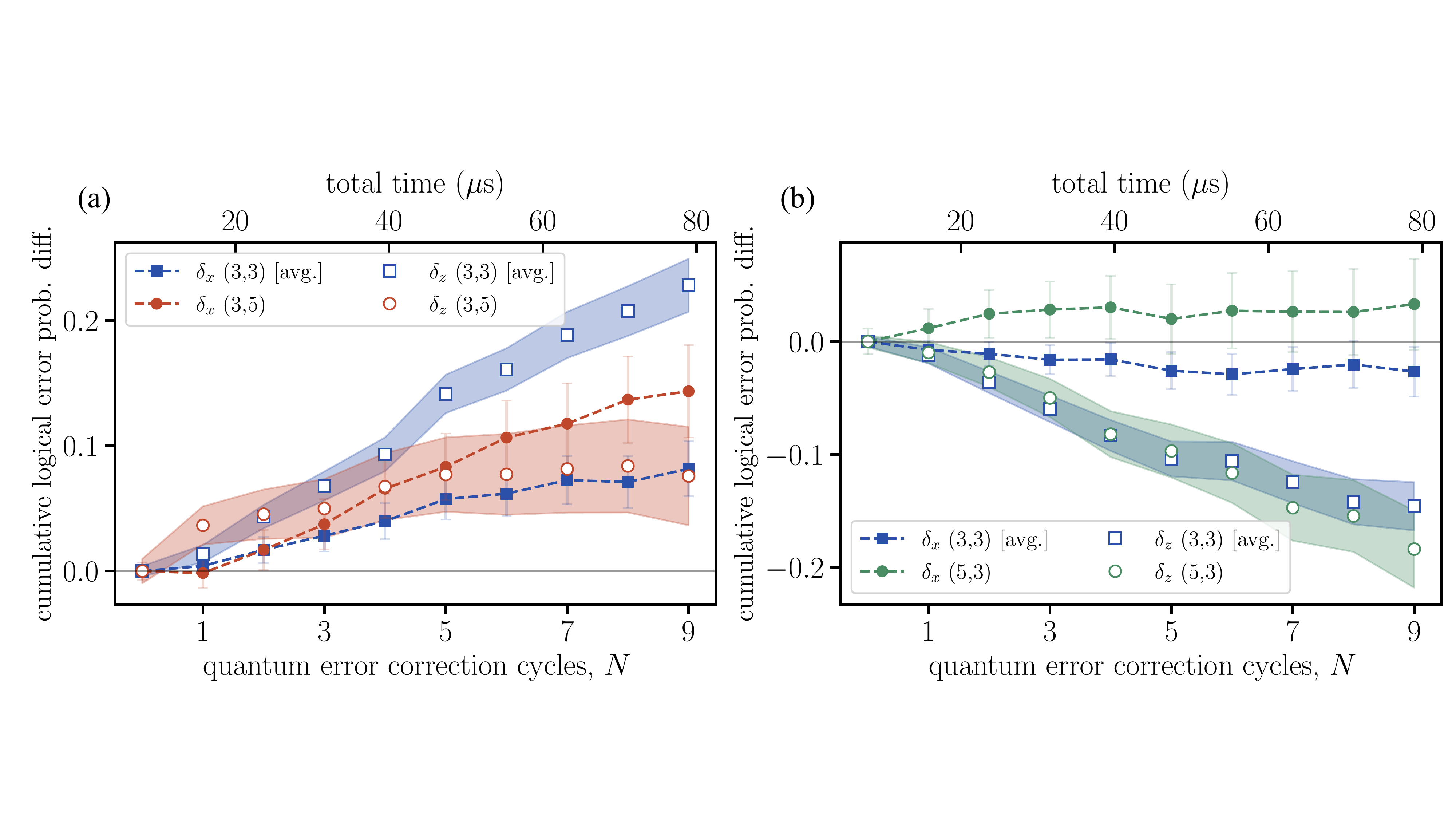}} 
\caption{As in \cref{fig:err-prob}(c,d) of the main text, but for \marrakesh. Cumulative differences $\delta^{\bd}_x=\sum_{N}(\overline{p}^{\bd}_{N,-}-\overline{p}^{\bd}_{N,+})$ vs. $\delta^{\bd}_z=\sum_{N}(\overline{p}^{\bd}_{N,1}-\overline{p}^{\bd}_{N,0})$ for (a) $(3,5)$ and (b) $(5,3)$, along with the average over their respective $(3,3)$ sublattices. Deviations from zero witness non‑unital logical noise, growing with $N$ and more pronounced for the $Z$ eigenstates.}
\label{fig-marrakesh-summary2}
\end{figure*}

\begin{figure*}[ht]
\hspace{0cm}{\includegraphics[width=.85\textwidth]{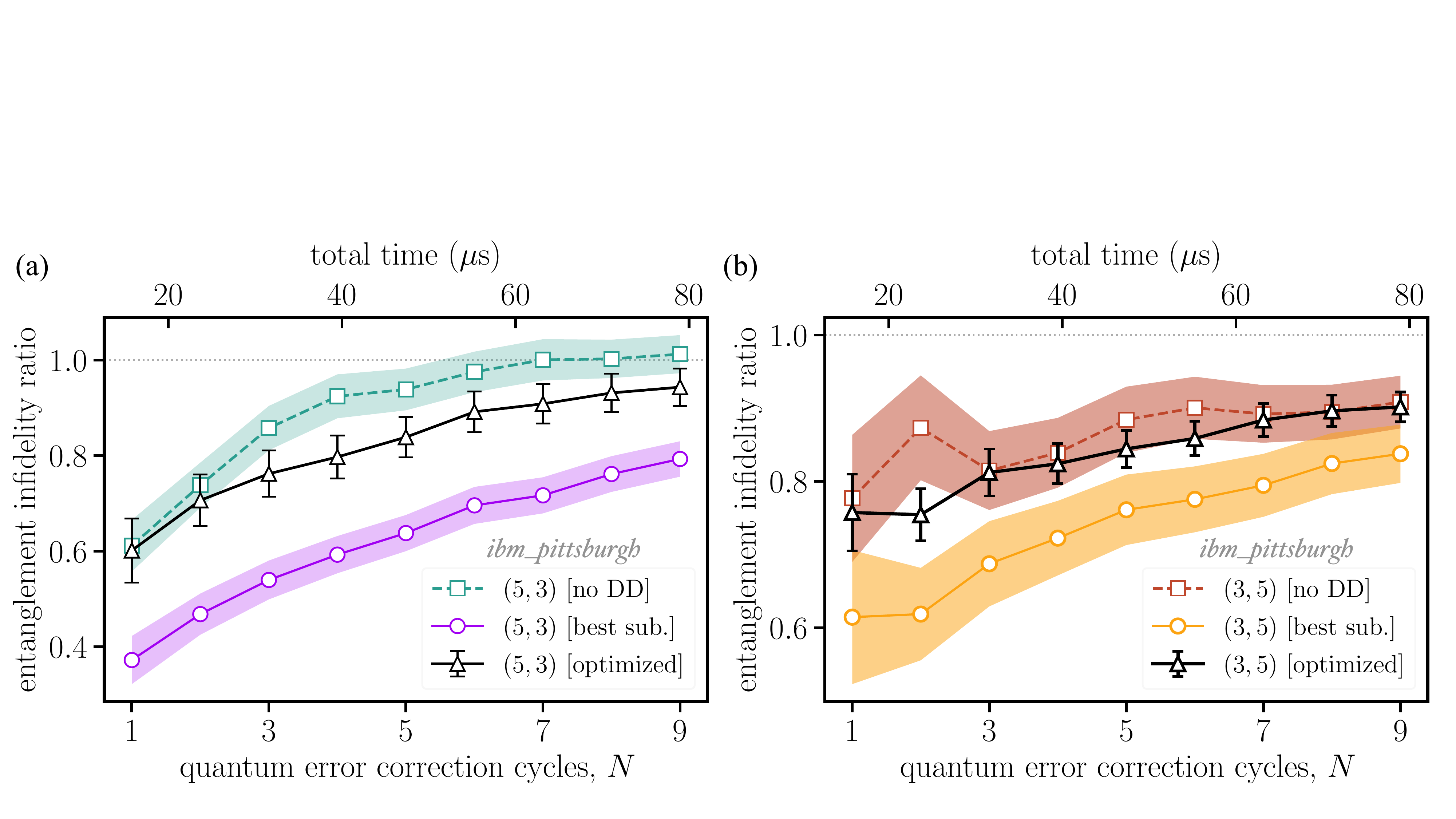}} 
\caption{Exactly as in \cref{fig:ent-infid-ratio} of the main text, but for (a) the $(5,3)$ code and (b) the $(3,5)$ code, both implemented on \pittsburgh. The onset of spurious subthreshold scaling is observed in the $(5,3)$ case for the no DD result at $N\ge 7$.}
\label{fig-pittsburgh-summary}
\end{figure*}

\begin{figure*}[ht]
\hspace{0cm}{\includegraphics[width=.85\textwidth]{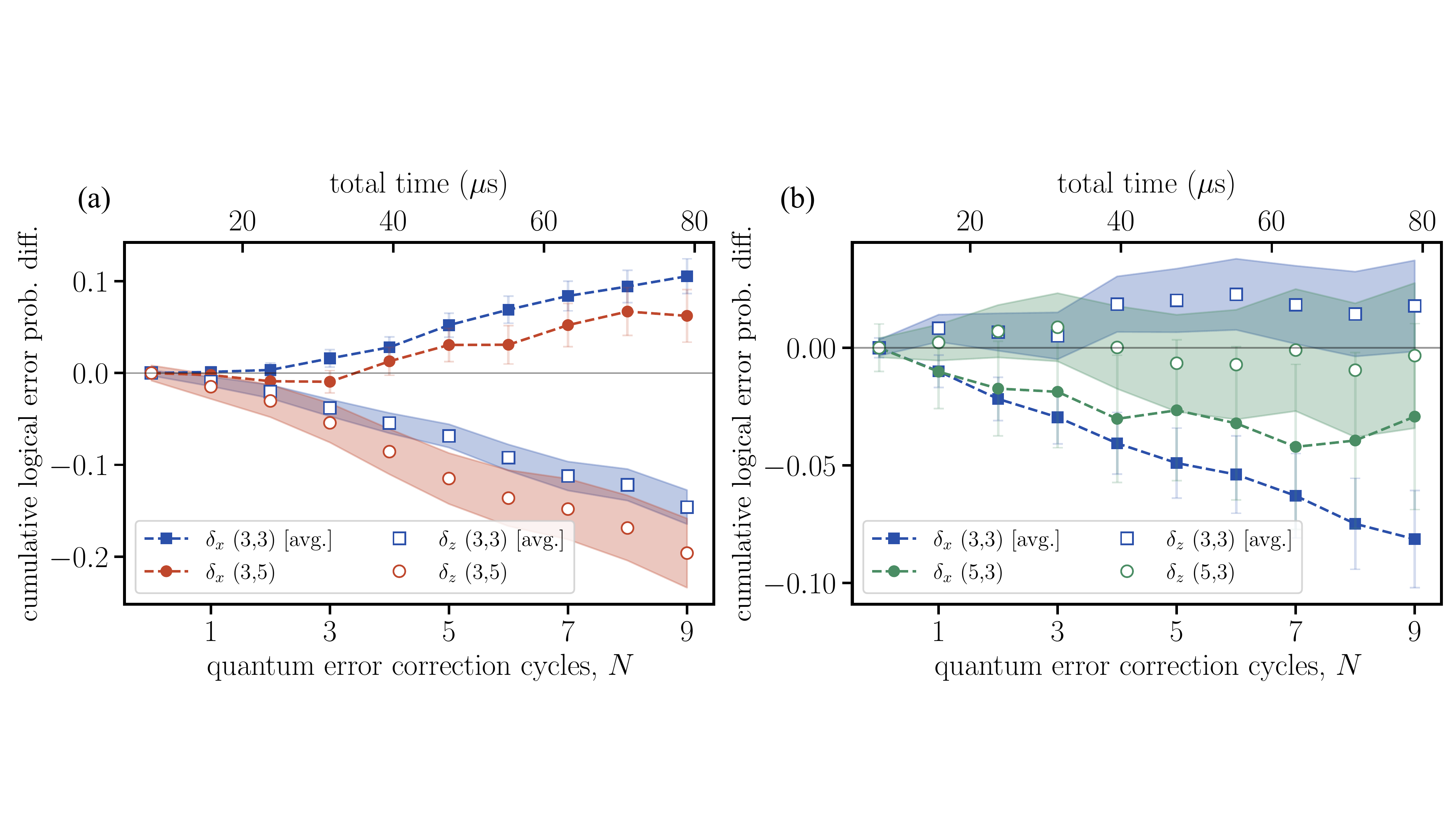}} 
\caption{As in \cref{fig:err-prob}(c,d) of the main text, but for \pittsburgh. Cumulative differences $\delta^{\bd}_x=\sum_{N}(\overline{p}^{\bd}_{N,-}-\overline{p}^{\bd}_{N,+})$ vs. $\delta^{\bd}_z=\sum_{N}(\overline{p}^{\bd}_{N,1}-\overline{p}^{\bd}_{N,0})$ for (a) $(3,5)$ and (b) $(5,3)$, along with the average over their respective $(3,3)$ sublattices. Deviations from zero witness non‑unital logical noise, growing with $N$ and more pronounced for the $Z$ eigenstates for $(3,5)$, but (unlike both \aachen\ and \marrakesh) more pronounced for the $X$ eigenstates for $(5,3)$.}
\label{fig-pittsburgh-summary2}
\end{figure*}


\section{Quantum processing units used in this work}
\label{sec:other_qpus_supp}

We conducted our experiments primarily using two IBM QPUs: \aachen\ and \marrakesh. We also ran experiments on the Heron processors \fez\ and \kyiv\, but these results are not shown, as we found them to underperform relative to \aachen\ and \marrakesh. These QPUs are superconducting quantum devices with $156$ transmon qubits, arranged on a heavy-hexagonal lattice, with each qubit occupying a vertex having three nearest neighbors, while those on the links have two nearest neighbors. This reduced connectivity was initially motivated by the use of nontunable entangling gates in past Eagle devices, in order to avoid unintended resonances, reducing unwanted crosstalk and improving gate performance. In Heron devices 
the heavy-hex connectivity is still important for reducing control complexity while minimizing crosstalk, and has allowed demonstrating a quantum volume of $512$~\cite{mckay2023arxiv}. 

The native gate set consists of \texttt{CZ}, \texttt{RZ}, \texttt{SX}, \texttt{X}, \texttt{ID}, and \texttt{reset}. Qubits are implemented as fixed-frequency transmons. Single-qubit gates are realized by microwave-driven \texttt{X} and \texttt{SX}, namely $\pi$ or $\pi/2$ rotations around the $X$-axis of the Bloch sphere, and \texttt{RZ} gates via the virtual-$Z$ scheme~\cite{McKay2017PRA}, which shifts the phase of subsequent pulses to account for arbitrary-angle rotations around the $Z$-axis of the Bloch sphere. \texttt{ID} stands for a timing delay between gates, used to align schedules for parallel operations. \texttt{reset} is obtained by a dispersive coupling of the qubit to a readout resonator, followed by an \texttt{X} gate conditioned on the measurement result to reset the qubit to the physical $\ket{0}$ state. Heron devices feature tunable couplers between adjacent qubits, which mediate an adjustable qubit-qubit interactions leading to controlled-$Z$ gates \texttt{CZ}, facilitating the suppression of undesired interactions~\cite{Stehlik2021PRL}. 

As mentioned in the main text, we optimize the circuits devised in~\cite{Benito2025quantum} by removing the reset gates. While our no-reset approach has the drawback of allowing leakage errors to accumulate across successive QEC cycles, it significantly reduces the duration of each cycle: by 2.58~$\mu$s per cycle for \marrakesh~and by 2.60~$\mu$s per cycle for \aachen. In addition to shortening the cycle duration, we note that the reset process on the Heron processors is conditioned on a dispersive measurement, which differs from an all-zero state initialization~\cite{IBMQNative}: it applies $X$ if the qubit is found to be in the $\ket{1}$ state and does nothing otherwise. This type of reset does not address leakage to states outside of the qubit manifold. Our choice of the no-reset approach is corroborated by recent results in Ref.~\cite{harper2025arxiv}, which examined the trade-off introduced by the extra time required for resets and concluded that incorporating slow and incomplete resets likely degrades QEC codes performance.


\section{Additional results: \marrakesh, \pittsburgh}
\label{appsec:B}

The main text primarily reports our results from \aachen, the top-performing QPU in our study. Here we report additional results for \marrakesh\ and \pittsburgh. These results are overall consistent with our findings for \aachen, with relatively minor variations. The only significant difference is the observation of the onset of subthreshold scaling for \marrakesh, discussed in the main text.


\Cref{fig-aachen-all,fig-marrakesh-all} show the logical error probabilities for \aachen~and \marrakesh, respectively, in analogy to \cref{fig:err-prob} in the main text. However, we also include the results without any DD (dashed lines), which show a significant deterioration for the $\overline{\ket{+}}$ state, but a negligible effect for the $\overline{\ket{0}}$ state. Overall, the qualitative trends are similar for \aachen\ and \marrakesh, but the logical error probabilities and per-cycle logical error rates are somewhat higher for \marrakesh. This is the reason why we chose to focus on \aachen\ in the main text.

\Cref{fig-marrakesh-summary} shows the EF metric for \aachen\ and \marrakesh\ in the case of the $(3,5)$ code, for which there is no subthreshold scaling (not even spurious). \Cref{fig-marrakesh-summary2} shows the non-unitality witness for \marrakesh, which behaves qualitatively similarly to the \aachen\ case. \Cref{fig-pittsburgh-summary} shows the \pittsburgh\ EF metric for both the $(3,5)$ and $(5,3)$ codes. \Cref{fig-pittsburgh-summary2} shows the non-unitality witness for \pittsburgh.


\section{DD optimization}
\label{sec:dd_opt_supp}

As discussed in the main text, the implementation of a surface code on a heavy-hex lattice inevitably introduces idle gaps due to the lattice's specific connectivity. This stands in contrast to surface code implementations on square-lattice QPUs, such as the Willow processor, where parallelization eliminates idle periods, and DD is employed only during readout and exclusively on data qubits. A straightforward, but naive, strategy for heavy-hex lattices would be to fill \emph{all} idle gaps with the same DD sequence. However, such an approach can actually degrade the fidelity as the idle gaps vary in duration. For example, a four-pulse sequence such as XY4~\cite{Maudsley1986ty} may be well suited for a short idle gap of approximately $400$ ns (given a pulse duration of $32$ ns on \aachen), but in a longer gap of $\sim1\mu$s, the increased spacing between pulses can reduce the effectiveness of the sequence. To address this, longer gaps require DD sequences with a greater number of pulses, but simple periodic repetition of a short sequence can also reduce fidelity~\cite{Ezzell2021PRApp}. For such cases, we employ $n$-pulse universally robust (UR$_n$) sequences~\cite{Genov2017PRL}. However, using excessively large $n$ can result in pulse interference~\cite{Vezvaee2025PRXQ}, again degrading performance. Therefore, it is essential to identify an optimal balance by applying larger UR$_n$ sequences to longer gaps and shorter sequences to smaller gaps, in order to maximize the overall effectiveness of DD. The strategy we employed to implement this optimized approach is described below.

\subsubsection*{Problem Setting and Notation}
\begin{itemize}
 \item Baseline circuit $C_0$ contains a set of idle gaps $\Gaps(C)$ measured in discrete time steps ($\dt$); for comparison, for all three QPUs, an $X$ gate is $32$ns=$8\dt$.
 \item DD library (9 sequences): 
 \\
 $\Lseq=\{\mathrm{UR18},\mathrm{UR16},\mathrm{UR14},\ldots,\mathrm{UR6},\mathrm{XY4},\mathrm{RGA8}_a\}$.
 \item We specifically avoid the Carr-Purcell-Meiboom-Gill (CPMG) type sequences since they are not robust and in fact insert more errors into the system due to the coherent errors of the pulses~\cite{Vezvaee2025arxiv,Ezzell2021PRApp}.
 \item Each DD sequence $s\in\Lseq$ has a fixed duration $L(s)$ (in $\dt$, the system cycle time).
 \item $\Err(C)$ is the decoded logical error probability for that circuit $C$.
 \item An integer grid $\T_{\text{grid}}\subset \mathbb{N}^+$, $\T\in\T_{\text{grid}}$.

\end{itemize}

\subsubsection*{High-Level Description of the Procedure}
Starting from $C_0$, we iteratively perform \emph{passes}. In each pass we:
\begin{enumerate}
 \item Choose a single DD type $s\in\Lseq$ and a single $\T\in\T_{\text{grid}}$. 
 \item Insert at most one instance of $s$ (i.e., no repetition) into each currently empty gap that satisfies $L(s)\le G/\T$. If several candidate sequences are considered \emph{within a pass}, the \emph{longest qualifying} one is inserted. If none qualify, the gap remains empty. 
 \item For a given noise model, vary the parameters of the model to find the lowest logical error from the bare circuit $C_0$. Compare the lowest logical error of all passes to the lowest logical error of the bare circuit and select the instance $(s,\mathcal{T})$ with the largest improvement. Call this circuit $C_1$.
 \item Repeat the process above with $C_1$.
 \item Stop if $C_n$ does not improve over $C_{n-1}$.
 \item Final output: circuit $C_n$ and the list $\{(s_1,\mathcal{T}_1),\cdots,(s_n,\mathcal{T}_n)\}$.
 \end{enumerate}
We summarize this procedure as pseudocode in \cref{alg}.

\begin{algorithm}[H]
\caption{DD Optimization}
\begin{algorithmic}[1]
\Require Baseline circuit $C_0$; DD library $\mathcal{\rm L}$; $\mathcal{T}_{\text{grid}}$; Logical error $\overline{p}_N$
\Ensure Final circuit $C_n$; list of passes $[(s_1,\mathcal{T}_1),\ldots,(s_n,\mathcal{T}_n)]$
\State $C \gets C_0$;\quad $e \gets \min \overline{p}_N$ for $C$;\quad $\mathcal{P}\gets[ ]$
\While{true}
 \State $(e_{\text{best}},s_{\text{best}},\mathcal{T}_{\text{best}},C_{\text{best}})\gets(e,\text{none},\text{none},C)$
 \For{$s\in\mathcal{\rm L}$}
 \For{$\mathcal{T}\in\mathcal{T}_{\text{grid}}$}
  \State $C' \gets \textsc{Sweep}(C,s,\mathcal{T})$ \Comment{insert at most one $s$ per gap if $L(s)\le G/\mathcal{T}$}
  \If{$C' \neq C$}
  \State $e' \gets \min \overline{p}_N$ for $C^\prime$
  \If{$e' < e_{\text{best}}$} \State $(e_{\text{best}},s_{\text{best}},\mathcal{T}_{\text{best}},C_{\text{best}})\gets(e',s,\mathcal{T},C')$ \EndIf
  \EndIf
 \EndFor
 \EndFor
 \If{$e_{\text{best}} \ge e$} \textbf{break} \EndIf
 \State $C \gets C_{\text{best}}$;\quad $e \gets e_{\text{best}}$;\quad $\mathcal{P}\gets \mathcal{P} \Vert [(s_{\text{best}},\mathcal{T}_{\text{best}})]$
\EndWhile
\State \Return $C_n\gets C,\ [(s_1,\mathcal{T}_1),\ldots,(s_n,\mathcal{T}_n)]\gets \mathcal{P}$
\label{alg}
\end{algorithmic}
\end{algorithm}

As an example, suppose the optimal strategy found by the procedure returns the passes
\[
\mathcal{P} = [(\mathrm{UR18}, 7),\;(\mathrm{RGA8}_a, 10),\;(\mathrm{XY4}, 8)].
\]
Operationally, this means: (1) apply UR18 with $\T=7$ to fill the largest qualifying gaps; (2) on the remaining gaps, apply RGA8$_a$ with $\T=10$; (3) then XY4 with $\T=8$; (4) stop, as any further DD increases error.

For a toy illustration of the $\T$ ratio consider the following: Let a single gap have length $G=600 \dt$ and available options
XX $(100 \dt)$, XY4 $(200 \dt)$, UR6 $(300 \dt)$. Then:
\begin{itemize}
 \item $\T=2$: $G/\T=300$; qualifying lengths $\{100,200,300\}$; insert UR6 $(300 \dt)$.
 \item $\T=3$: $G/\T=200$; qualifying $\{100,200\}$; insert XY4 $(200 \dt)$.
 \item $\T=4$: $G/\T=150$; qualifying $\{100\}$; insert XX $(100 \dt)$.
 \item $\T>6$: $G/\T<100$; no sequence qualifies; insert nothing.
\end{itemize}

\cref{fig-circ-example} 
shows the temporal layout of the circuit for the first QEC cycle of the \dxdz{3}{3} surface code, the result of the optimization procedure described above.
We note that the DD configurations we find are not unique and, in addition, may vary across calibration cycles. Therefore, they may need to be re-optimized with each new experiment. 

Finally, we comment on the additional DD pulses required. The single-qubit gates in the bare QEC circuits are transpiled into $\sqrt{X}$ and VZ gates; thus, any extra $X$ gates appearing after Algorithm 1 arise solely from the added DD. \cref{fig-dd-count} illustrates the number of added pulses per QEC cycle. Each cycle is a standalone circuit optimized independently. We observe that the optimal number of pulses needed to suppress $Z$-type errors for the $\overline{\ket{0}}$ state is substantially smaller than for the $\overline{\ket{+}}$ state when suppressing $X$-type errors.

\begin{figure}
\hspace{0cm}{\includegraphics[width=1\columnwidth]{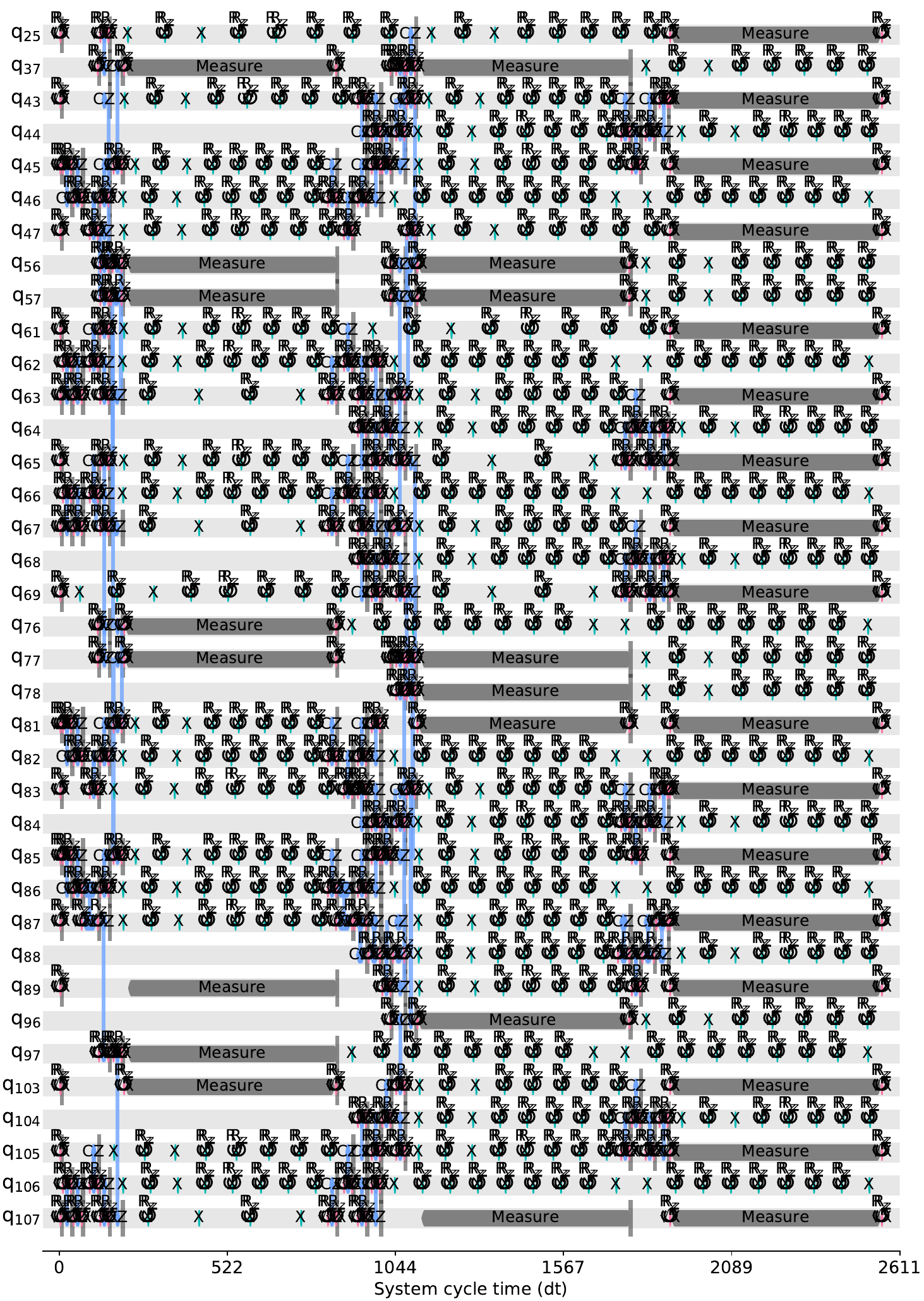}} 
\caption{A temporal depiction of one of the circuits corresponding to a single QEC cycle of the \dxdz{3}{3} surface code is shown. Blue components are the two-qubit gates (Controlled-$Z$), red components are $\sqrt{X}$ gates, and green components are $X$ gates. Instantaneous virtual $Z$ rotations are shown with the symbol $R_Z$. Various idle gaps are present throughout the circuit, which we pad with DD pulses. Our strategy prioritizes inserting longer DD sequences into the longest gaps first, followed by padding any remaining shorter gaps with appropriately shorter sequences. } 
\label{fig-circ-example}
\end{figure}

\begin{figure}
\hspace{0cm}{\includegraphics[width=.9\columnwidth]{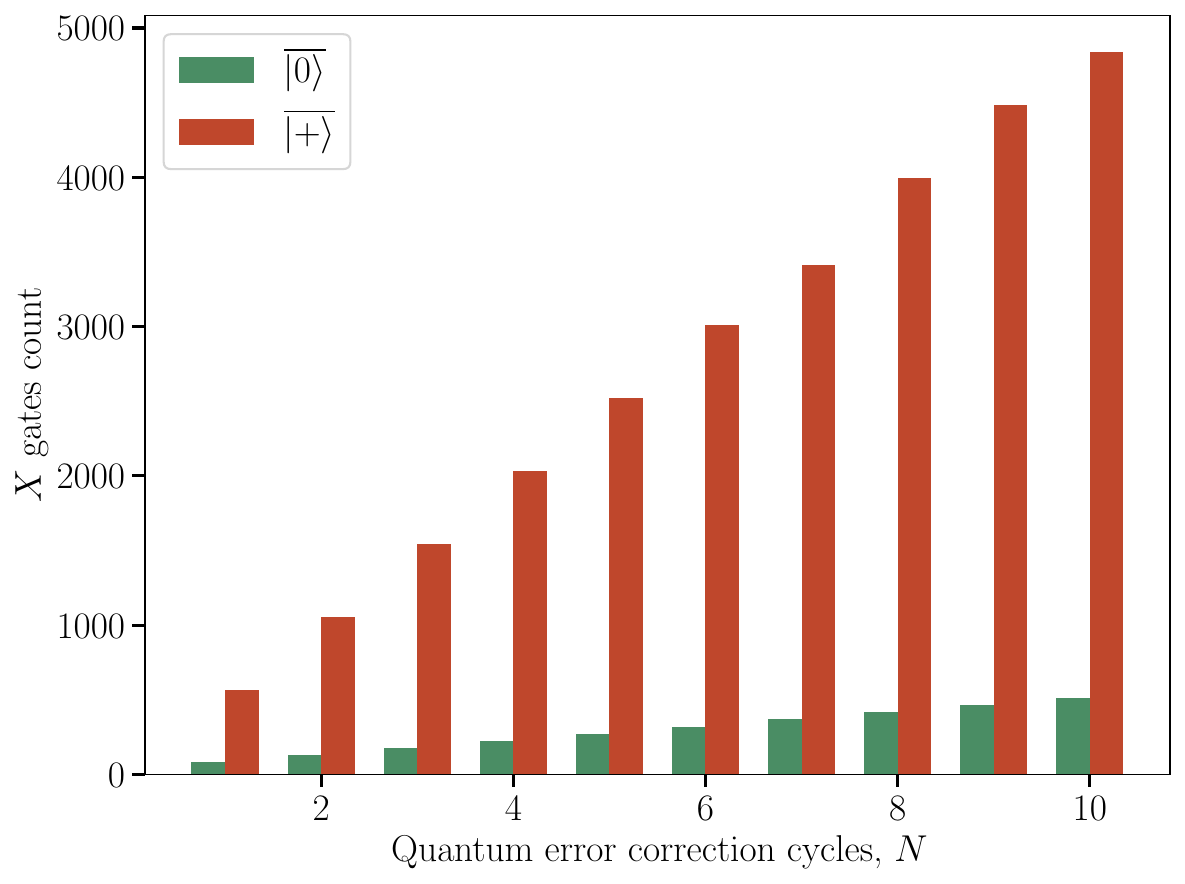}} 
\caption{Example showing the number of added single-qubit DD pulses for various QEC cycles of the $(3,3)$ code. For the $\overline{\ket{0}}$ state, the optimal number of pulses needed to suppress $Z$-type errors is much smaller than for the $\overline{\ket{+}}$ state when suppressing $X$-type errors. Similar behavior is observed for the $\overline{\ket{-}}$ and $\overline{\ket{1}}$ states (not shown).} 
\label{fig-dd-count}
\end{figure}


\section{Circuit-level noise and decoding}
\label{sec:noise_model_decod_supp}

A faithful noise model is required to realistically simulate code performance numerically and make comparative assessments of different possible strategies. In addition, a faithful noise model is also useful as input to the decoder of the experimental data. We generate the so-called detector error model (DEM) for decoding using Stim, a high-performance stabilizer circuit simulator~\cite{gidney2021stim}. 
Stim builds the DEM by calculating how the faults of each elementary gate, idle period, or measurement, all sampled stochastically from a Markovian Pauli channel, lead to specific flips of the code stabilizers. Since Pauli operators can be propagated very efficiently through Clifford circuits, these independent elementary faults yield a weighted graph where nodes are detectors (syndrome changes in space-time), edges represent the specific pairs of detectors that are flipped through the actual noisy circuit simulation, and the weights are connected to the probabilities of the circuit-level Pauli channel through the error propagation. The DEM allows for more efficient minimum-weight matching decoding, as the Pauli channel can be fed with actual calibration data, including inhomogeneous error rates, instead of assuming uniform uncorrelated flips.
A model that accurately captures the noise in the real device can thus boost the logical fidelity, as it can be more effective in identifying the correct recovery operations. We use a non-uniform circuit-level noise model with the following error channels:
\begin{itemize}
\item \textit{Two-qubit depolarizing channel} with rate $p_{2q}^{i,j}$ after any two-qubit gate between qubits $i$ and $j$.
\item \textit{Bit-flip channel} with rate $1-\sqrt{1-p_\text{m}^i}$ before measuring qubit $i$.
\item \textit{Classical bit flip channel} with rate $1-\sqrt{1-p_\text{m}^i}$ for measurements, flipping the measurement outcome without affecting the measured qubit. Thus, the total probability of an error happening while measuring qubit $i$ is $p_m^i$.
\item \textit{Biased dephasing channel} during idle periods, obtained by Pauli twirling the amplitude and phase damping errors, with rates~\cite{Geller2013PRA}
\begin{equation}
p_{\text{id},x}^i=p_{\text{id},y}^i=\frac{t_\text{id}^i}{4T_1},\quad p_{\text{id},z}^i=\frac{t_\text{id}^i}{2}\left(\frac{1}{T_2}-\frac{1}{2T_1}\right),
\label{eq:biasdephasing}
\end{equation}
where $t_\text{id}^i$ is the time during which qubit $i$ is idling.
\item \textit{No error channel} for single-qubit gates: since they are much more accurate than measurements and CNOTs, we consider them ideal. 
\end{itemize}

Aside from DD pulses, we only use single-qubit gates to perform $X$-basis measurements and to transpile the CNOT gate into a native two-qubit gate; the durations and fidelities of those operations are minimally impacted by the extra single-qubit gates. Moreover, given the use of optimized robust sequences that suppress pulse area/axis errors to a certain order, our decoder assumes ideal single-qubit gates in the DD sequences. 
Future improvements of the decoder may consider the higher-order errors of the single-qubit gates, which could be used for further optimization of the DD sequences considered in this work, provided that the decoder is provided with a noise model adapted to the non-uniform and cycle-dependent DD. 

To determine idle errors, the circuit is sliced into layers, ensuring that at most one gate is applied to each qubit inside a layer. These layers are preserved when the circuit is transpiled for execution in real hardware. The length of each layer equals the duration of the slowest gate within it. If the duration of the gate applied to qubit $i$ is $t_\text{op}^i$, the qubit will idle for a time equal to
\begin{equation}
t_\text{id}^i=\max_j\left(t_\text{op}^j\right)-t_\text{op}^i,
\end{equation}
undergoing errors according to \cref{eq:biasdephasing}.

Error rates $p^i_\text{op}$ and durations $t^i_\text{op}$ are obtained individually for every qubit using calibration data from the QPU, extracted from the IBM API at the time of job execution.

In order to numerically simulate larger-distance codes such as the $\bd=(5,5)$ code, which cannot be embedded on the existing QPUs using the SWAP-based approach, we replace the error probabilities and operation times of individual qubits with the median values across all qubits in the device:
\begin{equation}
p^i_\text{op}\rightarrow\underset{j}{\operatorname{med}}(p^j_\text{op}),\quad t^i_\text{op}\rightarrow\underset{j}{\operatorname{med}}(t^j_\text{op}).
\label{eq:mediannoise}
\end{equation}

Using the generated DEM and the syndrome data obtained experimentally, the decoder makes a prediction of which logical operators have been flipped; the experimental measurement outcomes for logical operators are corrected accordingly. A logical error occurs when the corrected measurement disagrees with the value of the logical operator in the absence of noise. Experimentally, the logical error probability $\overline{p}_N$ can be estimated by repeating the experiment $N_\text{shots}$ times, counting the number of failures $N_\text{fail}$, and calculating the relative frequencies
\begin{equation} \label{eq-def-pn}
\overline{p}_N=N_\text{fail}/N_\text{shots}.
\end{equation}
We use $N_{\rm{shots}}=3000$ shots in all our experiments. The confidence interval of $\overline{p}_N$ can be determined using the standard deviation of the sample mean over the total of $N_\text{shots}$ shots:
\begin{equation}
\Delta_{\overline{p}_N}=\frac{\sigma_{\overline{p}_N}}{\sqrt{N_\text{shots}}}=\sqrt{\frac{\overline{p}_N\left(1-\overline{p}_N\right)}{N_\text{shots}}}.
\end{equation}

Additionally, we bootstrap each $\overline{p}_N$ dataset 
by passing it into the Gaussian Process Regression (GPR) bootstrap function process \cite{Rasmussen2004}. GPR treats these points as noisy observations of an underlying smooth function, defined by a mean and a covariance (kernel) function. A single sample from the GPR posterior results in a synthetic replicate that preserves the inferred correlations among the $\overline{p}_N$.




\section{CPTP map model for the logical error probability and entanglement fidelity}
\label{sec:ent_fid_supp}

Here we rigorously derive an expression for the entanglement fidelity starting directly from the general completely positive trace-preserving (CPTP) map setting. We obtain the phenomenological fitting model as a special case and, moreover, go beyond the basis- and parameter-dependent fitting results to provide a more robust metric to assess the performance of QEC.

\subsection{General Model}
We assume a model where a single qubit is subject to a sequence of CPTP maps $\{\Phi_n\}_{n=1}^{N}$, each corresponding to a QEC cycle and described by Kraus operators $\{E^{(n)}_{i}\}_i$ that act on the \emph{logical} code subspace, satisfying the completeness condition $\sum_i E^{(n)\dagger}_{i} E^{(n)}_{i}=\mathbb{I}$.
This is a reasonable model since the system is effectively projected into a stabilizer subspace after each syndrome readout, and is effectively re-projected to the code space via the final Pauli-frame update after $N$ cycles. 
Hence, the initial system-bath state at the beginning of each cycle is a product state, and as long as slow temporal correlations such as parameter drift can be ignored between cycles, the evolution is described by a CP-divisible map~\cite{PhysRevLett.105.050403} at the beginning of each new cycle. Then 
\beq
\Psi_{N}\equiv\Phi_{N}\circ\dots\circ\Phi_{2}\circ\Phi_{1}
\eeq
is the overall composed map after $N$ QEC cycles. Note that we do not assume that the individual CPTP maps are equal, nor that they are of a particular Pauli type with a constant error rate per cycle.

\subsection{Entanglement fidelity}

The average channel fidelity of a CPTP $\Phi$ with Kraus operators $\{E_i\}$ is~\cite{Nielsen:2002aa} 

\beq
F_{\mathrm{ave}} \equiv \int d\psi \bra{\psi} \Phi (\ketbra{\psi}) \ket{\psi}\\
=\frac{\Fe +1/D}{1+1/D} ,
\eeq
where $d\psi$ is the Haar measure on a Hilbert space of dimension $D$, and $\Fe$ is the entanglement fidelity:
\beq
\label{eq:Fe-def}
\Fe = 
 \bra{\phi}
\bigl(\mathcal{I}\otimes\Phi \bigr) \bigl(\ketbra{\phi}\bigr)
 \ket{\phi} \\
 = 
 \frac{1}{D^{2}}
 \sum_i
  \Bigl|
   \Tr \bigl( E_i \bigr)
  \Bigr|^{2},
\eeq
where 
$\ket{\phi}
 = \frac{1}{\sqrt{d}}\sum_{j=0}^{D-1}
   \ket{j} \otimes\ket{j}$
is the maximally entangled purification of the fully mixed state. 

The average channel fidelity has an intuitive interpretation as the probability of preserving a pure state $\ket{\psi}$ after the action of the channel $\Phi$, averaged over all pure states. 
In the main text we used the entanglement fidelity as a metric for assessing code performance that is independent of a fitting model and is appropriate under biased noise. Here we derive an expression that relates it to the logical error probabilities output by the decoder. We also show that the entanglement fidelity has a simple operational meaning in the Pauli-noise, SPAM-free limit: it equals the probability that the cumulative logical Pauli is the identity.

\subsection{General expression for the entanglement fidelity}
We use the Pauli-transfer matrix (PTM) representation of the channel in the normalized logical Pauli basis
$A_0=I/\sqrt{2}$, $A_i=\sigma_i/\sqrt{2}$, so that
$\Tr(A_i^\dagger A_j)=\delta_{ij}$, $i,j\in\{x,y,z\}\equiv\{1,2,3\}$.
For a single-qubit state $\rho=(I+\mathbf{r} \cdot \boldsymbol\sigma)/2$ with Bloch vector
$\mathbf{r}=(r_x,r_y,r_z)^T$ (we use a superscript $T$ to denote the transpose) and components $r_i = \Tr(\sigma_i \rho)$, the PTM $S^{(n)}$ of the CPTP channel of the $n$'th QEC cycle $\Phi_n$ is the unique real $4\times4$ matrix satisfying
\beq
\begin{pmatrix}1\\ \mathbf{r}'\end{pmatrix}
=
S^{(n)}
\begin{pmatrix}1\\ \mathbf{r}\end{pmatrix},
\qquad
S^{(n)}=
\begin{pmatrix}
1 & 0\\
\mathbf{t}^{(n)} & T^{(n)}
\end{pmatrix},
\eeq
so that $\mathbf{r}'=\mathbf{t}^{(n)}+T^{(n)}\mathbf{r}$.
The matrix elements are given explicitly by
\beq
\label{eq:T}
T^{(n)}_{ij}=\Tr \bigl[A_i \Phi_n(A_j)\bigr],
\qquad
t^{(n)}_i=\Tr \bigl[A_i \Phi_n(A_0)\bigr].
\eeq
Trace preservation enforces $S^{(n)}_{00}=1$ and $S^{(n)}_{0i}=0$ for all $i$, giving the first row $[ 1\ 0\ 0\ 0 ]$. Unital channels further satisfy $\Phi_n(I)=I$ and hence $\mathbf{t}^{(n)}=\mathbf{0}$. The $3\times3$ real (not necessarily symmetric) block $T^{(n)}$ captures contractions, rotations, and cross-axis couplings. In our context, the off-diagonal entries $T^{(n)}_{ij}$ ($i\neq j$) are cross-logical terms.

After a total of $N$ cycles, the composed PTM is
\beq
\label{eq:S^N}
S^{(N)}_{\mathrm{eff}}=S^{(N)}\cdots S^{(2)} S^{(1)}
=\begin{pmatrix}1&0\\ \mathbf{t}^{(N)}_{\mathrm{eff}}&T^{(N)}_{\mathrm{eff}}\end{pmatrix},
\eeq
where
\beq
\label{eq:T-prod}
T^{(N)}_{\mathrm{eff}}=T^{(N)}\cdots T^{(2)} T^{(1)},
\eeq
with affine shift
\beq
\label{eq:Teff}
\mathbf{t}^{(N)}_{\mathrm{eff}}
=\sum_{n=1}^{N}
\Bigl(\prod_{j=n+1}^{N} T^{(j)}\Bigr)\mathbf{t}^{(n)}.
\eeq
With the initial Bloch vector being $\mathbf{r}_0$, the composed map yields
\beq
\label{eq:Bloch_N}
\mathbf{r}_N=\mathbf{t}_{\mathrm{eff}}^{(N)}+T_{\mathrm{eff}}^{(N)}\mathbf{r}_0.
\eeq

We use a result due to Ref.~\cite{Bowdrey:2002aa} (see also Ref.~\cite{Nielsen:2002aa}):
\begin{mylemma}
For a given PTM, the entanglement fidelity depends only on the $3\times3$ block 
\beq
\Fe=\frac{1+\Tr(T)}{4}.
\eeq
\end{mylemma}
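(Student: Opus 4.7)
The plan is to evaluate the entanglement fidelity directly from its Jamiołkowski-style definition, $\Fe=\bra{\phi}(\mathcal{I}\otimes\Phi)(\ketbra{\phi})\ket{\phi}$ with $D=2$, and express every quantity that appears in terms of the PTM components $S_{\mu\nu}=\Tr[A_\mu\Phi(A_\nu)]$. The key idea is that the maximally entangled state has a clean expansion in the Pauli basis, so that after one application of Pauli trace orthogonality only the diagonal PTM entries survive.

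First I would write the Bell-state projector in the Pauli basis, $\ketbra{\phi}=\tfrac14\sum_{\mu=0}^{3}s_\mu\,\sigma_\mu\otimes\sigma_\mu$, with signs $s_\mu\in\{\pm1\}$ fixed by the chosen $\ket{\phi}$ (e.g.\ $s_0=s_1=s_3=+1$, $s_2=-1$ for $\ket{\phi^+}$). Applying $\mathcal{I}\otimes\Phi$ gives $(\mathcal{I}\otimes\Phi)(\ketbra{\phi})=\tfrac14\sum_\mu s_\mu\,\sigma_\mu\otimes\Phi(\sigma_\mu)$. Taking the expectation in $\ket{\phi}$ and expanding $\ketbra{\phi}$ once more, I would use $\Tr(\sigma_\mu\sigma_\nu)=2\delta_{\mu\nu}$ and $\Tr[(A\otimes B)(C\otimes D)]=\Tr(AC)\Tr(BD)$. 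The double sum collapses to a single sum:
\begin{equation}
\Fe=\tfrac{1}{8}\sum_{\mu=0}^{3}s_\mu^{2}\,\Tr\!\bigl[\sigma_\mu\Phi(\sigma_\mu)\bigr]=\tfrac{1}{4}\sum_{\mu=0}^{3}\Tr\!\bigl[A_\mu\Phi(A_\mu)\bigr]=\tfrac{1}{4}\Tr(S),
\end{equation}
where I used $s_\mu^2=1$ and the normalization $A_\mu=\sigma_\mu/\sqrt{2}$ to rewrite the sum in terms of the diagonal PTM entries $S_{\mu\mu}$.

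Finally, I would invoke trace preservation of $\Phi$, which forces the first row of $S$ to be $(1,0,0,0)$ and in particular $S_{00}=1$. Combined with the block form $S=\begin{pmatrix}1&0\\\mathbf{t}&T\end{pmatrix}$ used in the excerpt, this gives $\Tr(S)=1+\Tr(T)$, and hence $\Fe=(1+\Tr T)/4$. The statement that $\Fe$ depends \emph{only} on $T$ is then immediate: the affine shift $\mathbf{t}$ sits in the first column of $S$ and never contributes to the diagonal, which is why non-unital distortions of the Bloch ball drop out of the entanglement fidelity even though they do enter the state-dependent fidelity.

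The main subtlety, rather than any technical obstacle, is the bookkeeping of the Bell-state signs $s_\mu$ and the $\sqrt{2}$ normalization convention; one has to check that the $s_\mu^2=1$ collapse is what removes any dependence on which maximally entangled $\ket{\phi}$ is chosen, and that the normalization convention $A_\mu=\sigma_\mu/\sqrt{2}$ is applied consistently so that $\tfrac{1}{4}\sum_\mu \Tr[\sigma_\mu\Phi(\sigma_\mu)/2]$ matches $\tfrac{1}{4}\Tr(S)$ exactly. Once these conventions are aligned, the result follows in a couple of lines.
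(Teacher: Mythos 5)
Your proof is correct, but it takes a genuinely different route from the paper's. The paper starts from the Kraus-operator form $\Fe=\frac14\sum_i|\Tr(E_i)|^2$ (quoted as part of the definition) and evaluates $\Tr(T)=\sum_i\Tr\bigl[\sum_{j}A_jE_iA_jE_i^\dagger\bigr]$ using the Pauli completeness identity $\sum_{j=x,y,z}\sigma_jE\sigma_j=2\Tr(E)I-E$ together with $\sum_iE_i^\dagger E_i=I$, arriving at $\Tr(T)=\sum_i|\Tr(E_i)|^2-1$. You instead work directly from the state-based (Choi/Jamio{\l}kowski) definition, expand $\ketbra{\phi}=\frac14\sum_\mu s_\mu\,\sigma_\mu\otimes\sigma_\mu$, and let Pauli trace orthogonality collapse the double sum to $\Fe=\frac14\Tr(S)$, with trace preservation supplying $S_{00}=1$. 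Your bookkeeping checks out ($\Tr[\sigma_\mu\Phi(\sigma_\mu)]=2S_{\mu\mu}$ with $A_\mu=\sigma_\mu/\sqrt2$, and the factor $\frac1{16}\cdot2=\frac18$), and the observation that $\mathbf{t}$ lives in the first column and never touches the diagonal is exactly why only $T$ enters. What each approach buys: yours bypasses Kraus operators entirely and makes the identity $\Fe=\frac14\Tr(S_{\rm PTM})$ manifest in two lines, needing only trace preservation; the paper's route is natural given that it has already quoted the Kraus form of $\Fe$, which it uses elsewhere, and its completeness-identity computation simultaneously re-derives the equivalence of the two forms. One small caveat: your closing remark that $s_\mu^2=1$ shows independence of the choice of $\ket{\phi}$ only covers maximally entangled states whose projectors have the Bell-type form $\frac14\sum_\mu s_\mu\sigma_\mu\otimes\sigma_\mu$; a general maximally entangled state is not of this form (independence does hold more generally, but by the conjugation-invariance of $\Tr T$, not by this sign argument). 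Since the paper fixes $\ket{\phi}=\frac{1}{\sqrt D}\sum_j\ket{j}\otimes\ket{j}$, this side remark is not needed for the lemma.
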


\begin{proof}
Using \cref{eq:T} we have 
$T_{jj}=\Tr \bigl[A_j \Phi(A_j)\bigr] = \Tr \bigl[A_j \sum_i E_i A_j E_i^\dagger\bigr]$, so that
\beq
\Tr(T) = \sum_i \Tr \bigl[\sum_{j=1}^3 A_j E_i A_j E_i^\dagger\bigr].
\eeq
Using the Pauli completeness identity $\sum_{j=x,y,z}\sigma_j E \sigma_j = 2\Tr(E)I - E$ (valid for any $2\times 2$ matrix $E$), and recalling that $A_j = \sigma_j/\sqrt{2}$, we find
\bes
\begin{align}
\Tr(T) &=\frac12\sum_i \left(2 \bigl|\Tr(E_i)\bigr|^2-\Tr(E_i^\dagger E_i)\right)\\
&=\sum_i\bigl|\Tr(E_i)\bigr|^2-1,
\end{align}
\ees
where we used trace preservation $\sum_iE_i^\dagger E_i=I$ so $\sum_i\Tr(E_i^\dagger E_i)=\Tr(I)=2$.
Therefore, using \cref{eq:Fe-def} with $d=2$,
\beq
\Fe=\frac{1}{d^{2}}\sum_i\bigl|\Tr(E_i)\bigr|^{2}
=\frac{1}{4}\sum_i\bigl|\Tr(E_i)\bigr|^{2}
=\frac{1+\Tr(T)}{4}.
\eeq
\end{proof}
In particular, for any composition of $N$ noise channels, the final entanglement fidelity is given by 
\beq
\label{eq:Fe(N)}
\Fe(N)=\frac{1+\Tr(T^{(N)}_{\mathrm{eff}})}{4},
\eeq
with $T^{(N)}_{\mathrm{eff}}$ given by \cref{eq:T-prod}.

\subsection{General connection between entanglement fidelity and decoder output}
In general, each $T^{(n)}$ in \cref{eq:T-prod} is a real matrix, i.e., can have up to $9$ independent matrix elements. We do not have access to data that fully characterizes all $9$ elements per QEC cycle, as this would requires full process tomography at the logical level and repeated over $N$ cycles. 
However, if we assume that each $T^{(n)}$ is diagonal with entries $\eta^{(n)}_i \equiv[T^{(n)}]_{ii}$ ($i\in\{x,y,z\})$, then the final entanglement fidelity is given by 
\beq
\label{eq:F_e-diag}
\Fe(N)=\frac14 \left[1+\prod_{n=1}^{N}\eta^{(n)}_x+\prod_{n=1}^{N}\eta^{(n)}_y+\prod_{n=1}^{N}\eta^{(n)}_z\right].
\eeq
Next, we explain how to relate these expressions for the entanglement fidelity to the data we obtain from our experiments, and clarify the assumptions required to have such a diagonal $T^{(n)}$.

Experimentally, we have direct access to the decoder-reported logical error probabilities after $N$ QEC cycles of four cardinal logical states:
\beq
\overline{p}_{N,\alpha}\ \text{from }\ \overline{\ket{\alpha}}\ \text{runs},\quad \alpha\in\{0,1,+,-\}.
\eeq
Note that encoding logical $\overline{\ket{\pm {\rm i}}}$ states, and measuring in the corresponding $\overline{Y}$ basis is not straightforward in the surface code; including those states on an equal footing in a QEC metric lies beyond the reach of current hardware capabilities.

The PTM formalism uses \emph{per-cycle} quantities, whereas the decoder returns \emph{cumulative} (after $N$ cycles) logical error probabilities $\overline{p}_{N,\alpha}$. Note that in the main text we use the notation $\overline{p}_{N,\alpha}^{\bd}$ for $\overline{p}_{N,\alpha}$; here we drop the code distance parameters $\bd$, as they do not play a role in our calculations.

\begin{mylemma}
The connection between the experimental quantities $\overline{p}_{N,\alpha}$ and the diagonal PTM elements is
\bes
\label{eq:eta}
\begin{align}
\eta^{(N)}_x&=\frac{1-\overline{p}_{N,-}-\overline{p}_{N,+}}{1-\overline{p}_{N-1,-}-\overline{p}_{N-1,+}}\\
\eta^{(N)}_z&=\frac{1-\overline{p}_{N,1}-\overline{p}_{N,0}}{1-\overline{p}_{N-1,1}-\overline{p}_{N-1,0}},
\end{align}
\ees
with the convention $\overline{p}_{0,\alpha}=0$ so that the $N=1$ case reduces to $\eta_x^{(1)}=1-\overline{p}_{1,-}-\overline{p}_{1,+}$ and $\eta_z^{(1)}=1-\overline{p}_{1,1}-\overline{p}_{1,0}$.
\end{mylemma}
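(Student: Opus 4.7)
\smallskip

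\noindent\textbf{Proof proposal.}
The plan is to express the decoder-reported logical error probabilities directly in terms of components of the composed affine Bloch map, and then to exploit the diagonal structure assumed for each per-cycle $T^{(n)}$ to isolate the ratio that defines $\eta_i^{(N)}$. First I would fix the Bloch representations of the four cardinal logical states: $\mathbf{r}_0^{(0)} = (0,0,+1)^T$, $\mathbf{r}_0^{(1)} = (0,0,-1)^T$, $\mathbf{r}_0^{(+)} = (+1,0,0)^T$, $\mathbf{r}_0^{(-)} = (-1,0,0)^T$, and apply \cref{eq:Bloch_N} to obtain the Bloch vector after $N$ cycles for each state. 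By Born's rule, $\overline{p}_{N,0} = (1 - r_{N,z}^{(0)})/2$, $\overline{p}_{N,1} = (1 + r_{N,z}^{(1)})/2$, and analogously $\overline{p}_{N,\pm} = (1 \mp r_{N,x}^{(\pm)})/2$.

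The key algebraic step is to form the antipodal pair combination
\begin{equation}
1 - \overline{p}_{N,0} - \overline{p}_{N,1} = \tfrac{1}{2}\bigl(r_{N,z}^{(0)} - r_{N,z}^{(1)}\bigr),
\end{equation}
and likewise for the $X$ basis. Writing $r_{N,z}^{(\alpha)} = t_{\mathrm{eff},z}^{(N)} + \sum_j [T_{\mathrm{eff}}^{(N)}]_{zj}\, r_{0,j}^{(\alpha)}$ from \cref{eq:Bloch_N}, the affine shift $t_{\mathrm{eff},z}^{(N)}$ drops out of the antipodal difference, and only the $(z,z)$ entry of $T_{\mathrm{eff}}^{(N)}$ survives, giving
\begin{equation}
1 - \overline{p}_{N,0} - \overline{p}_{N,1} = [T_{\mathrm{eff}}^{(N)}]_{zz},
\end{equation}
with an entirely analogous relation for the $X$ basis involving $[T_{\mathrm{eff}}^{(N)}]_{xx}$. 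This cancellation of the unital/non-unital affine piece is the operationally important observation, though it is not the hardest part.

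Next I invoke the diagonality hypothesis: each $T^{(n)} = \mathrm{diag}(\eta_x^{(n)}, \eta_y^{(n)}, \eta_z^{(n)})$. Since diagonal matrices multiply entrywise, \cref{eq:T-prod} immediately yields $[T_{\mathrm{eff}}^{(N)}]_{zz} = \prod_{n=1}^{N} \eta_z^{(n)}$ and $[T_{\mathrm{eff}}^{(N)}]_{xx} = \prod_{n=1}^{N} \eta_x^{(n)}$. Combining with the previous display gives
\begin{equation}
\prod_{n=1}^{N} \eta_z^{(n)} = 1 - \overline{p}_{N,0} - \overline{p}_{N,1},
\end{equation}
and similarly for $x$. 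Dividing the $N$-cycle identity by its $(N{-}1)$-cycle version telescopes the product and yields the claimed ratios for $\eta_z^{(N)}$ and $\eta_x^{(N)}$. The initial convention $\overline{p}_{0,\alpha}=0$ is consistent because $T_{\mathrm{eff}}^{(0)} = \mathbb{I}$ reproduces the $N=1$ special case.

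The main obstacle is conceptual rather than computational: one must justify that the affine shift cancels \emph{only} when antipodal cardinal states are paired, and one must be clear that the diagonal-$T$ hypothesis is what permits both the cancellation of cross-axis couplings (so that $r_{N,z}^{(\alpha)}$ depends only on $r_{0,z}^{(\alpha)}$) and the factorization of the $N$-cycle composed map into a simple product of per-cycle diagonal entries. Without diagonality, the same manipulation would recover $[T_{\mathrm{eff}}^{(N)}]_{zz}$ and $[T_{\mathrm{eff}}^{(N)}]_{xx}$, but these would no longer equal the corresponding products $\prod_n \eta_i^{(n)}$, so the per-cycle interpretation of the measured ratio would fail.
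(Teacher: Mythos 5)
Your proposal is correct and follows essentially the same route as the paper: relate $\overline{p}_{N,\alpha}$ to the Bloch (conditional-magnetization) components of the composed affine map, use the antipodal pairing so the affine shift cancels and only $[T_{\mathrm{eff}}^{(N)}]_{xx}$, $[T_{\mathrm{eff}}^{(N)}]_{zz}$ survive, then invoke per-cycle diagonality to factorize $[T_{\mathrm{eff}}^{(N)}]_{ii}=\prod_n\eta_i^{(n)}$ and telescope the ratio. The only cosmetic difference is that the paper phrases the probability--magnetization identification via the decoder's net Pauli-frame tallies $G_N$ rather than a direct Born-rule reading, which does not change the substance of the argument.
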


\begin{proof}
Assume the composed $N$-cycle logical channel has diagonal Pauli-transfer form
$T_{\mathrm{eff}}^{(N)}=\mathrm{diag} \bigl([T_{\mathrm{eff}}^{(N)}]_{xx},[T_{\mathrm{eff}}^{(N)}]_{yy},[T_{\mathrm{eff}}^{(N)}]_{zz}\bigr)$,
with an arbitrary affine shift $\mathbf{t}_{\mathrm{eff}}^{(N)}=(t^{(N)}_x,t^{(N)}_y,t^{(N)}_z)^T$.

Consider the initial state $\rho_{0,\pm j} =\frac{I\pm\sigma_j}{2}$, i.e., the logical qubit is prepared in the $\pm 1$ eigenstate of $\sigma_j$. Let $\Psi_N=\Phi_N\circ\cdots\circ\Phi_1$ and let $\rho_{N,\pm j}=\Psi_N(\rho_{0,\pm j})$ denote the resulting state after $N$ cycles. Define the ``conditional magnetization''
\beq
\label{eq:mag}
m_i^{(N)}(\pm j) \equiv \Tr (\sigma_i\rho_{N,\pm j}),
\eeq
i.e., the expectation value of $\sigma_i$ after $N$ QEC cycles, given that the initial state is the $\pm 1$ eigenstate of $\sigma_j$.
The corresponding PTM is given by \cref{eq:S^N}, and the composed map yields $\mathbf{r}_N=\mathbf{t}_{\mathrm{eff}}^{(N)}+T_{\mathrm{eff}}^{(N)}\mathbf{r}_0$ [\cref{eq:Bloch_N}]. Recall that the $i$'th Bloch vector component is $r_{N,i} = [\mathbf{r}_N]_i = \Tr(\sigma_i \rho_N)$, and note that this is also the magnetization defined in \cref{eq:mag}. Thus, for $i\in\{x,y,z\}$:
\bes
\label{eq:cond-mag_ij}
\begin{align}
m_i^{(N)}(\pm j) &= \Tr (\sigma_i\rho_{N,\pm j}) = t^{(N)}_i\pm [T_{\mathrm{eff}}^{(N)}]_{ij} \\
&= t^{(N)}_i\pm [T_{\mathrm{eff}}^{(N)}]_{ii}\delta_{ij},
\end{align}
\ees
since by assumption the PTM blocks $T^{(n)}$ are diagonal.

Let $G_N\in\{I,X,Y,Z\}$ be the net logical Pauli returned by the decoder after $N$ cycles. From the corresponding tallies, we compute $\overline{p}_{N,\alpha}$, the error probability of \emph{not} ending in $\ket{\alpha}$ after $N$ cycles, having initialized in $\ket{\alpha}$, for $\alpha\in\{0,1,+,-\}$. Thus, 
$\overline{p}_{N,+}=\Pr[G_N\in\{Z,Y\}]$ for $\ket{+}$ and $\overline{p}_{N,-}=\Pr[G_N\in\{Z,Y\}]$ for $\ket{-}$ runs, and 
$\overline{p}_{N,0}=\Pr[G_N\in\{X,Y\}]$ and $\overline{p}_{N,1}=\Pr[G_N\in\{X,Y\}]$ for $\ket{0}$ and $\ket{1}$ runs, respectively.

For $i\in\{x,y,z\}$ define
\beq
C_i(G_N)=
\begin{cases}
+1, & [G_N,\sigma_i]=0\ \ (G_N\in\{I,\sigma_i\}),\\
-1, & \{G_N,\sigma_i\}=0\ \ (G_N\in\{\sigma_j:\ j\neq i\}).
\end{cases}
\eeq
For a logical qubit prepared in the $\pm$ eigenstate of $\sigma_i$, define the binary random variable
\beq
A_{i,\pm}(G_N)\equiv \pm C_i(G_N).
\eeq
Then the conditional magnetization after $N$ cycles is
\bes
\label{eq:mi-ideal}
\begin{align}
& m_i^{(N)}(\pm i) = \mathbb E \bigl[A_{i,\pm}(G_N)\bigr]\\
&\quad = \pm\Bigl(\Pr([G_N,\sigma_i]=0)-\Pr(\{G_N,\sigma_i\}=0)\Bigr)\\
 &\quad = \pm\bigl(1-2 \Pr(\{G_N,\sigma_i\}=0)\bigr).
 \end{align}
\ees

In the case of the four basis states used in our experiments:
\bes
\label{eq:all-mxmz}
\begin{align}
m_x^{(N)}(+x)&=1-2 \Pr[G_N\in\{Z,Y\}]
 = 1-2 \overline{p}_{N,+},\\
m_x^{(N)}(-x)&=-\bigl(1-2 \Pr[G_N\in\{Z,Y\}]\bigr)
 = 2 \overline{p}_{N,-}-1 \\
m_z^{(N)}(+z)&=1-2 \Pr[G_N\in\{X,Y\}]
 = 1-2 \overline{p}_{N,0},\\
m_z^{(N)}(-z)&=-\bigl(1-2 \Pr[G_N\in\{X,Y\}]\bigr)
 = 2 \overline{p}_{N,1}-1 .
\end{align}
\ees

Comparing \cref{eq:cond-mag_ij,eq:all-mxmz}, we obtain
\bes
\label{eq:eff-decoder}
\begin{align}
\label{eq:eff-decoder-1}
[T_{\mathrm{eff}}^{(N)}]_{xx}&=1-\overline{p}_{N,-}-\overline{p}_{N,+}\\
\label{eq:eff-decoder-2}
[T_{\mathrm{eff}}^{(N)}]_{zz}&=1-\overline{p}_{N,1}-\overline{p}_{N,0}\\
\label{eq:eff-decoder-3}
t^{(N)}_{\mathrm{eff},x}&=\overline{p}_{N,-}-\overline{p}_{N,+}\\
\label{eq:eff-decoder-4}
t^{(N)}_{\mathrm{eff},z}&=\overline{p}_{N,1}-\overline{p}_{N,0}.
\end{align}
\ees
These identities are purely combinatorial statements about the decoder’s Pauli-frame tallies and do not assume any particular noise model.

Since $T^{(n)}$ is diagonal for all $n$, we have $[T_{\mathrm{eff}}^{(N)}]_{jj}=\prod_{n=1}^{N}\eta_j^{(n)}$, yielding 
\beq
\label{eq:per-cycle-etas-from-four}
\eta_x^{(N)}=\frac{[T_{\mathrm{eff}}^{(N)}]_{xx}}{[T_{\mathrm{eff}}^{(N-1)}]_{xx}},
\quad \eta_z^{(N)}=\frac{[T_{\mathrm{eff}}^{(N)}]_{zz}}{[T_{\mathrm{eff}}^{(N-1)}]_{zz}},
\eeq
whenever the denominators are nonzero. Substituting \cref{eq:eff-decoder-1,eq:eff-decoder-2} yields \cref{eq:eta}, as claimed. Note that an entirely analogous relation holds for $\eta_y^{(N)}$ if $\overline{\ket{\pm i}}$ runs are available.
\end{proof}

With access restricted to $\overline{\ket{\pm 1}}$ and $\overline{\ket{0/1}}$ experiments, a diagonal-$T$ model is the minimal assumption that makes $[T_{\mathrm{eff}}^{(N)}]_{xx}$ and $[T_{\mathrm{eff}}^{(N)}]_{zz}$ identifiable from our data; off-diagonal PTM elements would require $\overline{\ket{\pm i}}$.

\subsection{Independent $X$, $Z$, and amplitude-damping model, without SPAM}

We now consider a model in which the logical noise channel comprises a combination of logical bit-flips, phase-flips, and a non-unital logical contribution in the form of amplitude damping (AD). We defer to the next subsection the incorporation of SPAM and generalized AD (i.e., when $\overline{\ket{0}}$ is not the equilibrium state), which more fully accounts for all sources of error observed in our experiments. 

Fix a cycle index $n\in\{1,\dots,N\}$ and let
\beq
p_x(n),\quad p_z(n),\quad \gamma_n\in[0,1]
\eeq
denote, respectively, the per-cycle logical $X$-flip, $Z$-flip, and amplitude-damping parameters. Note that these are different from the decoder outputs $\bar{p}_{N,\alpha}$, and we will establish the relation between them below.

Independent $X$/$Z$ flips [with $Y$ occurring with probability $p_x(n)p_z(n)$] have Kraus operators given by 
\bes
\label{eq:Kraus-ops}
\begin{align}
E^{(N)}_0&=\sqrt{[1-p_x(N)][1-p_z(N)]} I,\\
E^{(N)}_1&=\sqrt{p_x(N)[1-p_z(N)]} X,\\
E^{(N)}_2&=\sqrt{p_z(N)[1-p_x(N)]} Z,\\
E^{(N)}_3&=\sqrt{p_x(N)p_z(N)} Y ,
\end{align}
\ees
forming the Pauli channel $\Phi^{\mathrm P}_n$.
Zero-temperature amplitude damping ($\ket{1} \to \ket{0}$ with probability $\gamma_n$, but $\ket{0} \to \ket{0}$ with probability $1$) is described by the Kraus operators
\bes
\label{eq:AD-kraus}
\begin{align}
A^{(n)}_0&=\ketb{0}{0}+\sqrt{ 1-\gamma_n }\ketb{1}{1},\\
A^{(n)}_1&=\sqrt{\gamma_n}\ketb{0}{1},
\end{align}
\ees
forming the channel $\Phi^{\mathrm{AD}}_n$.
We take the per-cycle logical channel as the composition
\beq
\label{eq:cycle-comp}
\Phi_n = \Phi^{\mathrm P}_n\circ\Phi^{\mathrm{AD}}_n,
\eeq
so a valid Kraus set for $\Phi_n$ is $\{ E^{(n)}_\mu A^{(n)}_k\}$ with $\mu=0,1,2,3$ and $k=0,1$.

Define 
\beq
\label{eq:xi-zeta-def}
\xi_n \equiv 1-2p_x(n),\quad
\zeta_n \equiv 1-2p_z(n),
\eeq
both of which are in $[-1,1]$, i.e., they are Bloch-$x$ and $z$ contractions.
The independent-$X/Z$ Pauli channel is unital and diagonal, and in the normalized Pauli basis $\{A_i\}_{i=0}^3$ it has PTM form
\beq
T^{\mathrm P}_n=\mathrm{diag}(\zeta_n, \zeta_n\xi_n, \xi_n),\quad
\mathbf{t}^{\mathrm P}_n=\mathbf{0},
\eeq
while amplitude damping has PTM form
\bes
\label{eq:AD-PTM}
\begin{align}
 T^{\mathrm{AD}}_n&=\mathrm{diag}(\sqrt{1-\gamma_n}, \sqrt{1-\gamma_n}, 1-\gamma_n)\\
\mathbf{t}^{\mathrm{AD}}_n&=(0,0,\gamma_n)^T.
\end{align}
\ees
For the composed map \cref{eq:cycle-comp} we have:
\beq
\label{eq:per-cycle-PTM}
T^{(n)}=T^{\mathrm P}_n T^{\mathrm{AD}}_n,\quad
\mathbf{t}^{(n)}=\mathbf{t}^{\mathrm P}_n+T^{\mathrm P}_n \mathbf{t}^{\mathrm{AD}}_n,
\eeq
and $\mathbf{t}^{\mathrm P}_n=\mathbf{0}$, so explicitly
\bes
\label{eq:Tn-tn-explicit}
\begin{align}
T^{(n)}&=\mathrm{diag} \bigl( \zeta_n\sqrt{1-\gamma_n}, \zeta_n\xi_n\sqrt{1-\gamma_n}, \xi_n(1-\gamma_n) \bigr)\\
\mathbf{t}^{(n)}&=(0,0,\xi_n\gamma_n)^T.
\end{align}
\ees

We now consider the composition over $N$ cycles.
Using \cref{eq:S^N,eq:T-prod,eq:Teff} and the diagonal form \cref{eq:Tn-tn-explicit}, we obtain
\beq
\label{eq:Teff-products}
T^{(N)}_{\mathrm{eff}}=T^{(N)}\cdots T^{(1)}=\mathrm{diag} \bigl(A_x^{(N)},A_y^{(N)},A_z^{(N)}\bigr),
\eeq
where
\bes
\label{eq:Teff-diag}
\begin{align}
\label{eq:Teff-diag-1}
A_x^{(N)}\equiv \bigl[T^{(N)}_{\mathrm{eff}}\bigr]_{xx}
&=\prod_{n=1}^{N}\ \zeta_n \sqrt{1-\gamma_n},\\
\label{eq:Teff-diag-2}
A_y^{(N)}\equiv \bigl[T^{(N)}_{\mathrm{eff}}\bigr]_{yy}
&=\prod_{n=1}^{N}\ \zeta_n\xi_n \sqrt{1-\gamma_n},\\
\label{eq:Teff-diag-3}
A_z^{(N)}\equiv \bigl[T^{(N)}_{\mathrm{eff}}\bigr]_{zz}
&=\prod_{n=1}^{N}\ \xi_n (1-\gamma_n),
\end{align}
\ees
and
\bes
\label{eq:teff-sum}
\begin{align}
\label{eq:teff-sum-1}
\mathbf{t}^{(N)}_{\mathrm{eff}}&=(0,0,t^{(N)}_{\mathrm{eff},z})^T\\
\label{eq:teff-sum-2}
t^{(N)}_{\mathrm{eff},z}&=\sum_{n=1}^{N} \xi_n \gamma_n \prod_{j=n+1}^{N} \xi_j (1-\gamma_j).
\end{align}
\ees
We also define
\beq
\label{eq:betaProd}
\Gamma^{(N)}\equiv \prod_{n=1}^{N}(1-\gamma_n),\quad
\xi^{(N)}\equiv \prod_{n=1}^{N}\xi_n ,
\eeq
so that 
\beq
\label{eq:AyN}
A_y^{(N)}=A_x^{(N)} \xi^{(N)}=\frac{A_x^{(N)} A_z^{(N)}}{\Gamma^{(N)}}.
\eeq
Equating \cref{eq:eff-decoder-1,eq:eff-decoder-2} with \cref{eq:Teff-diag-1,eq:Teff-diag-3}, respectively, and repeating \cref{eq:eff-decoder-4}, we obtain
\bes
\label{eq:RS-tz-from-data}
\begin{align}
\label{eq:RS-tz-from-data-1}
A_x^{(N)}&=1-\bigl( \overline{p}_{N,+}+ \overline{p}_{N,-}\bigr)\\
\label{eq:RS-tz-from-data-2}
A_z^{(N)}&=1-\bigl( \overline{p}_{N,0}+ \overline{p}_{N,1}\bigr)\\
\label{eq:RS-tz-from-data-3}
t^{(N)}_{\mathrm{eff},z}&= \overline{p}_{N,1}- \overline{p}_{N,0}.
\end{align}
\ees
This establishes the promised relation between the decoder outputs and the model parameters. 

Now, using \cref{eq:Fe(N)}, we obtain the entanglement fidelity after $N$ QEC cycles
\bes
\label{eq:Fe-AD-Pauli}
\begin{align}
\Fe(N)&=\frac14\left[ 1 + A_x^{(N)}+A_y^{(N)}+A_z^{(N)} \right] \\
&=\frac14\Bigl(1+A_x^{(N)}\Bigr)\Bigl(1+A_z^{(N)}\Bigr)\notag\\
&\quad +\frac{1}{4}\Bigl[\frac{1}{\Gamma^{(N)}}-1\Bigr]A_x^{(N)}A_z^{(N)},
\end{align}
\ees
where in the second equality we used \cref{eq:AyN}.
Thus, using \cref{eq:RS-tz-from-data}, we finally obtain the entanglement fidelity in terms of the decoder outputs:
\begin{align}
\label{eq:Fe-cycle-dep-exact-main}
&\Fe(N)
=\left(1-\frac{\overline{p}_{N,0}+\overline{p}_{N,1}}{2}\right)
 \left(1-\frac{\overline{p}_{N,+}+\overline{p}_{N,-}}{2}\right) \notag \\
&\quad+\frac{1}{4} \left[\frac{1}{\Gamma^{(N)}}-1\right]
 \bigl(1-\overline{p}_{N,0}-\overline{p}_{N,1}\bigr) 
 \bigl(1-\overline{p}_{N,+}-\overline{p}_{N,-}\bigr) ,
\end{align}
which is a precursor to \cref{eq:ent-fidelity} of the main text.

Since $\Gamma^{(N)}\le 1$, the second term is nonnegative [assuming $p_x(n),p_z(n)\le 1/2$, which is satisfied in our experiments] and we have the lower bound
\beq
\label{eq:bound_fe_AD-general}
\Fe(N)\ \ge\ 
\left(1-\frac{\overline{p}_{N,0}+\overline{p}_{N,1}}{2}\right)
\left(1-\frac{\overline{p}_{N,+}+\overline{p}_{N,-}}{2}\right).
\eeq

Let us now show how we can also express $\Gamma^{(N)}$ in terms of the decoder outputs. 
Using \cref{eq:RS-tz-from-data}, both $A_z^{(N)}$ and $t^{(N)}_{\mathrm{eff},z}$ are obtained directly from the decoder output. It follows from \cref{eq:Teff-diag-3} that
\beq
\label{eq:lambdaN}
\lambda_N\equiv\frac{A_z^{(N)}}{A_z^{(N-1)}}=\xi_N(1-\gamma_N),
\quad A_z^{(0)}\equiv 1.
\eeq
Using \cref{eq:teff-sum-2}, we can write
\bes
\begin{align}
t^{(N)}_{\mathrm{eff},z}&=\xi_N\gamma_N+\bigl[\sum_{n=1}^{N-1}\xi_n\gamma_n \prod_{j=n+1}^{N-1} \xi_j(1-\gamma_j)\bigr] \xi_N(1-\gamma_N)\\
&=\xi_N\gamma_N+t^{(N-1)}_{\mathrm{eff},z}\lambda_N .
\end{align}
\ees
Therefore,
\beq
a_N\equiv t^{(N)}_{\mathrm{eff},z}-\lambda_N t^{(N-1)}_{\mathrm{eff},z}
=\xi_N\gamma_N,
\quad t^{(0)}_{z}\equiv 0.
\label{eq:aN}
\eeq
\Cref{eq:lambdaN,eq:aN} allow us to express $\gamma_N$ in terms of the two decoder-data-derived quantities $a_N$ and $\lambda_N$:
\beq
1-\gamma_N=\frac{1}{a_N/\lambda_N+1} = \frac{1}{t^{(N)}_{\mathrm{eff},z}/\lambda_N-t^{(N-1)}_{\mathrm{eff},z}+1}.
\label{eq:gamma-from-data}
\eeq
Therefore,
\bes
\label{eq:DN-from-data}
\begin{align}
\Gamma^{(N)}&=\prod_{n=1}^{N}(1-\gamma_n)
=\prod_{n=1}^{N}\frac{1}{a_n/\lambda_n+1}\\
&=\prod_{n=1}^{N}\frac{1}{t^{(n)}_{\mathrm{eff},z}\frac{A_z^{(n-1)}}{A_z^{(n)}}- t^{(n-1)}_{\mathrm{eff},z}+1} .
\end{align}
\ees
Upon substituting \cref{eq:RS-tz-from-data-2,eq:RS-tz-from-data-3}, this expresses $\Gamma^{(N)}$ entirely in terms of the decoder outputs at each cycle $n$. Hence, the entanglement fidelity in \cref{eq:Fe-cycle-dep-exact-main} is also given entirely in terms of experimental quantities.

Note that if we choose the opposite ordering convention from \cref{eq:cycle-comp}, i.e., let $\Phi_n=\Phi^{\mathrm{AD}}_n\circ\Phi^{\mathrm P}_n$, the $T^{(n)}$ block in \cref{eq:Tn-tn-explicit} is unchanged (both factors are diagonal), while the affine shift becomes $\mathbf{t}^{(n)}=(0,0,\gamma_n)^T$. In this case, we have $a_N=\gamma_N$ instead, so that $\Gamma^{(N)}=\prod_{n=1}^N (1-a_n)$.

\subsection{Independent $X$, $Z$, and generalized amplitude-damping model, with SPAM}

The problem with the model in the previous subsection is that it predicts that $\overline{p}_{N,-}=\overline{p}_{N,+}$ [combine \cref{eq:eff-decoder-3,eq:teff-sum-1}]. However, this is not what we observe, as can be seen in \cref{fig:err-prob}(c) and (d) of the main text. Therefore, in this subsection we present a model that agrees with the AIC-based conclusion that we need the three-parameter fitting model, and is capable of explaining our data: we account for SPAM, and also replace zero-temperature amplitude damping by nonzero-temperature generalized amplitude damping (GAD). We then derive the entanglement fidelity and show that it can still be computed entirely in terms of the decoder outputs.

\subsubsection{Generalized Amplitude Damping}

We start by replacing zero-temperature AD by nonzero-temperature GAD with
\beq
\gamma_n\in[0,1],\quad u_n\in[-1,1],
\eeq
where $\gamma_n$ is the damping strength and $u_n$ is the equilibrium Bloch-$z$ polarization. The equilibrium excited-state population $p^{\infty}_n=\frac{1+u_n}{2}$; $u_n=+1$ recovers zero-temperature AD toward $\ket{0}$, $u_n=-1$ is ``inverted'' AD toward $\ket{1}$; $|u_n|<1$ corresponds to temperature $T>0$, and $u_n=0$ is the infinite-temperature limit of zero polarization.

A Kraus set for GAD is~\cite{Nielsen:2002aa}
\beq
\begin{aligned}
&K^{(n)}_0=\sqrt{p^{\infty}_n}\begin{pmatrix}1&0\\0&\sqrt{1-\gamma_n}\end{pmatrix},\quad
K^{(n)}_1=\sqrt{p^{\infty}_n}\begin{pmatrix}0&\sqrt{\gamma_n}\\0&0\end{pmatrix},\\
&K^{(n)}_2=\sqrt{q^{\infty}_n}\begin{pmatrix}\sqrt{1-\gamma_n}&0\\0&1\end{pmatrix},\quad
K^{(n)}_3=\sqrt{q^{\infty}_n}\begin{pmatrix}0&0\\\sqrt{\gamma_n}&0\end{pmatrix} ,
\end{aligned}
\eeq
where $q^{\infty}_n\equiv 1-p^{\infty}_n$.
In the normalized Pauli basis $A_0=I/\sqrt2$, $A_{1,2,3}=\sigma_{x,y,z}/\sqrt2$, GAD has PTM form
\bes
\label{eq:GAD-PTM}
\begin{align}
 T^{\mathrm{GAD}}_n&=\mathrm{diag}(\sqrt{1-\gamma_n}, \sqrt{1-\gamma_n}, 1-\gamma_n)\\
\mathbf t^{\mathrm{GAD}}_n&=(0,0,\gamma_n u_n)^T .
\end{align}
\ees

We retain the composition ordering $\Phi_n=\Phi^{\mathrm P}_n\circ\Phi^{\mathrm{GAD}}_n$, so that $T^{(n)}=T^{\mathrm P}_n T^{\mathrm{GAD}}_n$ and $\mathbf{t}^{(n)}=T^{\mathrm P}_n \mathbf{t}^{\mathrm{GAD}}_n$; then
\bes
\label{eq:GAD-per-cycle-PTM}
\begin{align}
T^{(n)}&=\mathrm{diag} \bigl(\zeta_n\sqrt{1-\gamma_n},\zeta_n\xi_n\sqrt{1-\gamma_n},\xi_n(1-\gamma_n)\bigr),\\
\mathbf t^{(n)}&=(0,0, \xi_n\gamma_n u_n)^T.
\end{align}
\ees
After $N$ cycles we again obtain \cref{eq:Teff-products,eq:Teff-diag}, but now \cref{eq:teff-sum-2} becomes
\beq
\label{eq:GAD-Teff}
t^{(N)}_{\mathrm{eff},z}=\sum_{n=1}^{N}\ \xi_n\gamma_n u_n\ \prod_{j=n+1}^{N}\xi_j(1-\gamma_j) .
\eeq
We remark that GAD differs from AD only in the appearance of the factor $u_n$ in \cref{eq:GAD-PTM,eq:GAD-per-cycle-PTM}; setting $u_n=1$ recovers the AD model, specifically \cref{eq:AD-PTM,eq:Tn-tn-explicit}.

\subsubsection{SPAM}
\label{appsec:SPAM}

Next, we model SPAM errors as a non-unital channel that modifies the logical state before QEC cycles, affecting the observed PTM:
\bes
\label{eq:SPAM-affine}
\begin{align}
&t^{(N)}_{\mathrm{obs},i}=t^{(N)}_{\mathrm{eff},i}+b_i[T^{(N)}_{\mathrm{eff}}]_{ii}\\
&[T^{(N)}_{\mathrm{obs}}]_{ii}=a_i[T^{(N)}_{\mathrm{eff}}]_{ii} ,
\end{align}
\ees
with $|a_i|\le 1$ and $|b_i|\le 1-|a_i|$. Imperfect initial-state preparation means $b_i\ne 0$ at $N=0$. Note that in our experimental setup the very first cycle corresponds to logical state preparation starting from the physical all-$\ket{0}$ state, which is why we label our cycles starting at $N=0$; this way $N=0$ is the cycle which we use to calibrate the logical SPAM parameters, and for which by definition we normalize the EF to $1$. Subsequent cycles ($N\ge 1$) are the ones we use to assess fidelity decay.

With $h(\alpha)\in\{x,z\}$ and $s(\alpha)\in\{+1,-1\}$ for $\alpha\in\{0,1,+,-\}$, let $h(+)=h(-)=x$, $h(0)=h(1)=z$, $s(+)=s(0)=+1$, and $s(-)=s(1)=-1$. 
We can then write the four cases in \cref{eq:all-mxmz} as a single identity:
\begin{equation}
\label{eq:pbar-unified}
\overline{p}_{N,\alpha} = \frac{1 - s(\alpha) m_{h(\alpha)}^{(N)} \bigl(s(\alpha) h(\alpha)\bigr)}{2} .
\end{equation}
It is convenient to define the sum/difference combinations
\bes
\label{eq:sum-diff}
\begin{align}
\label{eq:sum-diff-1}
\Sigma_x(N)&\equiv 1-(\overline{p}_{N,+}+\overline{p}_{N,-})\\
\label{eq:sum-diff-2}
\Sigma_z(N)&\equiv 1-(\overline{p}_{N,0}+\overline{p}_{N,1})\\
\label{eq:sum-diff-3}
\Delta_x(N)&\equiv \overline{p}_{N,-}-\overline{p}_{N,+}\\
\label{eq:sum-diff-4}
\Delta_z(N)&\equiv \overline{p}_{N,1}-\overline{p}_{N,0}.
\end{align}
\ees
Note that these quantities are determined purely by the decoder outputs, i.e., they are directly measurable.

Using \cref{eq:pbar-unified} we then have, for $i\in\{x,z\}$:
\bes
\label{eq:sig-del-xz}
\begin{align}
\Sigma_i(N)&=\frac12 \left[m_i^{(N)}(+i)-m_i^{(N)}(-i)\right]\\
\Delta_i(N)&=\frac12 \left[m_i^{(N)}(+i)+m_i^{(N)}(-i)\right] .
\end{align}
\ees
Using \cref{eq:SPAM-affine} and $m_i^{(N)}(\pm i)=t^{(N)}_{\mathrm{obs},i}\pm [T^{(N)}_{\mathrm{obs}}]_{ii}$ [\cref{eq:cond-mag_ij}], we obtain
\bes
\label{eq:SPAM2-cancel}
\begin{align}
\label{eq:SPAM2-cancel-1}
\Sigma_i(N)&=a_i [T^{(N)}_{\mathrm{eff}}]_{ii}=a_i A_i^{(N)}\\
\label{eq:SPAM2-cancel-2}
\Delta_i(N)&=t^{(N)}_{\mathrm{eff},i}+b_i[T^{(N)}_{\mathrm{eff}}]_{ii}= t^{(N)}_{\mathrm{eff},i}+b_iA_i^{(N)}
\end{align}
\ees
Thus, even if the channel has $t^{(N)}_{\mathrm{eff},x}=0$ (true for the Pauli-AD model above), a nonzero SPAM error $b_x$ yields $\Delta_x(N)\ne 0$ and hence
$\overline{p}_{N,-}\neq\overline{p}_{N,+}$. Likewise, finite-temperature GAD ($u_n\neq 0$) gives $t^{(N)}_{\mathrm{eff},z}\neq 0$ [\cref{eq:GAD-Teff}], which yields $\Delta_z(N)\ne 0$ and hence
$\overline{p}_{N,0}\neq\overline{p}_{N,1}$; any additional nonzero SPAM error $b_z$ adds to that splitting. 
Both effects are what we observe in \cref{fig:err-prob}(c) and (d) of the main text.

Since $[T^{(0)}_{\mathrm{eff}}]_{ii}=1$ and $t^{(0)}_{\mathrm{eff},i}=0$, combining \cref{eq:SPAM-affine} and \cref{eq:sig-del-xz} gives us that the SPAM coefficients $a_i$ and biases $b_i$ can be calibrated at $N=0$ as
\beq
\label{eq:SPAM-cal}
a_i=\Sigma_i(0),\quad b_i=\Delta_i(0) ,
\eeq
and as already implied by \cref{eq:SPAM-affine}, we assume that these values are cycle-independent.

\subsubsection{Non-unitality witnesses}

Let us define the SPAM-free ``non-unitality witnesses''
\beq
\label{eq:non-unitality-witness}
\widetilde{\Delta}_i(N) \equiv\ \Delta_i(N)-\frac{b_i}{a_i} \Sigma_i(N),\quad i\in\{x,z\}.
\eeq
Then, by construction, using \cref{eq:SPAM2-cancel,eq:SPAM-cal}, where $A_i^{(N)}=[T^{(N)}_{\mathrm{eff}}]_{ii}$ are the diagonal PTM contractions
and $t^{(N)}_{\mathrm{eff},i}$ are the affine (non-unital) shifts after $N$ cycles,
\beq
\widetilde{\Delta}_i(N)=t^{(N)}_{\mathrm{eff},i}.
\eeq
Thus, any statistically significant deviation of $\widetilde{\Delta}_i(N)$ from zero
is direct evidence of non-unitality along the $i=x,z$ axis. 

The witnesses in \cref{eq:non-unitality-witness} can be computed directly in terms of the decoder-reported logical error probabilities. Rather than reporting these witnesses, to boost the SNR we report their cumulative sums
\beq
\delta_i(N)\equiv\sum_{n=1}^{N}\widetilde{\Delta}_i(n) .
\eeq
A systematic drift away from zero is evidence of non-unitality. This is what we plot in \cref{fig:err-prob}(c) and (d).

\subsubsection{Entanglement fidelity with GAD and SPAM}

The entanglement fidelity is still given, via $\Fe(N)=\frac{1}{4}\bigl(1+\Tr T^{(N)}_{\mathrm{eff}}\bigr)$, by \cref{eq:Fe-AD-Pauli}. 
Using \cref{eq:AyN,eq:SPAM2-cancel-1}, we can write this entirely in terms of the decoder outputs and the two SPAM coefficients $a_x$ and $a_z$:
\bes
\label{eq:Fe-GAD-SPAM}
\begin{align}
\label{eq:Fe-GAD-SPAM-1}
\Fe(N)&=\frac14 \left[ 1+\frac{\Sigma_x(N)}{a_x}+\frac{\Sigma_z(N)}{a_z}
+\frac{\Sigma_x(N) \Sigma_z(N)}{a_x a_z \Gamma^{(N)}}\right] \\
\label{eq:Fe-GAD-SPAM-2}
&=\frac14\Bigl(1+\frac{\Sigma_x(N)}{a_x}\Bigr)\Bigl(1+\frac{\Sigma_z(N)}{a_z}\Bigr)\notag\\
&\quad +\frac{1}{4}\Bigl[\frac{1}{\Gamma^{(N)}}-1\Bigr]\frac{\Sigma_x(N) \Sigma_z(N)}{a_x a_z},
\end{align}
\ees
where the decoder outputs appear via the $\Sigma_i(N)$ [\cref{eq:sum-diff-1,eq:sum-diff-2}]. This is \cref{eq:ent-fidelity} of the main text.

Throughout, $\Fe(N)$ is the entanglement fidelity of the logical
channel after logical state preparation. We calibrate the 
SPAM coefficients at the logical start ($N=0$) via $a_i=\Sigma_i(0)$ and
$b_i=\Delta_i(0)$, and divide them out, so $\Fe(0)=1$ by construction. 

Since $\Gamma^{(N)}\le 1$ and $\Sigma_i(N)\ge 0$ in the physically relevant regime ($p_i(n)\le \frac12$), the second term in \cref{eq:Fe-GAD-SPAM-2} is nonnegative and
\beq
\label{eq:Fe-GAD-SPAM-ineq}
\Fe(N)\ \ge\ \frac14\Bigl(1+\frac{\Sigma_x(N)}{a_x}\Bigr)\Bigl(1+\frac{\Sigma_z(N)}{a_z}\Bigr) \equiv \Felow(N),
\eeq
i.e., the Pauli-only fidelity lower-bounds the Pauli-GAD fidelity. This is \cref{eq:Fe_bounds-lower} of the main text.

Let us now show how we can express $\Gamma^{(N)}$ (almost entirely) in terms of the decoder outputs. We have four cycle-dependent unknowns: $\{\xi_N,\zeta_N,\gamma_N,u_N\}$. At first sight, we have enough information to determine all, since the decoder returns the four probabilities $\{\overline{p}_0,\overline{p}_1,\overline{p}_+,\overline{p}_-\}$. However, in fact we can only form three SPAM-free quantities from these, as can be seen from 
\cref{eq:SPAM2-cancel-2}: since $t_{\mathrm{eff},x}^{(N)}=0$ for the Pauli-GAD channel, we find that $\Delta_x(N)=b_x$, i.e., is pure SPAM. The other three sum/difference combinations, $\{\Sigma_x(N),\Sigma_z(N),\Delta_z(N)\}$, can form SPAM-free quantities that we can use to extract $\{\xi_N,\zeta_N,\gamma_N\}$, as we show next.

As in \cref{eq:lambdaN}, define
\bes
\label{eq:lambdaN-muN-SPAM}
\begin{align}
\label{eq:lambdaN-SPAM}
\lambda_N&\equiv \frac{\Sigma_z(N)}{\Sigma_z(N-1)} =\frac{A_z^{(N)}}{A_z^{(N-1)}}
= \xi_N(1-\gamma_N),\quad A_z^{(0)}\equiv 1\\
\label{eq:muN-SPAM}
\mu_N&\equiv \frac{\Sigma_x(N)}{\Sigma_x(N-1)} =\frac{A_x^{(N)}}{A_x^{(N-1)}}
= \zeta_N\sqrt{1-\gamma_N},\quad A_x^{(0)}\equiv 1
\end{align}
\ees
where in both lines, in the second equality we used \cref{eq:SPAM2-cancel-1} and \cref{eq:Teff-diag-3} in the third equality. Note that the SPAM coefficients $a_z$ and $a_x$ cancel out, so both $\lambda_N$ and $\mu_N$ are SPAM-free quantities.

Using \cref{eq:GAD-Teff}, we can write
\bes
\begin{align}
t^{(N)}_{\mathrm{eff},z}&=\xi_N\gamma_N u_N \\
&\quad +\bigl[\sum_{n=1}^{N-1}\xi_n\gamma_n u_n \prod_{j=n+1}^{N-1} \xi_j(1-\gamma_j)\bigr] \xi_N(1-\gamma_N)\notag\\
&=\xi_N\gamma_N u_N+t^{(N-1)}_{\mathrm{eff},z}\lambda_N , \quad t^{(0)}_{\mathrm{eff},z}\equiv 0 .
\end{align}
\ees
Therefore,
\bes
\label{eq:aN-SPAM}
\begin{align}
a_N&\equiv \tilde{\Delta}_z(N)-\lambda_N\tilde{\Delta}_z(N-1)\\
&=t^{(N)}_{\mathrm{eff},z}-\lambda_N t^{(N-1)}_{\mathrm{eff},z} =\xi_N\gamma_N u_N ,
\end{align}
\ees 
Clearly, $a_N$ is also SPAM-free.

We can solve for $\gamma_N$ using \cref{eq:lambdaN-SPAM,eq:aN-SPAM}, which allows us to express it in terms of the two decoder-data-derived quantities $\lambda_N$ and $a_N$:
\beq
\label{eq:gamma-from-data-SPAM}
1-\gamma_N=\frac{1}{a_N/(\lambda_N u_N)+1}.
\eeq
Thus,
\beq
\label{eq:GammaN}
\Gamma^{(N)}=\prod_{n=1}^{N}\frac{1}{1+a_n/(\lambda_n u_n)},
\eeq
which (apart from $u_n$) expresses $\Gamma^{(N)}$ in terms of the decoder outputs at each cycle $n$: each $\lambda_n$ is given by \cref{eq:lambdaN-SPAM} in terms of the ratio of $\Sigma_z$'s, which are entirely determined by the decoder output via \cref{eq:sum-diff-2}. Likewise, each $a_n$ is given by \cref{eq:aN-SPAM} in terms of the $\Delta_z$'s, which are quantities that are entirely determined by the decoder output via \cref{eq:sum-diff-4}. The SPAM parameters $a_z$ and $b_z$ needed to compute $a_n$ are determined using \cref{eq:SPAM-cal}. 

\begin{figure*}[ht]
\hspace{0cm}{\includegraphics[width=0.9\textwidth]{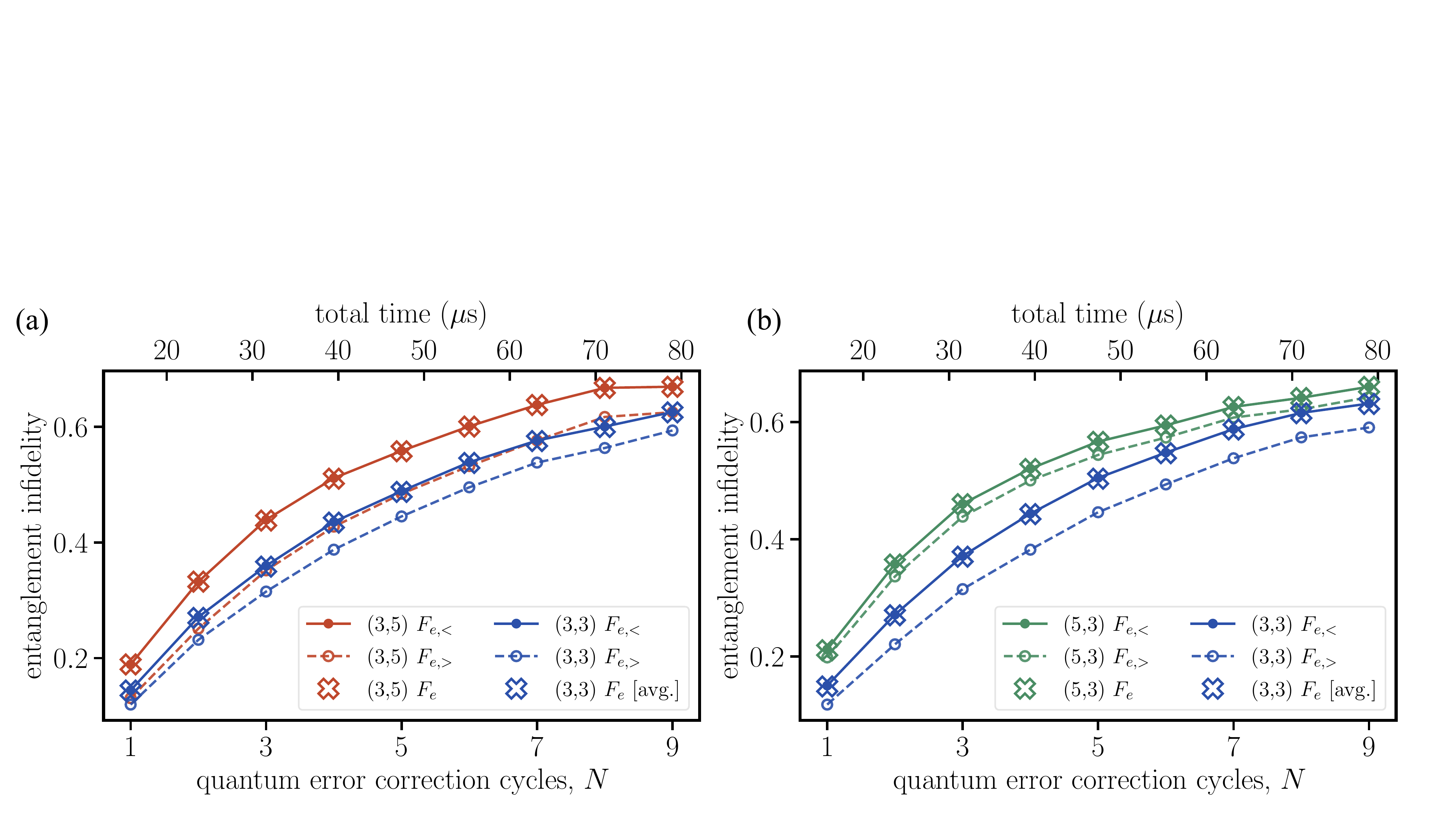}} 
\caption{The entanglement infidelity calculated from \cref{eq:ent-fidelity}, along with the lower [\cref{eq:Fe_bounds-lower}] and upper [\cref{eq:Fe_bounds-upper}] bounds for (a) scaling the $(3,5)$ code and (b) scaling the $(5,3)$ code. Same quantities are also shown for the average of (3,3) codes in each case. In both cases, the entanglement infidelity is very close to the lower bounds, suggesting that non-unital logical errors play a minor role.}
\label{fig-bounds}
\end{figure*}

The only quantity which we do not have enough information to determine from the decoder output (without measuring $\pm Y$) is the equilibrium Bloch-$z$ polarization $u_n$. To estimate it we would need to assume a relation between $\xi_n$ and $\zeta_n$, i.e., a relation between the unital contractions for $X$ and $Z$. 
Without assuming any relation between $\xi_n$ and $\zeta_n$, we can still bound $\Gamma^{(N)}$ from below. Using \cref{eq:lambdaN-muN-SPAM},
\beq
\Gamma^{(N)} = \prod_{n=1}^{N}(1 - \gamma_n) \ge \prod_{n=1}^{N}\xi_n(1 - \gamma_n) =\prod_{n=1}^{N} \lambda_n = \frac{\Sigma_z(N)}{\Sigma_z(0)}.
\eeq

Using this in \cref{eq:Fe-GAD-SPAM-2} and combining with \cref{eq:Fe-GAD-SPAM-ineq},
we obtain the upper and lower bounds
\beq
\label{eq:Fe-band}
\Felow(N) \leq \Fe(N) \leq \Fehigh(N),
\eeq
with
\beq
\label{eq:Fehigh}
\Fehigh(N) = \Felow(N) + \frac{1}{4}\left(\frac{\Sigma_z(0)}{\Sigma_z(N)} - 1\right)\frac{\Sigma_x(N)\Sigma_z(N)}{a_x a_z},
\eeq
which is \cref{eq:Fe_bounds-upper} of the main text.
This uses only SPAM-calibrated decoder outputs.

In practice, since we have no information about $u_n$, we compute the EF via \cref{eq:Fe-GAD-SPAM} by assuming that $u_n=1$, i.e., the AD model. We ensure, via a data regularization procedure, that \cref{eq:Fe-band} is always obeyed, despite shot noise. Namely, we replace 
\beq
(a_N, \lambda_N) \mapsto (\tilde{a}_N, \tilde{\lambda}_N)
\eeq
with
\beq
\tilde{a}_N = \max\{0, a_N\}, \quad \tilde{\lambda}_N = \mathrm{Clip}(\lambda_N, [0, 1]),
\eeq
where $\mathrm{Clip}[x,\{\min,\max\}]$ gives $x$ for $\min\le x\le \max$, $\min$ for $x<\min$ and $\max$ for $x>\max$. Also, if $\tilde{a}_N + \tilde{\lambda}_N > 1$, we scale both by
\beq
\text{scale} = (1 - 10^{-9})/(\tilde{a}_N + \tilde{\lambda}_N).
\eeq
The result of this regularization procedure is plotted in \cref{fig:ent-fidelity}. 

As shown in \cref{fig-bounds}, we find that in all cases $\Fe(N)$ is very close to its lower bound $\Felow(N)$, which means that the effect of logical AD errors is very small.

\subsubsection{Derivation of three-parameter fitting model from the stationary $X/Z+$GAD$+$SPAM model}
\label{sec:derive-3param-fit}

We now simplify the previous cycle-dependent model and assume that the per-cycle logical channel is constant and diagonal in the PTM formalism, with
\beq
\label{eq:r_true_N+1}
r_{N+1}^{\rm true}(i)=t_i+\eta_ir_{N}^{\rm true}(i),\quad i\in\{x,z\},
\eeq
where $r_N^{\rm true}(i)=\Tr(\sigma_i\rho_N)$ is the true Bloch component after $N$ QEC cycles, and
\bes
\label{eq:eta-t}
\begin{align}
\label{eq:etax-tx}
(\eta_x,t_x)&=\bigl(\zeta\sqrt{1-\gamma},0\bigr)\\
\label{eq:etaz-tz}
(\eta_z,t_z)&=\bigl(\xi(1-\gamma),\xi\gamma u\bigr)
\end{align}
\ees
for the standard $X/Z+$GAD composition (GAD followed by Pauli). The analog of \cref{eq:SPAM-affine} is
\beq
\label{eq:SPAM-affine-PTM}
m_i^{\mathrm{obs}(N)}(\pm i)=a_i r_N^{\rm true}(i)+b_i,
\eeq
with cycle‑independent coefficients $a_i$ and $b_i$. We note that this assumption can be in conflict with the strategy of consecutive stabilizer readouts and a final Pauli frame update, as the projected states at each intermediate QEC cycle can lie in different stabilizer subspaces and, in principle, be subject to different effects of noise, e.g., in the case of logical AD. 

For preparation in the $\pm1$ eigenstate of $\sigma_i$ we take $r_0^{\rm true}(i)=\pm 1$. 
Write \cref{eq:SPAM-affine-PTM} for $m_i^{\mathrm{obs}(N+1)}(\pm i)$, substitute \cref{eq:r_true_N+1}, and replace $r_N^{\rm true}(i)$ using \cref{eq:SPAM-affine-PTM}. This yields
\beq
\label{eq:m_i_N+1}
m_i^{\mathrm{obs}(N+1)}(\pm i)=\eta_i m_i^{\mathrm{obs}(N)}(\pm i)+\bigl[a_i t_i+(1-\eta_i)b_i\bigr].
\eeq
We can use this to derive a linear recurrence for $\overline{p}_{N,\alpha}$.

\Cref{eq:pbar-unified} still applies, and can be written as 
\beq
\label{eq:pbar-unified-2}
\overline{p}_{N,\alpha}=\frac{1-s(\alpha) M_{N}(\alpha)}{2}
\eeq
with 
\beq
\label{eq:M_N}
M_{N}(\alpha)\equiv m_{h(\alpha)}^{\mathrm{obs}(N)}(s(\alpha)h(\alpha)) .
\eeq
Thus, using \cref{eq:m_i_N+1}:
\bes
\begin{align}
M_{N+1}(\alpha)&\equiv m_{h(\alpha)}^{\mathrm{obs}(N+1)}(s(\alpha)h(\alpha))\\ 
&= \eta_{h(\alpha)} M_{N}(\alpha)+\bigl[a_{h(\alpha)} t_{h(\alpha)}+(1-\eta_{h(\alpha)})b_{h(\alpha)}\bigr]
\end{align}
\ees

Using \cref{eq:pbar-unified-2} in the form $\overline{p}_{N+1,\alpha}=\frac{1-s(\alpha) M_{N+1}(\alpha)}{2}$ and $M_{N}(\alpha) = (1-2\overline{p}_{N,\alpha})/s(\alpha)$, we obtain
\bes
\label{eq:lin-rec-p}
\begin{align}
\overline{p}_{N+1,\alpha}&=\frac12\Bigl(1-s(\alpha) \Bigl[\eta_{h(\alpha)} M_{N}(\alpha)\notag \\
 &\quad +a_{h(\alpha)} t_{h(\alpha)}+(1-\eta_{h(\alpha)})b_{h(\alpha)}\Bigr]\Bigr)\\
&= \eta_{h(\alpha)}\overline{p}_{N,\alpha}
+\varepsilon_\alpha .
\end{align}
\ees
This difference equation is in the standard inhomogeneous first‑order form, with 
\beq
\label{eq:vareps}
\varepsilon_\alpha \equiv \frac{1-\eta_{h(\alpha)}-s(\alpha)\left[a_{h(\alpha)} t_{h(\alpha)}+(1-\eta_{h(\alpha)})b_{h(\alpha)}\right]}{2}
\eeq
a constant independent of $N$.
The solution of \cref{eq:lin-rec-p} is
\bes
\label{eq:recur-sol}
\begin{align}
\label{eq:fit-three-param}
\overline{p}_{N,\alpha}&=a_\alpha+b_\alpha\Bigl(1-\frac{\varepsilon_\alpha}{a_\alpha}\Bigr)^{N}\\
\label{eq:recur-sol-2}
a_\alpha&=\frac{\varepsilon_\alpha}{1-\eta_{h(\alpha)}}\\
\label{eq:recur-sol-3}
b_\alpha&=\overline{p}_{0,\alpha}-a_\alpha,
\end{align}
\ees
which is exactly the three-parameter fitting model \cref{eq:3param-model} of the main text. We note that, in the following section, we derive this solution from 
the expected update of the logical error probabilities in consecutive QEC cycles. 

From here we can obtain the explicit parameters in the $X/Z+$GAD$+$SPAM case.
For $\alpha\in\{+,-\}$ we have $h(\alpha)=x$, $s(\pm)=\pm 1$, $\eta_x=\zeta\sqrt{1-\gamma}$, and $t_x=0$ [\cref{eq:etax-tx}], so using \cref{eq:recur-sol-2}
\beq
\varepsilon_\alpha=a_\alpha\bigl(1-\zeta\sqrt{1-\gamma}\bigr),
\eeq
and using \cref{eq:vareps}:
\bes
\label{eq:a_alpha-x}
\begin{align}
\label{eq:a_alpha-x-1}
a_\alpha&=\frac{\varepsilon_\alpha}{1-\eta_x}=\frac{1-\eta_x-s(\alpha)(1-\eta_x)b_x}{2(1-\eta_x)}\\
\label{eq:a_alpha-x-2}
&=\frac{1-s(\alpha)b_x}{2} .
\end{align}
\ees
For $\alpha\in\{0,1\}$ we have $h(\alpha)=z$, $s(0)=+1$, $s(1)=-1$,
$\eta_z=\xi(1-\gamma)$, and $t_z=\xi\gamma u$ [\cref{eq:etaz-tz}], so similarly, using \cref{eq:recur-sol-2}
\beq
\varepsilon_\alpha=a_\alpha\bigl[1-\xi(1-\gamma)\bigr] ,
\eeq
and using \cref{eq:vareps}
\bes
\label{eq:a-alpha-z}
\begin{align}
\label{eq:a-alpha-z-1}
a_\alpha&=\frac{\varepsilon_\alpha}{1-\eta_z}\\
\label{eq:a-alpha-z-2}
&=\frac{1-\eta_z-s(\alpha) \left[a_z t_z+(1-\eta_z)b_z\right]}{2(1-\eta_z)}\\
\label{eq:a-alpha-z-3}
&=\frac{1}{2} \left(1-s(\alpha) \left[b_z+\frac{a_z t_z}{1-\eta_z}\right]\right).
\end{align}
\ees

As for $b_\alpha$, at $N=0$ we take $m_i^{\mathrm{true}(0)}(\pm i)=\pm 1$.
Hence, using \cref{eq:SPAM-affine,eq:M_N}
\beq
M_0(\alpha)=m_{h(\alpha)}^{\mathrm{obs}(0)}\bigl(s(\alpha) h(\alpha)\bigr)
=a_{h(\alpha)} s(\alpha)+b_{h(\alpha)}.
\eeq
Substituting this into \cref{eq:pbar-unified-2} gives
\begin{align}
\overline{p}_{0,\alpha}
=\frac{1-s(\alpha) M_0(\alpha)}{2}=\frac{1-a_{h(\alpha)}-s(\alpha) b_{h(\alpha)}}{2},
\end{align}
where we used $s(\alpha)^2=1$.
By definition, $b_\alpha=\overline{p}_{0,\alpha}-a_\alpha$ [\cref{eq:recur-sol-3}],
so that in both cases
\beq
b_\alpha=\frac{1}{2}\Bigl[ 1-a_{h(\alpha)}-s(\alpha) b_{h(\alpha)} \Bigr]-a_\alpha,
\eeq
which is fixed by the measured $N=0$ point, and reduces to $b_\alpha=-a_\alpha$ in the absence of SPAM. More generally, $b_\alpha$ is the initial offset from the steady-state value $a_\alpha$ and sets the amplitude of the transient relaxation to this steady state.

\subsubsection{Recovering $(a_\alpha,b_\alpha,\varepsilon_\alpha)$ from the measured $\overline{p}_{N,\alpha}$}
\label{appsec:recover}

Let us write the stationary three-parameter model in the form
\begin{equation}
\label{eq:3p-model}
\overline{p}_{N,\alpha} = a_\alpha + b_\alpha q_\alpha^{N},
\quad
q_\alpha \equiv 1-\frac{\varepsilon_\alpha}{a_\alpha} .
\end{equation}
Assume that this holds for all $N\ge 0$ and note that since $\overline{p}_{N,\alpha}$ is a probability, we must have $0\le q_\alpha\le 1$. Define 
\beq
\delta p_N^{(\alpha)} \equiv \overline{p}_{N+1,\alpha}-\overline{p}_{N,\alpha}.
\eeq
Then $\delta p_N^{(\alpha)} = b_\alpha(q_\alpha-1) q_\alpha^{N}$, so the ratio of successive differences is constant for any $N\ge 0$ with $\delta p_{N}^{(\alpha)}\ne 0
$:
\begin{equation}
\label{eq:q-from-diffs}
q_\alpha = \frac{\delta p_{N+1}^{(\alpha)}}{\delta p_{N}^{(\alpha)}}
 = \frac{\overline{p}_{N+2,\alpha}-\overline{p}_{N+1,\alpha}}
  {\overline{p}_{N+1,\alpha}-\overline{p}_{N,\alpha}}.
\end{equation}
Once $q_\alpha$ is known [e.g., by averaging \cref{eq:q-from-diffs} over $N$ to reduce noise], $a_\alpha$ and $b_\alpha$ follow from any two consecutive $N\ge 0$ points:
\bes
\label{eq:ab-from-two-points}
\begin{align}
\label{eq:a-from-two-points}
a_\alpha
&=\frac{\overline{p}_{N+1,\alpha}-q_\alpha \overline{p}_{N,\alpha}}{1-q_\alpha}\\
\label{eq:b-from-two-points}
b_\alpha
&= \frac{\overline{p}_{N,\alpha}-a_\alpha}{q_\alpha^{N}}.
\end{align}
\ees
Finally, the logical error rate is
\begin{equation}
\label{eq:eps-from-aq}
\varepsilon_\alpha = a_\alpha(1-q_\alpha).
\end{equation}
Note that this implies $0\le \varepsilon_\alpha\le 1$, confirming that $\varepsilon_\alpha$ is a probability. Also note that by eliminating $b_\alpha$ between the $N=0,1$ versions of \cref{eq:3p-model}, we have
\beq
\overline{p}_{1,\alpha}=\varepsilon_\alpha + q_\alpha \overline{p}_{0,\alpha},
\eeq
so that, without SPAM (i.e., $\overline{p}_{0,\alpha}=0$), we obtain $\overline{p}_{1,\alpha}=\varepsilon_\alpha$. This means that the logical error rate $\varepsilon_\alpha$ is just the error probability after a single cycle in the absence of SPAM.

Because $\delta p_N^{(\alpha)}$ can be small and noisy, a robust estimator is to compute $q_\alpha$ by a linear fit of
$\log|\delta p_N^{(\alpha)}|$ versus $N$:
\beq
\log|\delta p_N^{(\alpha)}| = \mathrm{const} + N\log|q_\alpha|,
\eeq
using only indices with $\delta p_N^{(\alpha)}\ne 0$. We can then obtain $a_\alpha$ from
\cref{eq:a-from-two-points} by least squares over all $N$, and finally
$\varepsilon_\alpha$ from \cref{eq:eps-from-aq}. 

\Cref{eq:3p-model} imposes the constraint that for each $\alpha$, the quantities on the r.h.s. in \cref{eq:q-from-diffs,eq:ab-from-two-points} are independent of $N$. In fact, note that using \cref{eq:recur-sol-2}, we have
$q_\alpha\equiv 1-\frac{\varepsilon_\alpha}{a_\alpha}
=\eta_{h(\alpha)}$. Therefore,
\beq
\label{eq:q+-}
q_+=q_-=\eta_x,\quad q_0=q_1=\eta_z .
\eeq
Moreover, using \cref{eq:a_alpha-x-2} with $s(\pm)=\pm 1$, and similarly using \cref{eq:a-alpha-z-3} with $s(0)=+1$ and $s(1)=-1$, we have 
\beq
\label{eq:a-sum}
a_+ +a_- = a_0 + a_1 = 1 .
\eeq
These are testable predictions of the stationary (cycle-independent noise) model. A plot of the quantities on the r.h.s. in \cref{eq:ab-from-two-points,eq:eps-from-aq} is given in \cref{fig-a-alpha,fig-b-alpha,fig-epsilon-alpha}. The deviation from constancy and the constraints in \cref{eq:q+-,eq:a-sum} is a measure of the accuracy of the stationarity assumption. Clearly, the deviations are significant.

\begin{figure*}[ht]
\hspace{0cm}{\includegraphics[width=1\textwidth]{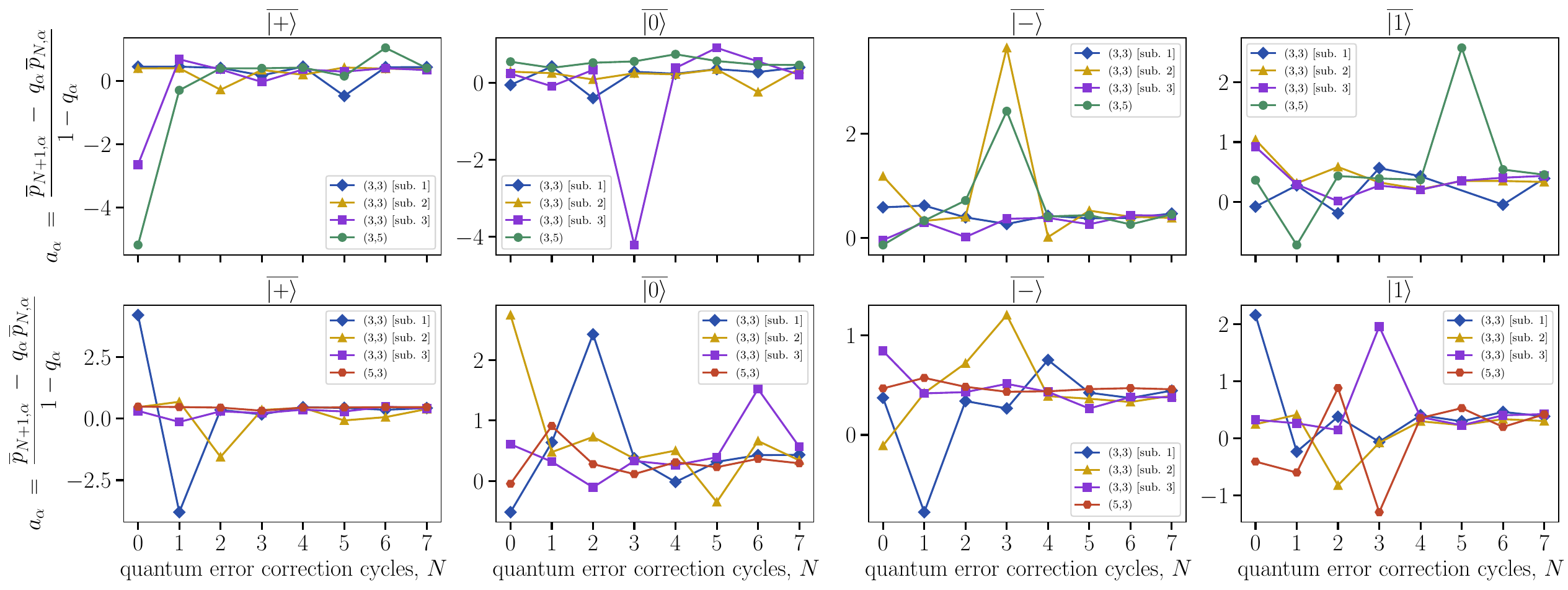}} 
\caption{The $a_\alpha$ quantity from \cref{eq:q-from-diffs} for all codes and sublattices. Any deviation from constancy is inconsistent with stationarity (cycle-independence).} 
\label{fig-a-alpha}
\end{figure*}
\begin{figure*}[ht]
\hspace{0cm}{\includegraphics[width=1\textwidth]{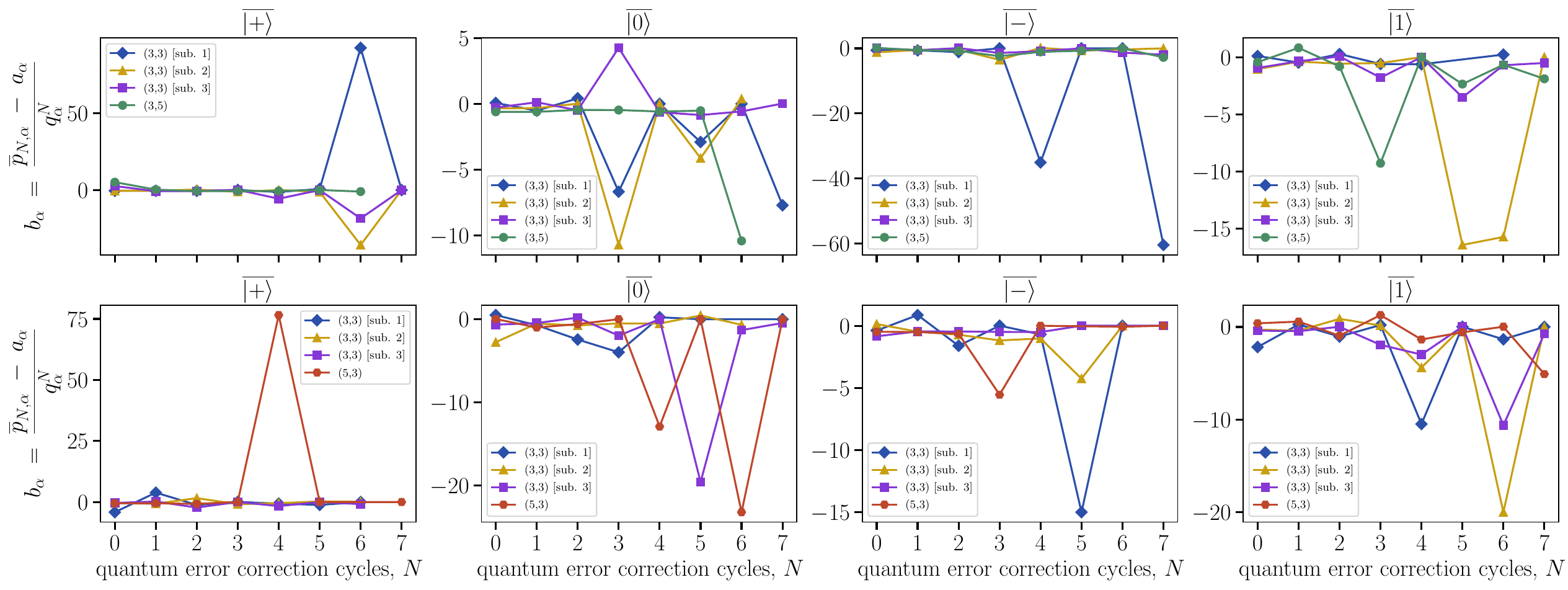}} 
\caption{The $b_\alpha$ quantity from \cref{eq:b-from-two-points} for all codes and sublattices. Any deviation from constancy is inconsistent with stationarity (cycle-independence)} 
\label{fig-b-alpha}
\end{figure*}
\begin{figure*}[ht]
\hspace{0cm}{\includegraphics[width=1\textwidth]{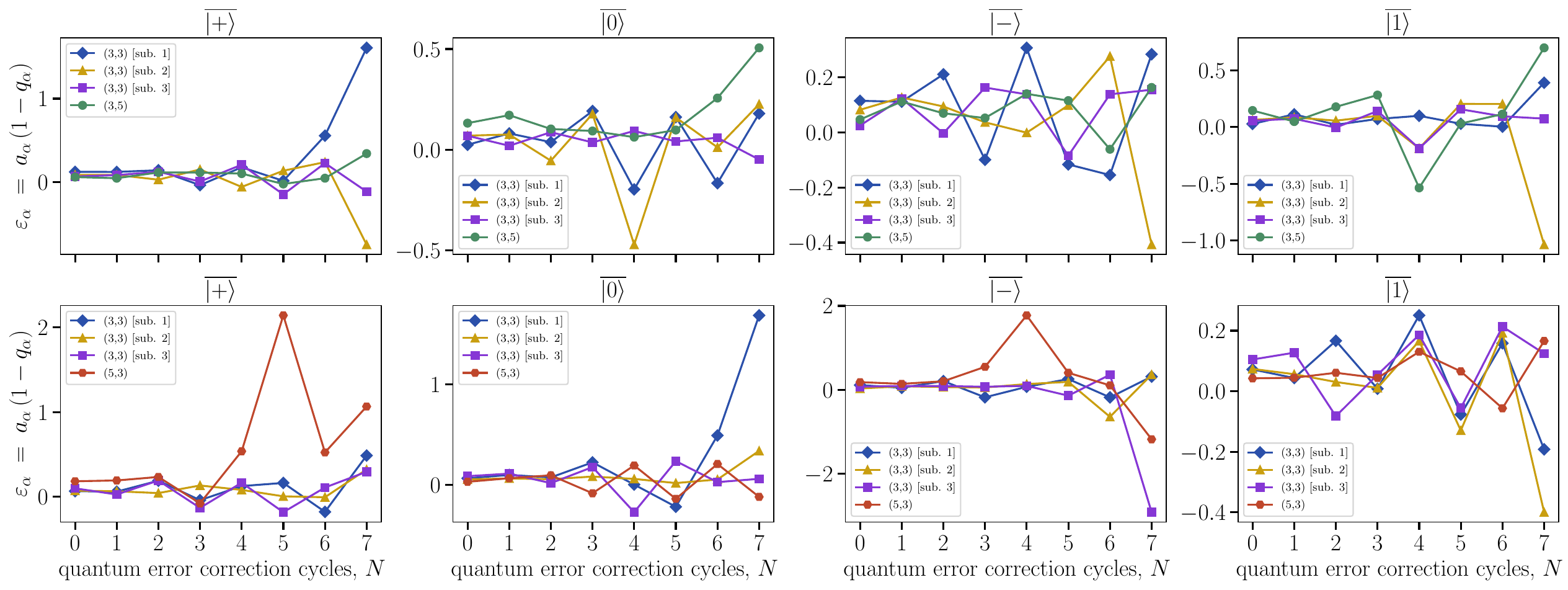}} 
\caption{The $\varepsilon_\alpha$ quantity from \cref{eq:eps-from-aq} for all codes and sublattices. Any deviation from constancy is inconsistent with stationarity (cycle-independence)} 
\label{fig-epsilon-alpha}
\end{figure*}

Recall that in the main text we defined the sublattice-dependent and average suppression factors for each basis state as $\Lambda^{\bd}_{\varepsilon,\alpha,s}\equiv\varepsilon_{\alpha,s}^{(3,3)}/\varepsilon_{\alpha}^{\bd}$ and $\Lambda^{\bd}_{\varepsilon,\alpha}\equiv\varepsilon_{\alpha}^{(3,3)}/\varepsilon_{\alpha}^{\bd}$, as well as the basis and sublattice-independent metric $\Lambda^{\bd}_{\varepsilon}$. These definitions presume stationarity, i.e., cycle-independence of the noise parameters, and in particular of the logical error rates. As \cref{fig-a-alpha,fig-b-alpha,fig-epsilon-alpha} show, the predictions of the stationary model are strongly violated in our case, which in turn renders the various suppression factors unreliable metrics.

\subsubsection{Interpretation of the traditional logical error rate $\varepsilon^{\bd}$}
\label{sec:vareps-interpretation}

The logical error rate reported in Refs.~\cite{GoogleAI2023QEC,GoogleAI2024Nature} is the average of the $X$ and $Z$ rates, i.e., $\varepsilon^d = (\varepsilon^d_+ + \varepsilon^d_-)/2$ for a code with uniform distance $\bd=(d,d)$. This assumes stationarity, unitality, and no SPAM. If we assume stationarity [$\eta_\alpha^{(n)}\equiv \eta_\alpha$ in \cref{eq:F_e-diag}] and the absence of SPAM, but not unitality, we have, using \cref{eq:q+-,eq:a-sum}, $\varepsilon_\pm=a_\pm(1-\eta_x)$, and $\varepsilon_{0/1}=a_{0/1}(1-\eta_z)$ [\cref{eq:recur-sol-2}], that $\varepsilon_+ + \varepsilon_- = 1-\eta_x$ and $\varepsilon_0 + \varepsilon_1 = 1-\eta_z$. Thus [as a special case of \cref{eq:Fe-GAD-SPAM,eq:Fe-GAD-SPAM-ineq}]:
\bes
\label{eq:Fe(1)}
\begin{align}
\label{eq:Fe(1)-1}
\Fe(1) &=\frac14\bigl(1+\eta_x+\eta_y+\eta_z\bigr)\\
\label{eq:Fe(1)-2}
&=\frac14\bigl(1+\eta_x+\eta_z\bigr) + \frac{\eta_x\eta_z}{4(1-\gamma_1)}\\
\label{eq:Fe(1)-3}
&\ge \frac14(1+\eta_x)(1+\eta_z) \\
\label{eq:Fe(1)-4}
&= \bigl(1-\frac{\varepsilon_+ + \varepsilon_-}{2}\bigr)\bigl(1-\frac{\varepsilon_0 + \varepsilon_1}{2}\bigr) .
\end{align}
\ees
This can be viewed as a justification for the ad hoc use of $\varepsilon^d = (\varepsilon^d_+ + \varepsilon^d_-)/2$, in the sense that expanding \cref{eq:Fe(1)-4} to first order in the $\varepsilon_\alpha$, and identifying $2\varepsilon^d$ as the infidelity after a single cycle:
\bes
\label{eq:1-Fe(1)}
\begin{align}
1-\Fe(1) &\approx (\varepsilon^d_+ + \varepsilon^d_- + \varepsilon^d_0 + \varepsilon^d_1)/2 \\
&\approx \varepsilon^d_+ + \varepsilon^d_0 = 2\varepsilon^d.
\end{align}
\ees
It is worth stressing that several assumptions and approximations, going well beyond what was needed to derive \cref{eq:Fe-GAD-SPAM,eq:Fe-GAD-SPAM-ineq}, are needed to obtain this result ($\varepsilon^d_+=\varepsilon^d_-$, $\varepsilon^d_0=\varepsilon^d_1$, unitality, no SPAM, etc.). Moreover, it does not hold beyond a single cycle: for $N>1$ we need to use 
\cref{eq:Fe-cycle-dep-exact-main,eq:bound_fe_AD-general} together with \cref{eq:3p-model}, and the simple interpretation of $(\varepsilon^d_+ + \varepsilon^d_-)/2$ as the average error rate is lost.


\begin{figure*}[ht]
\hspace{0cm}{\includegraphics[width=1\textwidth]{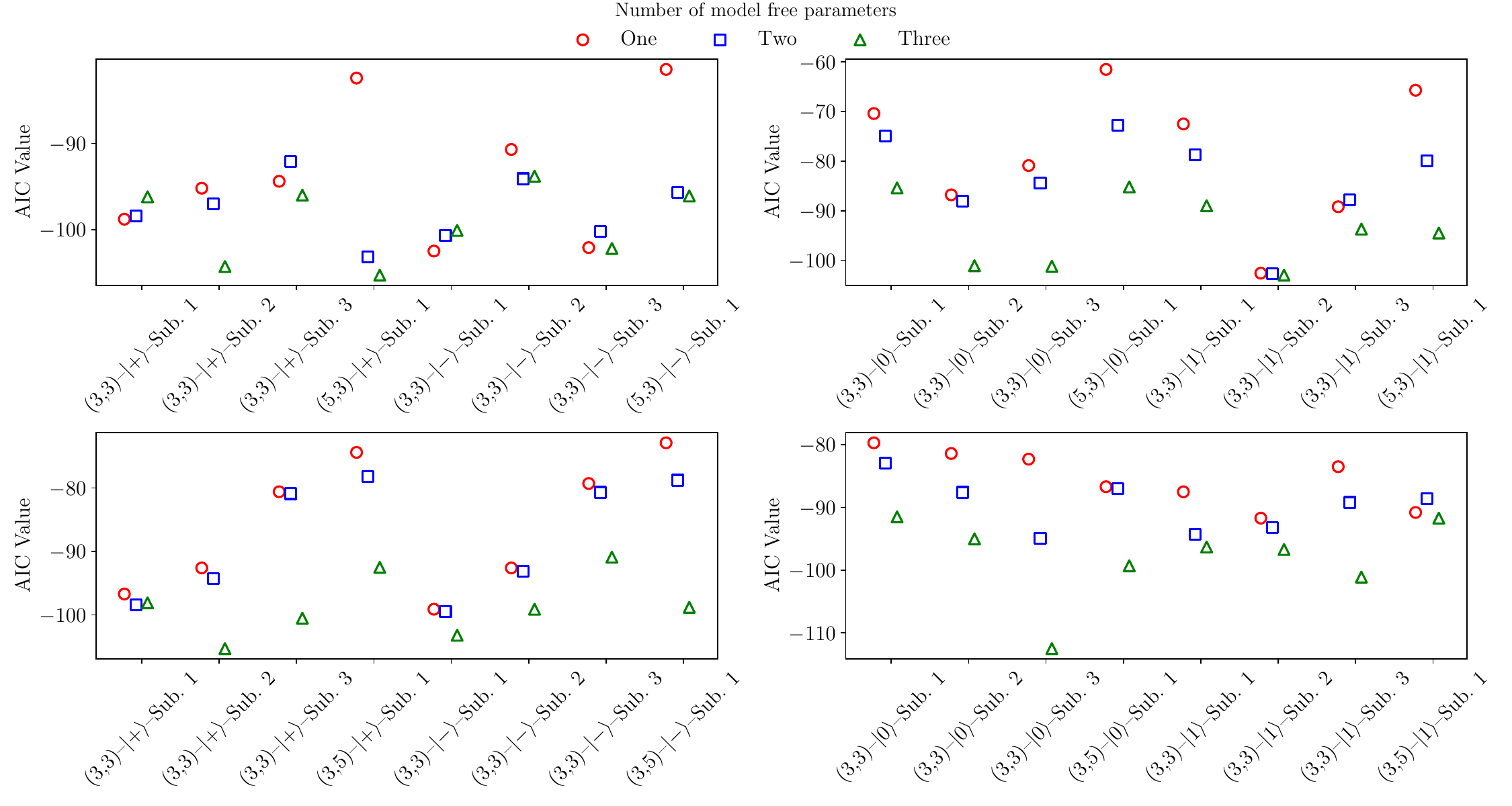}} 
\caption{Akaike Information Criterion (AIC) comparison of the different fitting models on the data from \aachen. The top row shows all the experiments (logical states) involved in scaling (5,3) code  and bottom row shows all the experiments invovled for the (3,5) code. The labels indicate, the code type, measurement basis, and the sublattice used, respectively. A lower AIC value indicates a better fit. We see that overall the three-parameter model has the best score. This means that the assumptions underlying the one- and two-parameter models (the absence of SPAM errors or the logical error limit of $\overline{p}_{N\to\infty}=0.5$) are less consistent with the data.} 
\label{fig-aic}
\end{figure*}

\subsection{Difference equation and three fitting models}
\label{sec:phen_model_supp}

\subsubsection{Derivation of \cref{eq:fit-three-param} via a difference equation}
As noted in the previous subsection, we can derive the three-parameter fitting model \cref{eq:fit-three-param} by generalizing the ad hoc assumptions of the difference equations for the logical error rates in consecutive QEC cycles briefly discussed in~\cite{GoogleAI2024Nature}. 

We assume that logical errors are incoherent and, moreover, independent between consecutive QEC cycles. This gives a recursive expression for the logical error probability after $N$ cycles, considering a quantum memory encoded in the $\alpha\in\{0,1,+,-\}$ basis: 
\begin{equation}
\label{eq:phen-model}
\overline{p}_{N+1,\alpha}=\bigl(1-\beta_\alpha\bigr)\overline{p}_{N,\alpha}+\varepsilon_{\alpha}\bigl(1-\overline{p}_{N,\alpha}\bigr),
\end{equation}
where $\varepsilon_\alpha$ is the per-cycle logical error rate, and $\beta_\alpha$ is the per-cycle recovery rate (we use the term ``rate'' since these are per-cycle quantities, even though both quantities are dimensionless probabilities). That is, $\varepsilon_\alpha$ represents the probability that a QEC cycle causes a logical error in the logical $\alpha$-basis state, and $\beta_\alpha$ is the probability of recovering from a previous logical error.
This inhomogeneous first-order linear difference equation can be solved exactly to yield \cref{eq:fit-three-param},
with $a_{\alpha}={\varepsilon_\alpha}/({\varepsilon_\alpha+\beta_\alpha})$ (assuming $\varepsilon_\alpha+\beta_\alpha<1$ to ensure convergence), and an integration constant $b_\alpha$. 

For example, if logical errors arise solely from a non-unital amplitude damping channel with dimensionless error rate $\gamma=1-\exp(-\Gamma t)=1-\exp(-t/T_1)$ at the logical level, then for an initial state $\overline{\ket{1}}$ that decays under the channel, we have $\beta_{1} = 0$ and $\varepsilon_1 = \gamma$, leading to $a_1=\lim_{N\to\infty}\overline{p}_{N,1} = 1$. In contrast, if the initial memory state is $\overline{\ket{0}}$, which is unaffected by logical amplitude damping, then $\beta_0 = \gamma$ and $\varepsilon_0 = 0$, resulting in $a_0=\lim_{N\to\infty}\overline{p}_{N,0} = 0$. 

Our derivation above is based on reasoning about the effect of non-Pauli errors such as amplitude damping at the logical level, as well as the generic effect of SPAM errors at the logical level.
The derivation of \cref{eq:fit-three-param} from a CPTP map model in \cref{sec:derive-3param-fit} is significantly longer and more elaborate, but it provides explicit expressions for the phenomenological parameters $(a_\alpha,\b_\alpha,\varepsilon_\alpha)$ in terms of the parameters of the CPTP map, which assumes $X$ and $Z$ Pauli errors, generalized amplitude damping, and SPAM errors.

\subsubsection{Assessment of the three fitting models using the Akaike Information Criterion}
\label{appsec:AIC-assessment}

We now identify the additional assumptions that need to be made to arrive at previous logical-error fitting models~\cite{GoogleAI2021Nature,GoogleAI2023QEC,GoogleAI2024Nature}. 
If logical SPAM errors are neglected, then $\overline{p}_{N=0,\alpha}=0$, since in that case the quantum memory at $N=0$ cycles has not had time to accumulate any errors from the environment or the controls. This sets $b_\alpha=-a_\alpha$ and therefore, without SPAM, the logical error probability reduces to a two-parameter model
\begin{equation} 
\label{eq:fit-two-param}
 \overline{p}_{N,\alpha} = a_\alpha(1- \bigl(1 -\varepsilon_\alpha/a_\alpha\bigr)^N).
\end{equation}

 The asymptotic limit $a_\alpha=\lim_{N\to\infty}\overline{p}_{N,\alpha}$ is entirely determined by the ratio between $\varepsilon_\alpha$ and $\beta_\alpha$, which itself depends on the noise channel and the initial state. Assuming a logical Pauli channel, then if the same Pauli error occurs twice, it cancels out. Therefore, the recovery and error rates must be equal regardless of the initial encoded state, i.e., $\beta_\alpha=\varepsilon_\alpha$. In this case, $a_\alpha=\lim_{N\to\infty}\overline{p}_{N,\alpha}=1/2$ giving
\begin{equation} 
\label{eq:fit-single-param}
 \overline{p}_{N,\alpha} = \frac{1}{2}- \frac{1}{2}\bigl(1 -2\varepsilon_\alpha\bigr)^N,
\end{equation}
which coincides with the single-parameter exponential fitting model of Refs.~\cite{GoogleAI2021Nature,GoogleAI2023QEC,GoogleAI2024Nature}. Hence, 
the latter assume a logical Pauli channel with independent logical errors per QEC cycle and no SPAM. 
\Cref{eq:fit-single-param} corresponds to the probability of having an odd number of errors after $N$ rounds~\cite{obrien2017npj}, and also agrees with the (infinite-temperature) steady state of unital channels~\cite{Lidar:2019aa}.

\label{sec:fitting_supp}

An obvious requirement for estimating the error suppression factor $\Lambda_\varepsilon^d$ from the per-cycle logical error rate, $\varepsilon_d$, is to infer the latter from the experimental data of consecutive QEC cycles on a quantum memory benchmark. This can be achieved through a fitting procedure considering some of the above expressions; and different experiments have employed varying fitting strategies, focusing on the single-parameter fitting model. For example, Ref.~\cite{GoogleAI2021Nature} fits the logical error rate to an exponential decay, while Ref.~\cite{GoogleAI2023QEC} fits the logical fidelity to an exponential. In contrast, Ref.~\cite{GoogleAI2024Nature} applies a logarithmic-scale fit to the logical fidelity. These differences reflect a clear dependence on the nature of the data, with fitting choices somewhat arbitrarily tailored to optimize interpretability or accuracy in a given experimental context. 

Rather than making an arbitrary choice, we fit our data to the one-, two-, and three-parameter versions of \cref{eq:fit-single-param,eq:fit-two-param,eq:fit-three-param}, respectively. To make an unbiased selection, we then evaluate the quality of each fit by computing and comparing their Akaike Information Criterion (AIC)~\cite{Akaike1974} scores. The AIC estimates the relative quality of statistical models by balancing the accuracy of fit against model complexity, penalizing models with more free parameters to avoid overfitting. A lower AIC value indicates a more parsimonious fit. \cref{fig-aic} shows the resulting AIC scores of the three different models. 

We find that the AIC score for each fit varies depending on the code, measurement basis, and sublattice. Although in some instances the AIC scores for the three models fall within a similar range, in most cases the three-parameter model clearly yields the best result. This implies the presence of both logical SPAM errors and logical AD, which violates the assumptions underlying the single- and two-parameter models. 
While AIC prefers a stationary three‑parameter model (indicative of non‑unital features and SPAM when stationarity is enforced), the EF analysis, which allows cycle‑dependence and removes SPAM, shows the EF tracks its Pauli lower bound closely (\cref{fig-bounds}). Thus, any effective non‑unital component (e.g., logical AD) is subdominant for the EF in our datasets. We therefore rely on EF, rather than stationary fits, to assess subthreshold behavior.

\cref{tab-fits} summarizes the suppression factors $\Lambda_{\varepsilon}^{\bd}$ we obtain from each model. Here, as in \cref{tab-res-aachen},
$\Lambda^{\bd}_{\varepsilon} = \frac{\varepsilon^{(3,3)}}{\varepsilon^{\bd}}$ where $\varepsilon^{(3,3)} = \frac14\sum_{\alpha\in\{+,-,0,1\}}\varepsilon_\alpha^{(3,3)}$ and $\varepsilon^{\bd}=\frac14\sum_{\alpha\in\{+,-,0,1\}}\varepsilon_\alpha^{\bd}$, with $\bd\in\{(3,5),(5,3)\}$.
The results clearly show that the values strongly depend on the chosen fitting model, underscoring the need for an evidence-based choice of the most appropriate model.

\begin{table*}
\centering
\renewcommand{\arraystretch}{1.5}
\setlength{\tabcolsep}{12pt}
\begin{adjustbox}{max width=\textwidth}
\begin{tabular}{c|ccc||ccc}
\hline\hline
\multirow{2}{*}{\text{backend}}
& \multicolumn{3}{c||}{$d_x = 3,~d_z = 5$}
 & \multicolumn{3}{c}{$d_x = 5,~ d_z = 3$} \\
\cline{2-7}
 & 1 parameter
 & 2 parameters
 & 3 parameters
 & 1 parameter
 & 2 parameters
 & 3 parameters \\
\hline\hline
\texttt{aachen} & $0.79$ & $0.80$ & $0.77$ & $0.70$ & $0.67$ & $0.67$ \\
\texttt{marrakesh} & $0.80$ & $0.82$ & $0.75$ & $0.73$ & $0.75$ & $0.70$ \\
\texttt{pittsburgh} & $0.73$ & $0.70$ & $0.74$ & $0.64$ & $0.65$ & $0.62$ \\
\hline
\end{tabular}
\end{adjustbox}
\caption{Comparison of $\Lambda_\varepsilon^{\bd}$ metrics for two different QPUs used for scaled surface codes using error models with a single [\cref{eq:fit-single-param}], two [\cref{eq:fit-two-param}], and three [\cref{eq:fit-three-param}] parameters. These results show that the $\Lambda$ value depends on the number of fitting parameters. Moreover, the values obtained from the three-parameter model are the lowest or tied for lowest [except for \pittsburgh\ with the $(3,5)$ code], showing that the model with the better fit is the least supportive of subthreshold scaling. All uncertainties (omitted) are $<2\times 10^{-3}$.} 
 \label{tab-fits}
\end{table*}

\begin{figure}[ht]
\includegraphics[width=.9\columnwidth]{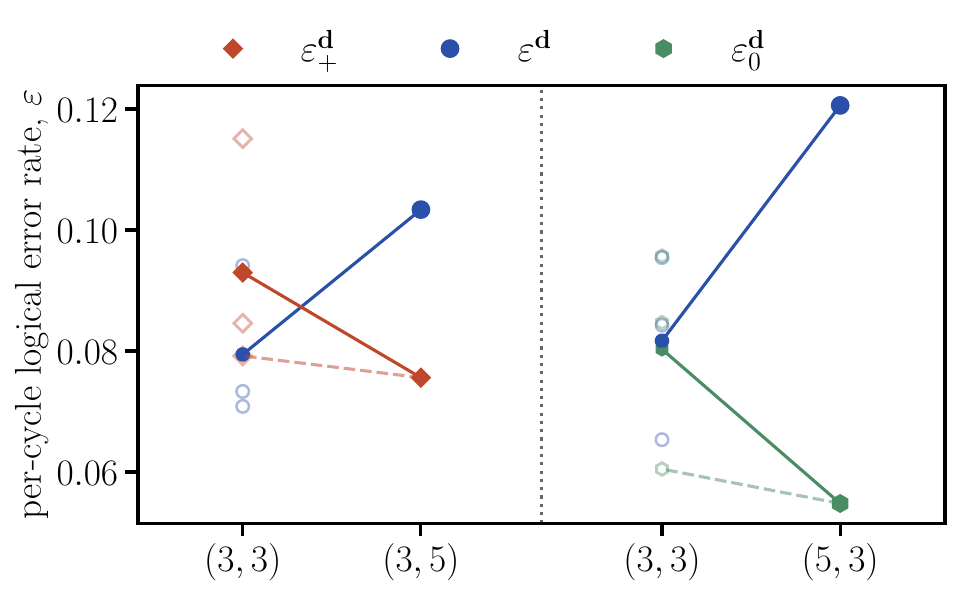}
\caption{Per-cycle logical error rate.
The values shown are obtained by fitting the data for the $\overline{\ket{+}}$ and $\overline{\ket{0}}$ states separately. 
The green hexagons and red diamonds represent the logical error rates $\varepsilon^{(3,3)}_\alpha$ (averaged over the three sublattices), $\varepsilon^{(5,3)}_\alpha$, and $\varepsilon^{(3,5)}_\alpha$, for each $(d_x,d_z)$ code and basis state $\overline{\ket{\alpha}}$, for $\alpha=0$ (green) and $\alpha=+$ (red). 
The negative slopes of the solid red and green lines show that each error type is suppressed for the sublattice average by increasing the corresponding code distance by $2$. 
This remains true with respect to the best $(3,3)$ sublattice of each scaled code, as shown by the dashed lines. 
Also shown, in blue, is the average logical error rate $\varepsilon^{\bd}=\frac14\sum_{\alpha\in\{+,-,0,1\}}\varepsilon_\alpha^{\bd}$.
Blue open markers represent $\varepsilon^{(3,3)}_s$, the corresponding rate for sublattice $s$; the left blue filled marker represents $\varepsilon^{(3,3)}$, the rate after an average over the three sublattices. The right blue filled markers represent $\varepsilon^{(3,5)}$ and $\varepsilon^{(5,3)}$. The positive slope shows that there is no code scaling advantage according to this metric.}
\label{fig:error-rates}
\end{figure}

\begin{figure*}[ht]
\hspace{0cm}{\includegraphics[width=\textwidth]{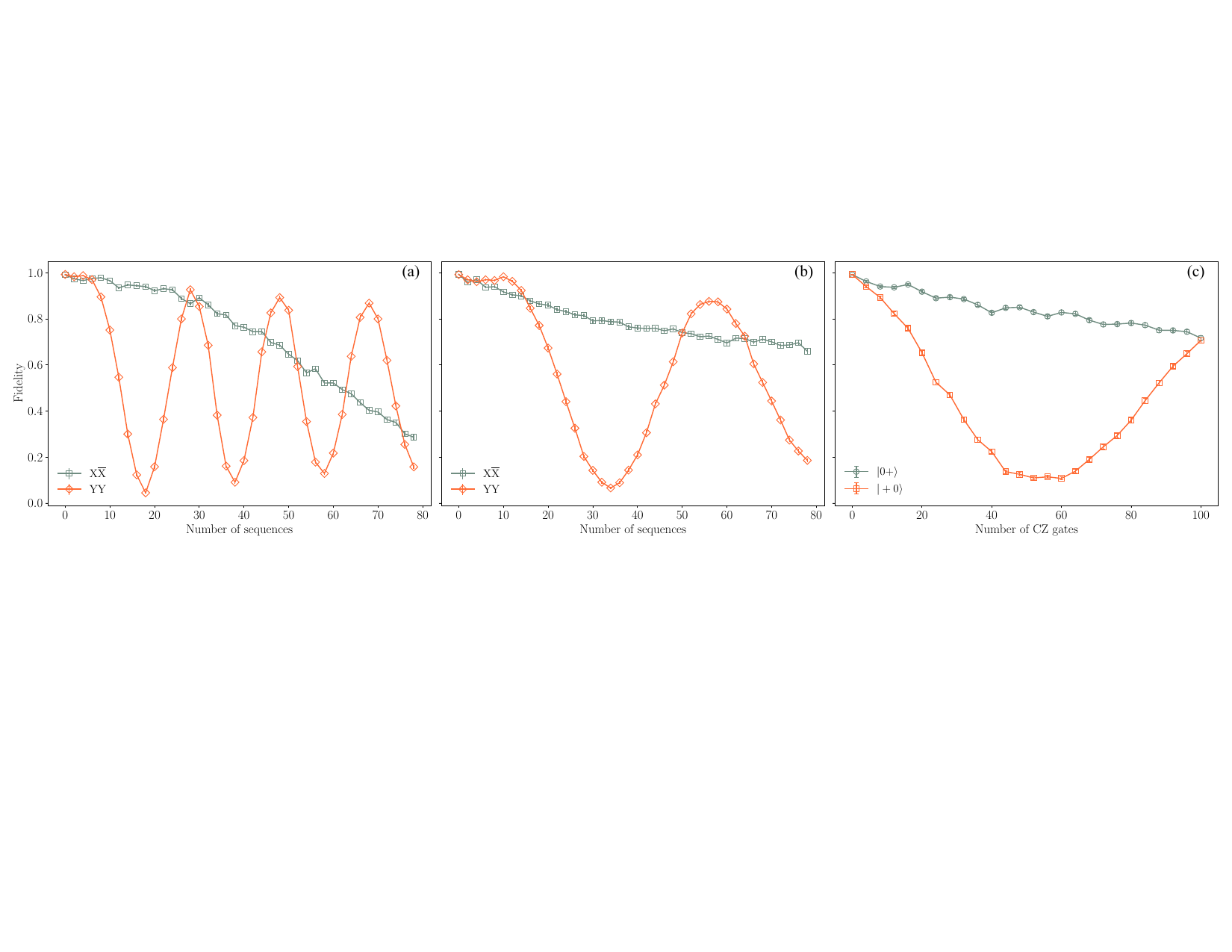}} 
\caption{Diagnostic tests were performed on qubits (a) 21, (b) 22, and (c) the coupler implementing the CZ gate between them. The sequences run in (a) $X\overline{X}$ and (b) $YY$ are sensitive to phase errors and rotation errors, respectively. The presence of pronounced oscillations indicates significant coherent errors. In (c), the CZ gate is between qubits 21 and 22 is applied to two different initial states, $\ket{0+}$ and $\ket{+0}$, which should have yielded identical fidelity curves. The difference seen reveals the presence of a significant coherent error originating from the CZ gate itself. } 
\label{fig-db}
\end{figure*}

The results of our three-parameter fits are used to calculate the error rates $\varepsilon_+^{(3,5)}$ and $\varepsilon_0^{(5,3)}$ shown in \cref{fig:error-rates} in red and green, respectively. We fit each bootstrapped $\overline{p}_N$ dataset separately and take the mean of the fitted parameters across all replicates. Therefore, $\varepsilon$ is computed as the mean over all the fitted values from each bootstrapped sample and the associated uncertainty is computed as the standard deviation across all bootstrapped samples. Using this methodology, we find the errors in $\varepsilon$ are on the order of $10^{-5}$ across all datasets, bases, and codes. We find that increasing the code distance along the direction of the error affecting the memory state, namely, using the $(5,3)$ code for $\overline{\ket{0}}$ 
and the $(3,5)$ code for $\overline{\ket{+}}$, leads to improved performance with respect to both the average and the best $(3,3)$ code. The improvement is quantified by the suppression factors reported in \cref{tab-res-aachen} of the main text. However, there is no improvement in terms of the state-averaged logical error rate $\varepsilon^{\bd}$, which coincides with the entanglement infidelity after a single cycle, computed using a simple model of independent, constant Pauli errors [\cref{eq:1-Fe(1)}].


\section{Error budget from coherent errors}
\label{appsec:coherent-errors}

In addition to decoherence, other error mechanisms, particularly coherent errors introduced by single- and two-qubit gates used in QEC circuits, can significantly impact code performance. These coherent errors have been extensively studied in the context of surface codes~\cite{Mrton2023Quantum}. Recent experimental work has shown that for single-qubit gates, such errors can be effectively characterized using \textit{deterministic benchmarking} (DB)~\cite{tripathi2024DB}, a technique that amplifies coherent errors using a small set of DD-like experiments, drastically reducing the effort required to make such errors detectable compared to randomized benchmarking~\cite{Knill:2008aa,Magesan:2011kx}, which often misses them entirely.

For a single-qubit transmon, coherent errors are most simply modeled as arising from a driven two-level system, analyzed in the drive frame under the rotating wave approximation. Letting $\{\sigma_i\}$ represent the set of Pauli matrices, the time-dependent system Hamiltonian generating single-qubit $X$ rotation gates is given by:

\bes
 \label{eq:Hsys}
 \begin{align}
 H(t) &= \epstot(t) \frac{\sigma_x}{2} + \Herr,\\
 \Herr &= \epserr \frac{\sigma_x}{2} + \delerr \frac{\sigma_z}{2}.
 \end{align}
\ees
Here, $\epstot(t)$ is the intended time-dependent control field and $\epserr$ and $\delerr$ are errors. Ideally, $\epserr = \delerr = 0$. In reality, both are present and give rise to rotation and phase errors

\begin{equation}
\dth\equiv \epserr t_g\ , \quad \dphi \equiv \frac{\delerr}{\bar{\eps}} = \frac{1}{\theta}\delerr t_g
\label{eq:dth-dphi}
\end{equation}
respectively, with $t_g$ denoting the gate duration, 
\beq
\label{eq:theta}
\theta \equiv \int_0^{t_g}\epstot(t)dt, 
\eeq
and $\bar{\varepsilon} = \theta/t_g$ the 
average pulse amplitude. An open system single-qubit gate includes both rotation and phase errors, as well as a system-bath interaction term that is always present while the gate is being generated. Ref.~\cite{tripathi2024DB} 
identified simple DD sequences that are susceptible either to phase errors
or to rotation errors. Such sequences can be used to deterministically detect the presence of these errors (in contrast to randomized benchmarking). 

Two-qubit gates, in our case, are \texttt{CZ} gates implemented using tunable couplers~\cite{Stehlik2021PRL}. The effective Hamiltonian governing these gates is given by:

\begin{equation}
 H_{\mathrm{CZ}}=\frac{\pi}{4} (ZZ-IZ-ZI).
\end{equation}
Each component of this Hamiltonian may be susceptible to coherent errors, and a full investigation of their interplay in \texttt{CZ} gates is beyond the scope of this work. For the present discussion, it suffices to note that repeated application of the \texttt{CZ} gate amplifies coherent errors, leading to characteristic oscillation patterns, as demonstrated in Ref.~\cite{Kjaergaard2022PRX}.

As an example, we study the effects of these single- and two-qubit coherent errors on qubits 21 and 22 of \marrakesh.
The results are shown in \cref{fig-db}. Panels (a) and (b) display deterministic benchmarking sequences $X\overline{X}$ and $YY$ run on each qubit, which are designed to amplify phase and rotation errors, respectively. Here $X$ and $\overline{X}$ denote $\theta=\pi$ and $\theta=-\pi$, respectively, in \cref{eq:theta}. The pronounced oscillations observed in both cases indicate strong coherent errors. Panel (c) shows the results of applying the \texttt{CZ} gate between qubits 21 and 22 from two different initial states, revealing coherent errors in gate terms that simultaneously affect both qubits. As seen from the results, while the $\ket{0,+}$ initial state exhibits no oscillations on the timescale of the experiment, the $\ket{+,0}$ state does, suggesting that a coherent error affects the first qubit as a result of the \texttt{CZ} gate operation. While most of the DD sequences used in our experiments (such as the UR$_m$ and RGA sequences~\cite{Genov2017PRL,Quiroz2013PRA}) are inherently robust against coherent errors and, as such, slow down the buildup of coherent noise, 
as the number of QEC rounds increases, this noise becomes more pronounced due to amplification, thus degrading performance and causing logical errors, as observed in \cref{fig-DD-ratio}(a) in the main text.

\begin{figure*}[ht]
\hspace{0cm}{\includegraphics[width=1\textwidth]{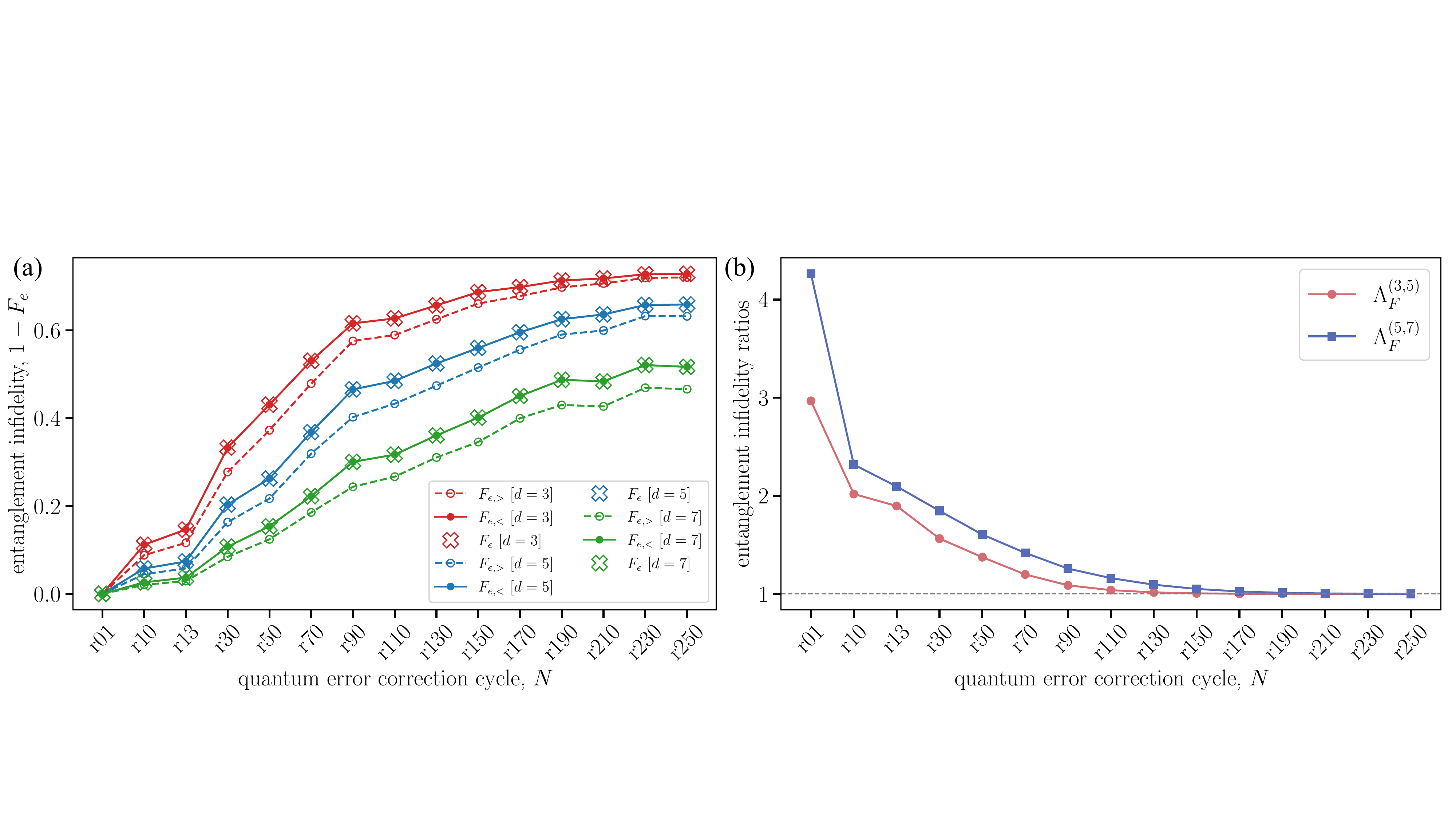}} \caption{Entanglement fidelity analysis applied to the Willow processor data~\cite{GoogleAI2024Nature}. (a) The result from \cref{eq:ent-fidelity}, combining the data from $X$ and $Z$ bases (i.e., just the $\overline{\ket{+}}$ and $\overline{\ket{0}}$ initial states, as data for $\overline{\ket{-}}$ and $\overline{\ket{1}}$ is not available). Our EF method avoids model fitting and arithmetic averaging of error rates. In this case, the EF is computed by assuming $\overline{p}_{N,0}=\overline{p}_{N,1}$ and $\overline{p}_{N,-}=\overline{p}_{N,+}$, in which case the EF coincides with its lower bound, \cref{eq:Fe_bounds-lower}. (b) The EF metric from \cref{eq:EF-metric}, for when the code is scaled from $d=3$ to $d=5$ and $d=5$ to $d=7$. A per cycle ratio $>1$ indicates an advantage achieved by the scaled code for that particular cycle. Interestingly, the $5\to 7$ ratio maintains a deeper advantage than the $3\to 5$ ratio. Note that for Ref.~\cite{GoogleAI2024Nature}, $1$ stabilizer round = $1$ QEC cycle, hence the label ``r'' on the horizontal axis.} 
\label{fig-willow-res}
\end{figure*}

\begin{figure*}[ht]
\hspace{0cm}{\includegraphics[width=.9\textwidth]{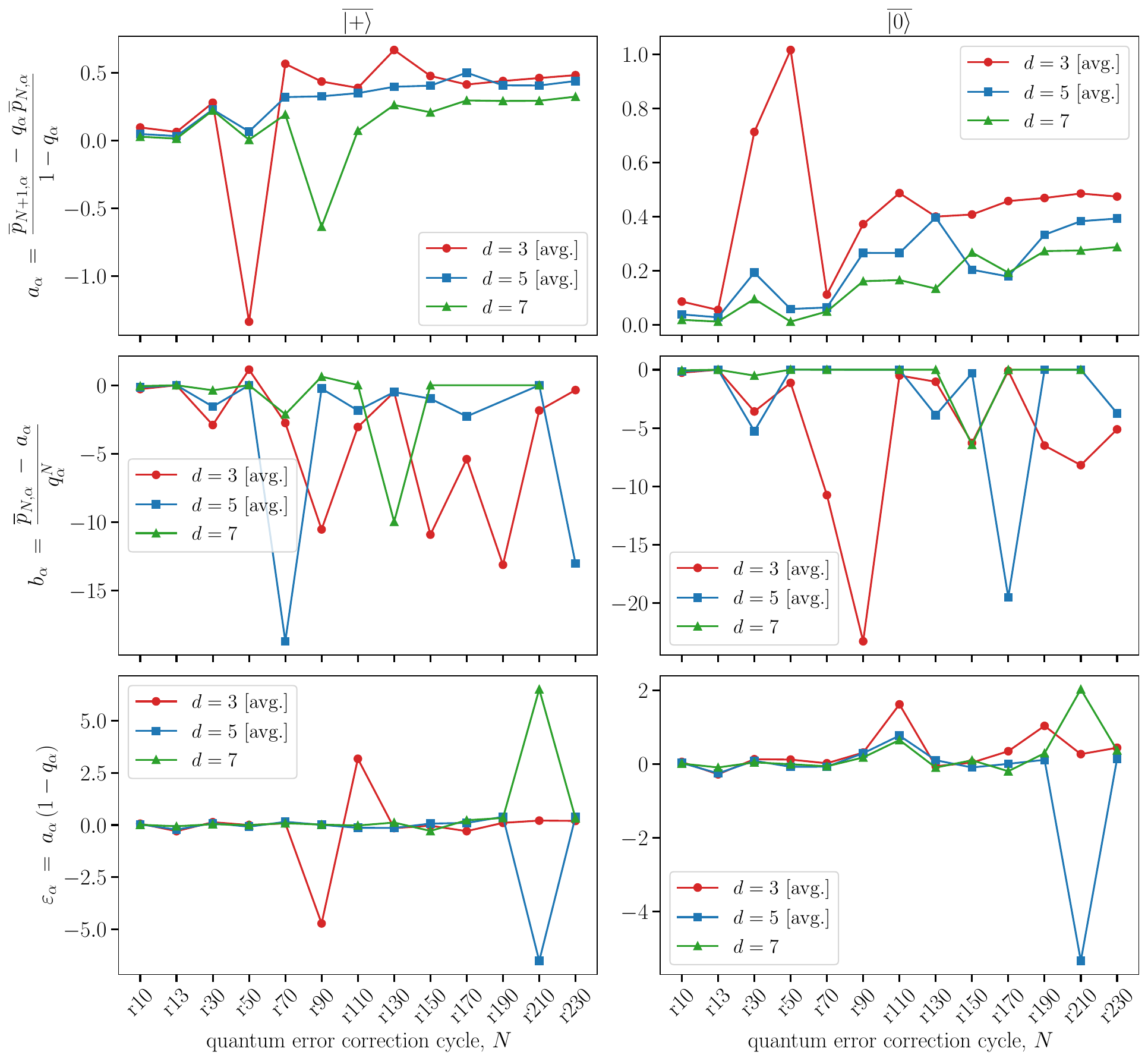}} 
\caption{The $a_\alpha$, $b_\alpha$, and $\varepsilon_\alpha$ quantities from \cref{eq:ab-from-two-points,eq:eps-from-aq} for all three codes and averaged over sublattices for $d=3$ and $d=7$. Any deviation from constancy is inconsistent with stationarity (cycle-independence)} 
\label{fig-epsilon-alpha-Willow}
\end{figure*}

\begin{figure*}[ht]
\hspace{0cm}{\includegraphics[width=.9\textwidth]{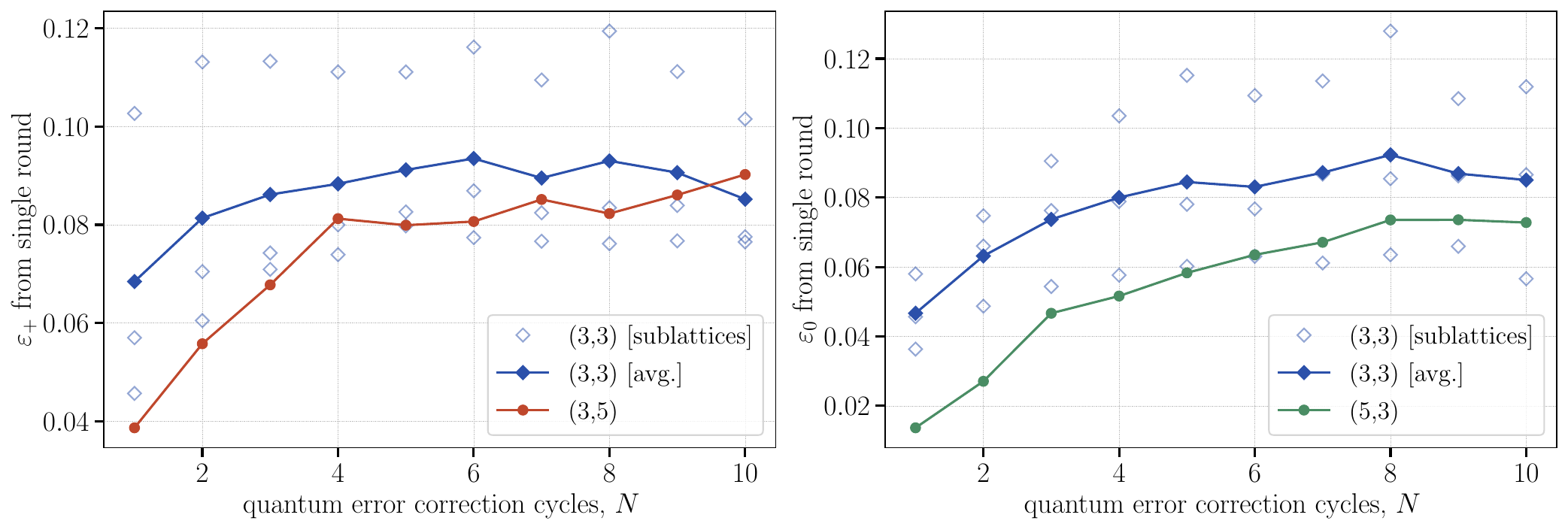}} \caption{Logical error rate from a single point, as given in \cref{eq:lepr-single-p} for scaling to the $(3,5)$ code with $\overline{\ket{+}}$ (left) and the $(5,3)$ code with $\overline{\ket{0}}$ (right). Open markers show the corresponding $(3,3)$ sublattices results. The $(5,3)$ code exhibits a more consistent advantage than the $(3,5)$ code.} 
\label{fig-lepr-single-p}
\end{figure*}

\section{Comparisons to Willow processor experiments}
\label{appsec:Willow}

\subsection{heavy-hex versus 2D square lattice}

The mismatch between the surface code and the heavy-hex lattice results in numerous idle gaps, significantly increasing the duration of each QEC cycle compared to square-type lattices such as Willow's, where operations can proceed in parallel. Consequently, our total of 
10 QEC cycles lasting $\approx80~\mu$s, corresponds to roughly 72 cycles on the Willow processor (with each Willow cycle $\approx1.1~\mu$s). In comparison, Ref.~\cite{GoogleAI2023QEC} performed 25 cycles (with a best reported $\Lambda_{\epsilon}\approx1.01$), while Ref.~\cite{GoogleAI2024Nature} performed 250 cycles (with a best reported $\Lambda_{\epsilon}\approx2.2$). We perform four times as many stabilizer measurements (20 rounds) as in Ref.~\cite{HetenyiPRXQ2024} on \texttt{ibm\_torino}.

\subsection{Entanglement fidelity analysis of Willow processor data}

To establish their claim of subthreshold code scaling, Ref.~\cite{GoogleAI2024Nature} used the fitting procedures discussed above and an arithmetic average of their $X$- and $Z$-basis experiments. We already discussed the various problematic aspects of this approach when applied to IBM QPUs in our experiments. In this section, we revisit their claim by applying our entanglement-fidelity analysis to their available open-source data.

Ref.~\cite{GoogleAI2024Nature} provides datasets for multiple decoders; the top-performing neural-network decoder was not open source at the time of writing, so we use the \textit{Libra} decoder data instead~\cite{GoogleAIData}. \cref{fig-willow-res}(a) shows the entanglement infidelity for up to $250$ QEC cycles on Willow, combining the logical error probabilities in the $X$- and $Z$-basis, using the error-channel analysis we introduced above. \cref{fig-willow-res}(b) shows per-cycle ratios for scaling $d:3\to5$ and $d:5\to7$. A ratio $>1$ indicates a per-cycle QEC advantage for the scaled code. The $5\to7$ scaling maintains an advantage for $N\approx 170$ cycles, whereas $3\to5$ maintains it for $N\approx 130$. However, note that Ref.~\cite{GoogleAI2024Nature} only reports one dataset in terms of DD inclusion, so in computing the EF metric using \cref{eq:EF-metric} we did not optimize for DD for the results shown here. Despite this, due to the absence of idle gaps in the surface code implementation of Ref.~\cite{GoogleAI2024Nature}, we do not believe that the observed scaling advantage is spurious. In fact, since we use Libra (because the neural‑network decoder was not public), our Willow-EF results are quite likely conservative.

\begin{table*}[t]
\centering
\squeezetable
\renewcommand{\arraystretch}{1.2}
\setlength{\tabcolsep}{6pt}
\begin{tabular}{lccc}
\hline\hline
 \multicolumn{1}{c}{$\Lambda_{\varepsilon}^{(d)}$} & \multicolumn{1}{c}{$3\to5$} & \multicolumn{1}{c}{$5\to7$} \\
\hline
\multicolumn{1}{c}{Ref. \cite{GoogleAI2024Nature}}  & $7.12\pm0.06\times10^{-3}/3.49\pm0.04\times10^{-3}=2.04 \pm 0.03$   & $3.49\pm0.04\times10^{-3}/1.71\pm0.03\times10^{-3}=2.04 \pm 0.03$ \\
\cref{eq:fit-single-param} & $7.24\pm0.03\times10^{-3}/3.42\pm0.03\times10^{-3}=2.11 \pm 0.02$   & $3.42\pm0.03\times10^{-3}/1.89\pm0.05\times10^{-3}=1.81 \pm 0.05$ \\
\hline\hline
\end{tabular}
  \caption{The average logical error probabilities per QEC cycle, $\varepsilon_{d}$, and the corresponding error suppression factors $\Lambda_{\varepsilon}^{(d)} \equiv \varepsilon_{d}/\varepsilon_{d+2}$ for the Willow data of Ref.~\cite{GoogleAI2024Nature}. Top row: the data from Ref.~\cite{GoogleAI2024Nature}, using the log-fit procedure. Bottom row: the error rates and suppression factors calculated in this work by fitting the data using \cref{eq:fit-single-param}.}
  \label{tab-Willow-eps}
\end{table*}

The $\varepsilon_d$ values reported for the \textit{Libra} decoder in Ref.~\cite{GoogleAI2024Nature} are based on the one-parameter model given in \cref{eq:fit-single-param} and are found through a log-fit procedure. That is, the log of the measured logical error probabilities is fitted to a linear model. The reported per-cycle logical error rates, along with the corresponding error suppression factors, are summarized in the top row of \cref{tab-Willow-eps}. This leads to equal suppression factors of $2.04$ for both code scalings. We find that the single- and two-parameter models have similarly lowest AIC values for the Willow data. As such, we directly fit the single-parameter model in its exponential form to the data. The per-cycle logical error rates and the suppression factors computed using this method are reported in the bottom row of \cref{tab-Willow-eps}. Unlike the log-fit results, we find that the suppression factors are different for the two code scalings, once again highlighting the fit-dependent ambiguity of reporting suppression factors.

The coarse-grained nature of the output of the fitting procedure, summarized by a single number per code pair, leaves out important information visible in \cref{fig-willow-res}, in particular the dependence on the QEC cycle number and the different resulting behaviors of the $3\to 5$ versus the $5\to 7$ scaling. 

\Cref{fig-epsilon-alpha-Willow} reports the results of our stationarity test of the Willow data, using the analysis of \cref{appsec:recover}, and in analogy to \cref{fig-a-alpha,fig-b-alpha,fig-epsilon-alpha}. Clearly, stationarity does not hold. We already discussed, in the context of our own data, the adverse implications of this result for the validity of the suppression factor metric.

\subsection{Logical error per cycle from one point}
\label{sec:rate-from-single-point}

Ref.~\cite{GoogleAI2024Nature} explores the concept of calculating the logical error rate from a single value of the logical error probability $\overline{p}_N$. That is, assuming the single-parameter error model [\cref{eq:fit-single-param}], we may invert it to find 
\begin{equation} 
\label{eq:lepr-single-p}
  \varepsilon_{N,\alpha} = \frac{1}{2}- \frac{1}{2}\bigl(1 -2\overline{p}_{N,\alpha}\bigr)^N,
\end{equation}
where now, the logical error rate becomes cycle dependent, which means that we can define a cycle-dependent suppression factor: $\Lambda^{\bd}_{\varepsilon_{N,\alpha}}=\varepsilon_{N,\alpha}^{(3,3)}/\varepsilon_{N,\alpha}^{\bd}$. Note that this is fundamentally different from the suppression factor we introduced in the main text, $\Lambda^{\bd}_{{p},\alpha}(N) = \overline{p}_{N,\alpha}^{(3,3)}/\overline{p}_{N,\alpha}^{\bd}$. The single-parameter model in \cref{eq:lepr-single-p} implicitly imposes several assumptions about the experimental setup (see \cref{sec:phen_model_supp}). In contrast, $\Lambda^{\bd}_{p,\alpha}(N)$ depends solely on empirical data from each experiment. 

Despite this, we apply this methodology to our data for comparison. The results are shown \cref{fig-lepr-single-p}. We find that the variation in the logical error rate is high compared to the relatively uniform behavior seen in Ref.~\cite[Fig.~S24]{GoogleAI2024Nature}. For the $(5,3)$ code, the scaling advantage appears consistent across all QEC cycles, whereas for the $(3,5)$ code it tapers off as $N$ increases. 

Ref.~\cite{Bluvstein2025arxiv}, whose experiment includes four QEC cycles, reports $\Lambda^{\bd}_{\varepsilon_{4,\alpha}}=2.14(13)$, extracted from the fourth cycle. They further simulate additional rounds and find that the ratio decreases with more QEC cycles. Given the ambiguity such an analysis introduces, as we have argued in detail above, using $\Lambda^{\bd}_{p,\alpha}(N)$ and the EF metric appears preferable.

\begin{table}[]
\begin{tabular}{l|c|c|c|c|c|c|c|c}
\hline\hline
 \multirow{2}{*}{Scaling factor $\chi$} & \multicolumn{4}{c|}{DD} & \multicolumn{4}{c}{no DD} \\ 
 & $\overline{\ket{+}}$ & $\overline{\ket{-}}$ & $\overline{\ket{0}}$ & $\overline{\ket{1}}$ & $\overline{\ket{+}}$ & $\overline{\ket{-}}$ & $\overline{\ket{0}}$ & $\overline{\ket{1}}$\\
 \hline\hline
$(5,3)$ code 
& 2.45 & 2.5 & 1.85 & 1.9 & 3.6 & 3.7 & 1.85 & 1.9 \\
$(3,5)$ code
& 2.3 & 2.3 & 2.25 & 2.3 & 4.0 & 4.0 & 2.25 & 2.35 \\
\hline\hline
\end{tabular}
\caption{Scaling factor  $\chi$ that needs to be applied to the error rates of the noise model, initially obtained from randomized benchmarking, in order for our simulations to match the experimental performance of the surface code. Shown are the required $\chi$ factors with and without DD, for the $X$ and $Z$ bases and the two scaled codes.}
\label{tab-noise-scaling}
\end{table}

\section{Simulating a distance-5 code}
\label{sec:d_5_simulations}

As we reported, we observed subthreshold behavior when scaling the surface code for a single error type. We evaluated the logical error probabilities of a \dxdz{3}{3} code against a \dxdz{3}{5} (\dxdz{5}{3}) code and found enhanced performance against phase flip (bit flip) errors. On the other hand, we found an expected decrease of the protection against bit flip (phase flip) errors, as the \dxdz{3}{5} (\dxdz{5}{3}) code is not scaled in that direction, and hence degrades from the increased complexity of the circuits. To achieve universal subthreshold behavior, it is necessary for logical error rates to be concurrently reduced for both error types, thereby increasing the protection of any logical state. This necessitates a \dxdz{5}{5} code, which is infeasible with the qubit-count and connectivity of current IBM QPUs.

In light of this, we instead turn to numerical simulations of a \dxdz{5}{5} code, comparing its performance to that of the \dxdz{3}{3} code. This task requires an accurate noise model that matches the experimental results. However, individual per-qubit error probabilities are currently unavailable at the scale of the \dxdz{5}{5} code. Instead, we can use a noise model in which all physical qubits have identical performance, set to the median performance of all qubits in the device, as per \cref{eq:mediannoise}. However, we find that the noise model defined in this way does not agree with our experimental data for the smaller codes. To mitigate this, we multiply all error probabilities appearing in the Pauli channels of the noise model by a scaling factor $\chi$.
In doing so, we obtain a noise-scaled model with $p_i\rightarrow\chi p_i$, $p_{ij}\rightarrow\chi p_{ij}$ for all noise channels. In order to determine $\chi$, we perform numerical simulations to find the optimal value that maximizes the agreement with the experimental data. 
\begin{figure*}
\includegraphics[width=0.9\linewidth]{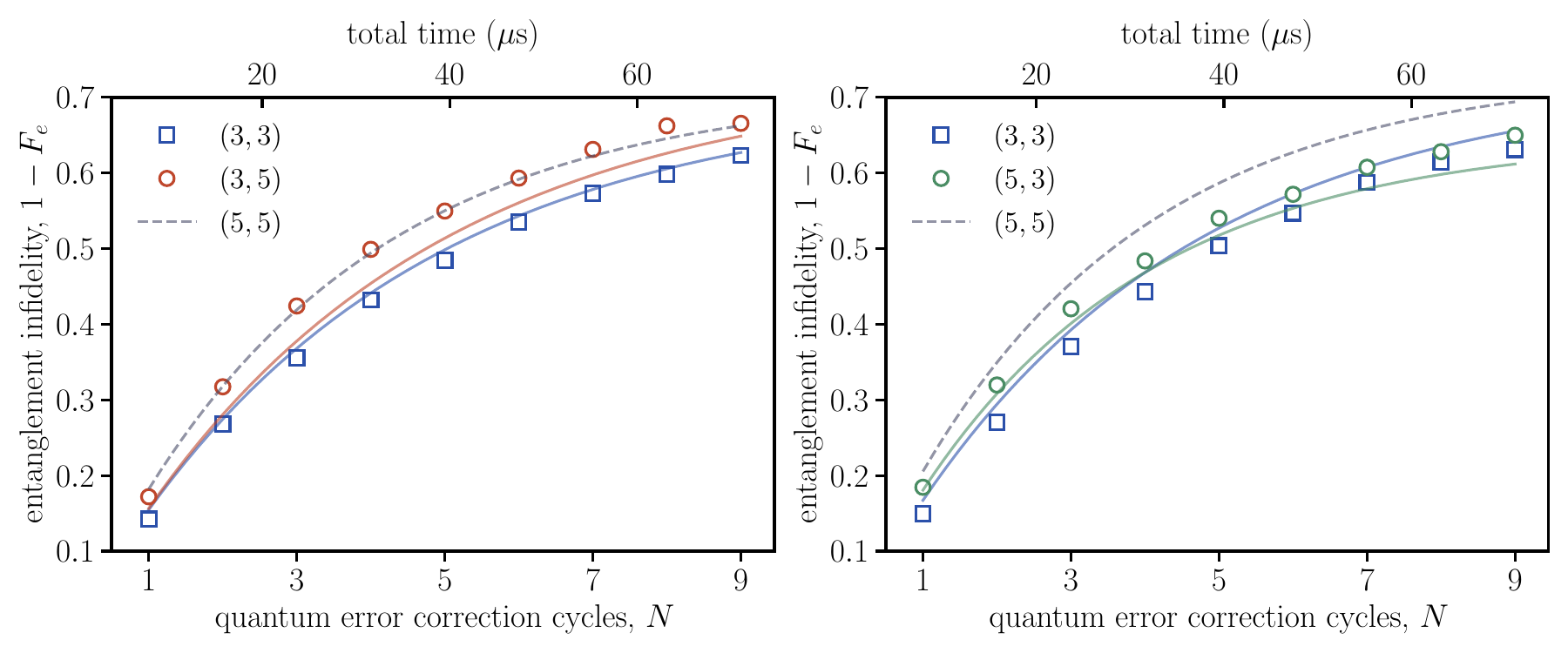}
\caption{Entanglement infidelity $1-F_e$ computed using \cref{eq:ent-fidelity} as a function of the number of QEC cycles $N$. Data points correspond to the $(3,5)$ (left) and the $(5,3)$ (right) experiments we ran on \aachen\ with DD, along with the sublattice average of their respective $(3,3)$ codes (identical to the data in \cref{fig:ent-fidelity}), and solid lines to the corresponding simulations using the $\chi$ factors with DD indicated in \cref{tab-noise-scaling}. The predicted performance of the isotropic $d=5$ code is indicated by the dashed lines.}
\label{fig-distance-5-sim}
\end{figure*}

\begin{figure}
\includegraphics[width=\linewidth]{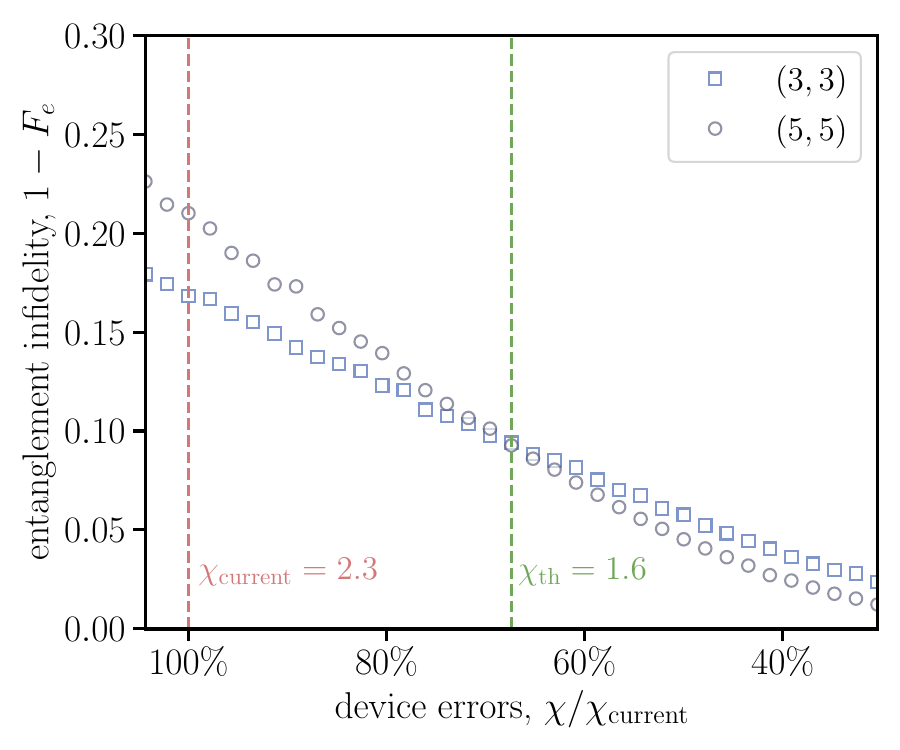}
\caption{Simulated entanglement infidelity of a single QEC cycle ($N=1$) of $d=3$ and $d=5$ surface codes as a function of the system improvements $\chi/\chi_\text{current}$. Below $\chi_\text{th}=1.6$, using the \dxdz{5}{5} code increases the fidelity over the \dxdz{3}{3} code. }
\label{fig-distance-5-threshold}
\end{figure}

In \cref{tab-noise-scaling}, we present the noise scaling factors required to match the experimental data, including those for the $X$ and $Z$ bases, as well as for the two spatial directions along which the code is grown. We report the scaling factors both with and without DD, allowing us to assess its effect on the physical error rates. The primary improvement due to DD is observed for phase-flip errors, which are reduced by $30\%$-$45\%$ (the $\ket{+}$ and $\ket{-}$ columns), with the effectiveness depending on the subset of physical qubits that are used in the circuits (not shown). In contrast, as expected, DD does not result in an improvement for bit-flip errors (the $\ket{0}$ and $\ket{1}$ columns). With DD, the $\chi$ factor for dephasing errors becomes closer to that of bit-flip errors, indicating that DD indeed reduces $Z$-type coherent and non-Markovian errors. After applying DD, the remaining error sources that contribute to the $\chi$ factor but are not captured by our noise model affect both bases equally.

The scaling factors we find, of $\chi\approx 2.3$ with DD, indicate that the noise model built from calibration data only captures $~50\%$ of the total QPU noise, while the rest of the error budget comes from error channels not included in the model. Some important noise sources that go beyond the multi-parameter Markovian Pauli channel are incoherent cross-talk~\cite{zhou2025arxiv}, which (if Markovian) cannot be canceled by DD, and coherent errors during two-qubit gates, which are not captured by randomized benchmarking~\cite{tripathi2024DB}, but are promoted to incoherent errors after syndrome measurement due to error discretization.

In \cref{fig-distance-5-sim}, we show the entanglement fidelity metric from simulations and experiments for the \dxdz{3}{3}, \dxdz{3}{5} and \dxdz{5}{3} codes, as well as the predicted performance of the \dxdz{5}{5} code. We observe that the agreement between the simulations and the data is incomplete, which indicates that using a single scaling factor is insufficient to capture the deviation from the calibrated error rates. The fact that we have to consider a median qubit performance in order to extrapolate to $d=5$ may also have an impact. However, we note that the agreement between simulations and experiments depends on the experiment run. For older data on \aachen\ (not shown)
we still found $\chi\sim2.3$, but with closer agreement. This suggests that device noise also drifts away from the calibrated values. 

The simulation results for the \dxdz{5}{5} code provide useful insights about its expected performance if run using a real device.
In particular, we find that the surface code is still above threshold for present-day IBM QPUs, but breakeven scaling could be achieved with minor system improvements: we observe that the \dxdz{5}{5} code is below threshold up to $\chi_\text{th}=1.6$. That is, a $30\%$ reduction of the experimental noise rates would suffice to reach breakeven. The simulated performance of a surface code QEC cycle for different $\chi$ factors is displayed in \cref{fig-distance-5-threshold}. We recall that $\chi=1$ corresponds to the unmodified error rates provided by IBM calibration data. Since a $\chi=1$ device would be below threshold, we can conclude that the error sources that prevent existing devices from showing a scaling advantage are precisely those not captured by the current benchmarking procedure.

\putbib[biblo]
\end{bibunit}

\end{document}